\documentclass[11pt]{article}
\usepackage[utf8]{inputenc}
\usepackage{amsmath, amssymb}
\usepackage{amsfonts}
\usepackage{amsthm}
\usepackage{graphicx}
\usepackage{csquotes}
\usepackage{xspace}
\usepackage{fullpage}
\usepackage{algpseudocode}
\usepackage{url}
\usepackage{hhline}
\usepackage{amscd}

\usepackage{tikz,pgfplots}
\usetikzlibrary{shapes.geometric}
\usepackage{wrapfig}
\usepackage{csquotes}
\usepackage{thmtools}
\usepackage{thm-restate}
\usepackage{tikz-cd}

\graphicspath{{./images/}}
\newtheorem{exmp}{Example}[section]
\newtheorem{theorem}{Theorem}[section]
\newtheorem{definition}[theorem]{Definition}
\newtheorem{lemma}[theorem]{Lemma}
\newtheorem{corollary}[theorem]{Corollary}
\newtheorem{proposition}[theorem]{Proposition}
\newtheorem{conjecture}[theorem]{Conjecture}

\renewcommand\appendix{\par
	\setcounter{section}{0}
	\setcounter{subsection}{0}
	\setcounter{figure}{0}
	\setcounter{table}{0}
	\renewcommand\thesection{\Alph{section}}
	\renewcommand\thefigure{\Alph{section}\arabic{figure}}
	\renewcommand\thetable{\Alph{section}\arabic{table}}
}

\newcommand{\NP}{{\ensuremath{\mathsf{NP}}}\xspace}

\newcommand{\nats}{\mathbb{N}}

\newcommand{\Pol}{\mathrm{Pol}}
\newcommand{\PCSP}{\mathrm{PCSP}}
\newcommand{\CSP}{\mathrm{CSP}}

\newcommand{\B}{\mathbf{B}}
\newcommand{\A}{\mathbf{A}}

\newcommand{\LO}{\mathbf{LO}}

\newcommand{\Z}{\mathbf{Z}}
\newcommand{\X}{\mathbf{X}}

\title{Approximating 1-in-3 SAT by linearly ordered hypergraph $3$-colouring is \NP-hard
\thanks{This work is supported the UK EPSRC grant EP/X033201/1.}
}
\author{Andrei Krokhin\\
Durham University\\
\texttt{andrei.krokhin@durham.ac.uk} \and Danny Vagnozzi\\
Durham University\\
\texttt{danny.vagnozzi@durham.ac.uk}}

\date{}

\begin{document}

\maketitle

\begin{abstract}
    Given a satisfiable instance of 1-in-3 SAT, it is \NP-hard to find a satisfying assignment for it, but it may be possible to efficiently find a solution subject to a weaker (not necessarily Boolean) predicate than `1-in-3'.
    There is a folklore conjecture predicting which choices of weaker predicates lead to tractability and for which the task remains \NP-hard. One specific  predicate, corresponding to the problem of linearly ordered $3$-colouring of 3-uniform hypergraphs, has been mentioned in several recent papers as an obstacle to further progress in proving this conjecture. We prove that the problem for this predicate is \NP-hard, as predicted by the conjecture.

    We use the Promise CSP framework, where the complexity analysis is performed via the algebraic approach, by studying the structure of polymorphisms, which are multidimensional invariants of the problem at hand. The analysis of polymorphisms is in general a highly non-trivial task, and topological combinatorics was recently discovered to provide a useful tool for this. There are two distinct ways in which it was used: one is based on variations of the Borsuk-Ulam theorem, and the other aims to classify polymorphisms up to certain reconfigurations (homotopy). Our proof, whilst combinatorial in nature, shows that our problem is the first example where the features behind the two uses of topology appear together. Thus, it is likely to be useful in guiding further development of the topological method aimed at classifying Promise CSPs. An easy consequence of our result is the hardness of another specific Promise CSP, which was recently proved by Filakovsk\'y et al.~by employing a deep topological analysis of polymorphisms. 
     
\end{abstract}

\section{Introduction}

The Constraint Satisfaction Problem (CSP) framework is much studied in theoretical computer science (see e.g. \cite{kz17}).
Since the resolution of the Feder-Vardi CSP dichotomy conjecture \cite{bulatov2017,zhuk2017}, there has been much interest in the study of \emph{qualitative} approximations of \NP-hard CSPs. The notion of \emph{Promise} CSPs (PCSPs) provides the formal framework for studying these problems \cite{agh2017,bbko21,bg2016,ko2022}. In the PCSP paradigm, approximation is understood not with the aim of maximising the number of satisfied constraints (as, e.g., in \cite{has01,mm17}), but via relaxing, in a uniform way, the constraints that must be satisfied. The \emph{approximate graph colouring problem} is a prime example of a PCSP: for fixed constants $3\leq k\leq \ell$, given a graph promised to be $k$-colourable (but a $k$-colouring is not given), the goal is to {\em find} a $\ell$-colouring. While it is generally believed that this problem is \NP-hard for all constants $3\leq k\leq \ell$, there are only a few unconditional results on this matter. The state of the art is that 
the problem is \NP-hard for $k\ge 3$ and $\ell\le 2k-1$ (see \cite{bbko21}) and for 
$k\ge 4$ and $\ell\le \binom{k}{\lfloor k/2\rfloor}-1$ (see \cite{kowz2023}), while the best known efficient algorithm to colour a 3-colourable $n$-vertex graph requires $O(n^{0.19747})$ colours \cite{kty24}.
There are also many results about approximate hypergraph colouring (involving several different notions of colouring), see e.g.~\cite{abp20,agh2017,bbb21,bg16,drs05,fnotw2024,gs20b,nvwz2025,nz2023}.

There are three classical \NP-hard CSPs that play a big role in CSP theory \cite{kz17}: graph $k$-colouring with $k\ge 3$ and (the monotone versions of) Not-All-Equal (NAE) SAT and 1-in-3 SAT. The last two problems can also be seen as colouring problems on 3-uniform hypergraph using 2 colours: respectively, requiring  that no hyperedge is left monochromatic in the former case and that every hyperedge has exactly one vertex coloured as `1' in the latter case. Interestingly, the state-of-the-art for these three problems in the PCSP framework is very different.

Let us first describe in more detail the sort of problems we study in the present paper - formal definitions will be given in Section \ref{sec:defs}.
Given a satisfiable instance $I$ of 1-in-3 SAT, it is \NP-hard to find a satisfying assignment for it. 
We can try to relax the constraints %(i.e. expand $\{(0,0,1),(0,1,0),(1,0,0)\}$)
- for example, by expanding the domain of values for the variables beyond $\{0,1\}$ -
in order to ease the task of finding a satisfying assignment for $I$ subject to the relaxed constraints.
More precisely, we fix a finite set $A$ and a ternary relation $R$ on $A$, such that $\{0,1\}\subseteq A$ and $R$ contains the relation/predicate of 1-in-3 SAT; that is, $\{(0,0,1),(0,1,0),(1,0,0)\}$. Given a satisfiable instance $I$ of 1-in-3 SAT, the goal is to find an assignment of elements from $A$ to the variables of $I$ so that every constraint in $I$ is satisfied in $R$. Put otherwise, we aim to find an `$R$-solution' to $I$, which can be seen as an `$R$-approximation to a 1-in-3 solution'.
The complexity of this task depends on $R$, and the question to find the exact dependence was first asked in \cite{bbko21}. As observed in \cite{bbb21}, one can assume without loss of generality that $R$ is symmetric; that is, invariant under all permutations of coordinates.
By analogy with approximate graph colouring, we call this family of problems {\em approximate 1-in-3 SAT}. Approximate NAE SAT is defined in a similar way, and it has actually been studied under the name of `approximate hypergraph colouring' \cite{drs05} before the introduction of the general PCSP framework. Approximate graph colouring can also be slightly expanded 
by using an arbitrary symmetric binary relation $R$ (containing the $k$-colouring relation) in place of the $\ell$-colouring relation.

Let us return to describing the status of the PCSPs approximating the three classical \NP-hard CSPs.
We call an $R$-approximation trivial if $R$ contains a loop, i.e.~a constant tuple. Obviously, trivial $R$-approximations are easy. For graph colouring, all non-trivial approximations are widely believed to be \NP-hard and this is known to hold assuming various UG-like conjectures \cite{bklm2022,dmr09,gs2020}. Without any such assumptions, this is known to hold only for a small number of cases discussed above. In contrast, all non-trivial approximations  for NAE SAT have been proven to be \NP-hard \cite{drs05,nvwz2025}. As for 1-in-3 SAT, there exist non-trivial approximations that are polynomial-time solvable. In fact, there is a folklore dichotomy conjecture (that we discuss in detail later) predicting which $R$-approximations are 
polynomial-time solvable and which are \NP-hard, and there are partial results confirming it in some special cases \cite{bbb21,bg2016,ckknz25}.

The main method in classifying the complexity of CSPs and PCSPs has been the algebraic approach, based on the notion of {\em polymorphisms}, which are multivariable functions capturing  invariant properties of the relations/predicates used to specify constraints (see surveys \cite{bkw17,ko2022}).
It is known that (certain abstract properties of) polymorphisms determine the complexity of CSPs and PCSPs \cite{bbko21,bop18,bg2016}. For CSPs, the presence of non-trivial (in a specific sense) polymorphisms implies tractability, while the lack thereof implies \NP-hardness \cite{bulatov2017,bjk05,zhuk2017}. In the PCSP realm, the situation is more complex as there are many known \NP-hard PCSPs with non-trivial (but limited in some sense) polymorphisms. As opposed to CSPs, it is not clear at all why general PCSPs should exhibit a $\mathsf{P}$ vs.~\NP-hard dichotomy. However, dichotomy is likely at least in some special cases. 

In general, analysing the polymorphisms of a CSP or a PCSP is a highly non-trivial task. A rule of thumb is that problems with `rich enough' polymorphisms are tractable (i.e.~polynomial-time solvable), whereas problems with `limited enough' polymorphisms are \NP-hard. For PCSPs (as opposed to CSPs), it is not clear yet what exactly `rich enough' and `limited enough' mean or whether there even is a gap between them (which would indicate a non-dichotomy). See \cite{bbko21} or survey \cite{ko2022} for further discussion of the general picture of PCSP research.

Unlike work towards the CSP dichotomy, where the right \NP-hardness condition was identified early in the process \cite{bjk05}, polymorphism-based methods for proving \NP-hardness of PCSPs are still very much in development \cite{bbko21,bk24,bk2022,bwz2021,do24}. Such methods typically involve identifying small special sets of variables in the polymorphisms. 
It was discovered recently that ideas from topological combinatorics \cite{mat03}
can be very useful in identifying these small sets. Such results have so far come in two distinct flavours.
One flavour uses reasoning with colourings of Kneser graphs or similar
(based on variants of the Borsuk-Ulam theorem) in order to
find a small set of variables in polymorphisms such that setting these variables to a particular value restricts the output of the function regardless of how the remaining variables are set (see \cite{abp20,bbko21,nvwz2025}). The other tries to show that each polymorphism is similar up certain changes (described in terms of homotopy transformations) to one with a small set of special variables, (see \cite{afotw25,fnotw2024,kowz2023,mo2025}). Curiously, the behaviours of polymorphisms underlying applications of the two topological flavours have never been observed in the same PCSP - until now. 

%Our proof shows both behaviours appear in our specific PCSP. Even though we deal with them in a combinatorial way, we expect that this example will guide the development of the topological method in PCSP.
%\textcolor{blue}{Find the right place of this paragraph - here or somewhere in the next two subsections?}

%Recent work has introduced topological methods for proving hardness of PCSPs \cite{dko2019,fnotw2024,kowz2023,mo2025} in addition to highlighting the imperative need for topology within the study of PCSPs and their polymorphisms. Our main results point in this direction, for we prove the \emph{hardness} of a specific approximation of 1-in-3 SAT (Theorem \ref{thm:main}) as a consequence of a \emph{structural} theorem (Theorem \ref{thm:structure}) that exposes behaviour of polymorphisms that has not been previously observed. Although mostly combinatorial, our reasoning merges the two main topological techniques used to analyse polymorphisms: Borsuk-Ulam style theorems on the one hand, and classifying polymorphisms up to specific reconfigurations (which we refer to as \emph{recolourings}) on the other. 

\subsection{Related work}
\label{sec:related}

Classifying the complexity of PCSPs is a very active area of research (see, e.g. survey \cite{ko2022}). We will give a brief overview that focus solely on the results that are closely related to approximating 1-in-3 SAT and to the topological method in PCSP.

Approximating 1-in-3 SAT by NAE was the first known tractable non-trivial approximation of a \NP-hard CSP by another \cite{bg2016}. More specifically, one algorithm works as follows: given a satisfiable instance of 1-in-3 SAT, one relaxes each constraint in it (say on variables $x,y,z$) to a linear equation $x+y+z=1$ over the integers. One then solves the obtained linear system over integers (which can be done in polynomial time \cite{kb79}), and rounds the obtained solution by mapping positive integers to 1 and non-positive ones to 0.
It is easy to see that this rounding sends any triple of integers such that $x+y+z=1$ to a triple of Boolean values that cannot be constant, thus providing a NAE solution for the original instance.
This polynomial-time algorithm is a special case of the basic AIP (affine integer programming) algorithm,
whose applicability for PCSPs was characterised in \cite{bbko21}. There is now a folklore conjecture (see Conjecture 5 in \cite{nz24}) predicting that  
every approximation of 1-in-3 SAT either can be solved by the above algorithm with appropriately chosen rounding, or it remains \NP-hard (see Section \ref{sec:pcsp} for a formal statement). This conjecture was confirmed  in \cite{ckknz25} for approximations by so-called rainbow-free relations (in which no tuple has three pairwise different elements).
The conjecture was also confirmed in \cite{bbb21} for relations on the set $\{0,1,2\}$, with a single exception that the authors were not able to classify.

This unclassified case from \cite{bbb21} is related to a variant of the  colouring problem for hypergraphs called  \emph{linearly ordered (LO) colouring}  \cite{bbb21,nz2023} (also known as the {\em unique maximum colouring} \cite{ckp13}). This variant assumes that the set of colours is linearly ordered, e.g.~equal to $\left\{0,1,\hdots,k-1\right\}$. A LO $k$-colouring of a hypergraph is an assignment of the colours to the vertices of a hypergraph so that, for each hyperedge $e$, if colour $c$ is the largest colour that appears in it, then $c$ is assigned to exactly one vertex in $e$.
For example, in an LO 3-colouring, for a hyperedge $(x,y,z)$, the colours $(0,1,2)$ or $(0,2,0)$ would be acceptable, but not $(0,1,1)$.
Observe that the LO 2-colouring problem for 3-uniform hypergraph is precisely 1-in-3-SAT.
Analogously to the approximate graph colouring problem, the \emph{approximate LO hypergraph colouring problem} is defined \cite{bbb21} as follows, for fixed constants $2\le k\le \ell$:
\begin{displayquote}
    Given a 3-uniform hypergraph that is promised to have a LO $k$-colouring, find a LO $\ell$-colouring for it.
\end{displayquote}

It was conjectured in \cite{bbb21} that this problem is \NP-hard for all
constants $2\le k\le \ell$. The case $k=2,\ell=3$ of this problem was exactly the only unclassified approximation of 1-in-3 SAT by a relation
on $\{0,1,2\}$. Up to now, the above conjecture was confirmed only in the following cases. It was observed \cite{fnotw2024} that, for $4\le k\le \ell$, there is a very easy reduction to the approximate LO colouring problems with constants $k,\ell$ from the approximate graph colouring problem with constants $k-1,\ell-1$. Hence, any known results 
about approximate graph colouring translate in this way to results about this conjecture. Apart from this, the only confirmed \NP-hardness case for approximate LO colouring is for $k=3,\ell=4$ \cite{fnotw2024}, with a proof based on topological 
combinatorics. LO hypergraph colouring can naturally be defined not only for 3-uniform hypergraphs, but for $r$-uniform hypergraph with $r\ge 3$. The approximate LO colouring problem is known to be \NP-hard for all $3\le k\le\ell$ and $r\ge 4$ and for $k=2$ and all $r\ge \ell+2$ (see \cite{nvwz2025,nz2023}).

The problem of finding a LO $3$-colouring for a given LO $2$-colourable 3-uniform hypergarphs (which is the same as the problem of approximating 1-in-3-SAT by LO 3-colourings)
was specifically mentioned  in \cite{bbb21,ckknz25,fnotw2024,nz2023,nvwz2025} as an open question and an obstacle for further progress in classifying the approximate 1-in-3 SAT problems.

\subsection{Our contributions}
We denote the problem of finding a LO $3$-colouring for a given LO $2$-colourable $3$-uniform hypergraph by $\PCSP\left(\LO_2,\LO_3\right)$ - the notation will be explained in Section \ref{sec:pcsp}.
In this paper, our main result settles its complexity.
\begin{theorem}\label{thm:main}
    $\PCSP\left(\LO_2,\LO_3\right)$ is \NP-hard.
\end{theorem}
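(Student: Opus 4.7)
The plan is to follow the algebraic approach to PCSP hardness: show that the polymorphism minion of $\PCSP(\LO_2,\LO_3)$ is structurally too restricted to support tractability, and then conclude \NP-hardness via the minion-based framework of \cite{bbko21}.

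First I would reformulate polymorphisms combinatorially. Since every column of an input triple to a polymorphism must lie in $\LO_2$ (exactly one of the three entries equal to $1$), an $n$-ary polymorphism $f\colon\{0,1\}^n\to\{0,1,2\}$ is completely determined by a set function $F\colon 2^{[n]}\to\{0,1,2\}$, $F(S)=f(\mathbb{1}_S)$, and the polymorphism condition translates to: for every ordered partition $(X,Y,Z)$ of $[n]$ into three (possibly empty) parts, the triple $(F(X),F(Y),F(Z))$ lies in $\LO_3$, i.e.\ its maximum is attained exactly once. From this I would derive the basic constraints that drive the analysis: $F^{-1}(2)$ is an intersecting family of subsets (any two disjoint $2$-valued sets complete to a partition in which the maximum $2$ would appear twice); disjoint $X,Y$ with $F(X)=F(Y)=1$ force $F([n]\setminus(X\cup Y))=2$; $F^{-1}(0)$ contains no three pairwise disjoint sets partitioning $[n]$; and $F([n])>F(\emptyset)$. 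These rules tightly couple the three preimages $F^{-1}(0),F^{-1}(1),F^{-1}(2)$.

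The crux of the proof, and the main obstacle, is to promote these constraints into a rigidity statement: every polymorphism is essentially controlled by a bounded-size set of coordinates, in a way amenable to a gap reduction. This is the step where the two topological flavours identified in the introduction would be combined, but carried out in a combinatorial guise. The Borsuk-Ulam/Kneser-colouring flavour would be used to locate a small ``core'' of coordinates whose restriction forces $F$ away from the value $2$, exploiting the intersecting nature of $F^{-1}(2)$ together with the propagation $F(X)=F(Y)=1\Rightarrow F([n]\setminus(X\cup Y))=2$. On the complementary Boolean sub-structure where $F$ takes values only in $\{0,1\}$, the homotopy flavour would reconfigure $F$ up to minor-type transformations to a recognisable low-complexity form — essentially a projection or dictator on few coordinates — using the residual partition constraints which, once the value $2$ is forbidden, mimic those of a much simpler Boolean PCSP. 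I expect the hardest step to be handling both layers at once: the partition constraints of $\LO_3$ link the level-$2$ behaviour and the $\{0,1\}$-valued behaviour of $F$, so the two flavour-analyses cannot be decoupled. This coexistence of both topological phenomena inside a single polymorphism is precisely the conceptual novelty highlighted in the abstract. Once such a rigidity theorem is in hand, a standard Label-Cover-style gadget reduction via the BBKO minion framework yields the theorem.
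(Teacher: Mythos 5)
Your proposal correctly identifies the high-level strategy the paper follows: reformulate $n$-ary polymorphisms of $(\LO_2,\LO_3)$ as $\{0,1,2\}$-valued set functions on $2^{[n]}$, derive the partition constraints (intersecting $2$-sets, propagation among $1$-sets, etc.), establish a rigidity theorem asserting dictator-like structure, and conclude via a minion-based hardness criterion. But the central ``rigidity statement'' is left entirely open. That statement is precisely Theorem~\ref{thm:structure} (every polymorphism with no small $2$-set has a unique dictating variable $t$, and it, or each of its pure saturations, is a recoloured projection with dictating variable $t$), and its proof is the technical heart of the paper: a warm-up for arity $\le 6$ (Lemma~\ref{lem:small_arity}), a minimal-counterexample analysis for the base case where $[n]$ is covered by $4$-element $2$-sets (Proposition~\ref{prop:minimal_counterexamples} together with Lemmas~\ref{lem:saturated_WLOG}--\ref{lem:existence}), an induction reducing the general saturated case to that base case (Lemma~\ref{lem:induction_step}), and a uniqueness argument for $t$ across pure saturations. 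You correctly anticipate that the argument combines a Kneser-type ingredient (Lemma~\ref{lem:kneser}) with a reconfiguration ingredient (saturations), but no concrete lemmas or proofs are supplied; you explicitly flag this as ``the hardest step,'' and indeed it is essentially the whole content of the result.

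The second gap is the claim that, given the rigidity theorem, ``a standard Label-Cover-style gadget reduction via the BBKO minion framework yields the theorem.'' This underestimates what remains. The paper invokes the chain-of-minors criterion (Theorem~\ref{thm:babypcp}) with $k=3$ and chain length $l=4$, and the selection function $I(f)$ must be chosen with care: a small $2$-set if one exists, and otherwise either $\{t\}$ or a $2$-element $1$-set containing $t$, with the choice depending on $f(\{t\})$ and on whether $f$ has a unique pure saturation. Verifying the chain condition for this choice requires Lemmas~\ref{lem:non-unique}, \ref{lem:hitting_set_projections} and \ref{lem:saturation_commutes_sometimes}, which track how the dictating variable behaves under minors — it is \emph{not} automatically preserved, since it lives in a pure saturation rather than in $f$ itself. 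Example~\ref{exmp:1} demonstrates that the naive choice $I(f)=\{t\}$ with chain length $3$ genuinely fails, so the ``standard'' reduction you envisage would not go through without this extra machinery.
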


%This problem was specifically mentioned in several earlier papers \cite{bbb21,nvwz2025,nz2023} as an obstacle for further progress in classifying the complexity of approximate 1-in-3-SAT. Combined with the partial classification attempted by Barto et al.~in \cite{bbb21}, Theorem \ref{thm:main} implies a complete dichotomy for approximations of 1-in-3 SAT by $3$-element structures. 

We remark that the confirmed case $k=3,\ell=4$ of the LO colouring conjecture that we mentioned above easily follows from Theorem \ref{thm:main} (see Section \ref{sec:proof_main}), thus providing a more elementary proof of the main result in \cite{fnotw2024}.

The proof of Theorem \ref{thm:main} is a combination of a structural result (Theorem \ref{thm:structure}) followed by a direct application of a well-known sufficient condition for \NP-hardness of a PCSP \cite{bk2022,bwz2021}. Theorem \ref{thm:structure} states, roughly, that the polymorphisms of $\PCSP\left(\LO_2,\LO_3\right)$ exhibit two kinds of behaviour - precisely the two kinds that are characteristic for the two flavours of 
applications of topology in PCSPs that we described above. Even though our proof is combinatorial, it shows that our problem is the first example where the features behind the two uses of topology in PCSP appear together. Thus, it is likely to be very useful in guiding further development of the topological method in Promise CSP. It is important to note that whilst some of the applications of topology lead to alternative proofs of known results, e.g.~\cite{mo2025,nvwz2025}, or to a modest step forward in comparison with existing results, e.g.~\cite{afotw25,fnotw2024}, their methodological contribution is highly meaningful. The classification of PCSPs is still in its infancy and requires a far richer toolbox than for CSPs. The development of new methods or the identification of issues that said methods must be able to deal with are highly sought after by current research directions \cite{ko2022}.

\section{Preliminaries}
\label{sec:defs}

Throughout the paper we use the convention $\left[n\right]=\left\{1,\hdots,n\right\}.$  The complement of a set $X$, usually with respect to $\left[n\right]$, is denoted by $\overline{X}$.

\subsection{Promise CSPs}\label{sec:pcsp}

A \emph{relational structure} is a tuple $\mathbf{A}=\left(A;R_1,\hdots, R_m\right)$ where $A$ is a set referred to as the \emph{domain} of $\mathbf{A}$, and each $R_i$ is a relation on $A$ of arity $\mathrm{ar}_i\geq 1$; that is, $R_i\subseteq A^{\mathrm{ar}_i}$. In this paper we will only consider structures with a single symmetric binary relation (undirected graphs)
or with a single ternary symmetric relation (i.e. invariant under permutations of coordinates), which we refer to as \emph{$3$-uniform hypergraphs} by slightly abusing standard terminology.

Relational structures $\A=\left(A;R_1,\hdots, R_m\right)$ and $\B=\left(B; R'_1,\hdots,R'_{m'}\right)$ are said to be \emph{similar} if $m=m'$ and the relations $R_i$ and $R'_i$ have same arity for each $i\in\left[m\right].$ For similar relational structures $\A,\B$ as above, a mapping $f:A\rightarrow B$ is said to be a \emph{homomorphism} if  for all $i\in\left[m\right]$ we have $x\in R_i\implies f\left(x\right)\in R'_i$, where $f$ is applied component-wise to $x$. We write $f:\A\rightarrow \B$ to indicate that $f$ is a homomorphism, and use the notation $\A\rightarrow \B$ to indicate that a homomorphism from $\A$ to $\B$ exists.

\begin{definition}
    A \emph{PCSP template} is a pair $\left(\A,\B\right)$ where $\A,\B$ are similar relational structures such that $\A\rightarrow \B$. 
\end{definition}
 Every PCSP template $\left(\A,\B\right)$ defines a decision problem and a search problem as follows. The \emph{decision version} of $\PCSP\left(\A,\B\right)$ asks to decide, for a given input structure $\mathbf{X}$ similar to $\A$ and $\B$, whether $\mathbf{X}\rightarrow \A$ or whether $\mathbf{X}\not\rightarrow \B$, the promise being that exactly one of these cases occurs. The \emph{search version} of $\PCSP\left(\A,\B\right)$ asks, given an input structure $\mathbf{X}$ with a promise that $\mathbf{X}\rightarrow \A$, to find a homomorphism $f:\mathbf{X}\rightarrow \B$.  The decision version trivially reduces to the search version, but it is not known whether the two versions are always polynomial-time equivalent (see \cite{lar25} for recent results on this question). Observe that if $\A=\B$, then $\PCSP\left(\A,\B\right)$ is the same as the standard $\CSP\left(\A\right)$; that is, the problem of deciding whether $\mathbf{X}\rightarrow \A$. The PCSP framework allows a systematic study of problems that cannot be expressed as CSPs, and can be thought of as `qualitative' approximations of CSPs. Below are some  examples of PCSPs relevant for this paper - more can be found in \cite{abp20,agh2017,bbb21,bg16,drs05,fnotw2024,gs20b,ko2022,nvwz2025,nz2023}.

 \begin{exmp}[Approximate graph colouring] The problem asks to find, for fixed $\ell\geq k\geq 2$, an $\ell$-colouring of an input $k$-colourable graph. This problem is $\PCSP\left(\mathbf{K}_k,\mathbf{K}_\ell\right)$, where $\mathbf{K}_n$ is the $n$-clique; that is, $\mathbf{K}_n=\left(\left[n\right];\neq\right).$ 
 %The problem is widely conjectured to be \NP-hard for any fixed $c\geq k\geq 3$, though this has been confirmed only in a limited number of cases, as outlined in the previous section.
 \end{exmp}
 \begin{exmp}[Approximate hypergraph colouring] This problem is a natural generalisation of the above, by replacing the inequality $\neq$ with the `not-all-equal' relation:
 $$
 \mathrm{NAE}_k=\left\{0,\hdots,k-1\right\}^3\backslash \left\{\left(a,a,a\right)\mid a\in\left\{0,\hdots,k-1\right\}\right\}.
 $$
 A $k$-colouring of a hypergraph, is an assignment of the colours $\left\{0,\hdots,k-1\right\}$ to the vertices such that no hyperedge is monochromatic. Hence, the problem of finding, for fixed $\ell\geq k\ge 2$, an $\ell$-colouring of a $k$-colourable $3$-uniform hypergraph is $\PCSP\left(\mathbf{H}_k,\mathbf{H}_\ell\right)$ where $\mathbf{H}_n=\left(\left\{0,\hdots,n-1\right\};\mathrm{NAE}_n\right)$.  
 %This problem has been known to be \NP-hard even before the introudction of the PCSP paradigm \cite{drs05}. A simpler proof of this result, using a combination of the algebraic theory of PCSPs and Lovasz's result on the chromatic number of Kneser graphs \cite{lovasz} can be found in \cite{nvwz2025}. 
 Note that the classical \NP-hard problem NAE-SAT can be expressed as $\CSP\left(\mathbf{H}_2\right)$.
     
 \end{exmp}

 \begin{exmp}[Approximate linearly ordered (LO) hypergraph colouring]\label{ex:lo}
 This is stricter version of colouring than the one from the previous example, the requirement being 
that in each hyperedge, the maximal colour assigned to its vertices occurs exactly once. 
Formally we replace $\mathrm{NAE}_k$ with its subrelation
$$
\mathrm{LO}_k=\{(a,b,c)\in \left\{0,\hdots,k-1\right\}^3\mid (a,b,c) \mbox{ has a unique maximum}\}.
$$

For example, $\mathrm{LO}_2=\left\{\left(0,0,1\right),\left(0,1,0\right),\left(1,0,0\right)\right\}$, while $\mathrm{LO}_3$ contains exactly the triples $\left(0,1,2\right)$, $\left(1,1,2\right)$, $\left(0,0,1\right)$, $\left(0,0,2\right)$ and all permutations thereof, but not $\left(1,2,2\right)$ or $\left(0,1,1\right).$  Let $\LO_k=(\left\{0,\hdots,k-1\right\};\mathrm{LO}_k)$.
Then $\PCSP\left(\LO_k,\LO_\ell\right)$ is the Approximate LO hypergraph colouring problem that was discussed in detail in the previous section.
\end{exmp}

\begin{exmp}[Approximate 1-in-3 SAT]
The problem $\CSP\left(\LO_2\right)$ is precisely (monotone) 1-in-3 SAT, so
the class of problems of the form $\PCSP\left(\LO_2,\B\right)$, where the domain of $\B$ contains \{0,1\} and $\B$ has a unique ternary symmetric relation containing $\left\{\left(0,0,1\right),\left(0,1,0\right),\left(1,0,0\right)\right\}$ can naturally be called {\em Approximate 1-in-3 SAT}. The question of classifying the complexity of problems in this class was first asked in \cite{bbko21}.
 \end{exmp}

%\subsection{The approximate 1-in-3 SAT conjecture}\label{sec:conj}
All known tractable cases of $\PCSP\left(\LO_2,\B\right)$ can be solved by essentially the same algorithm as the one for $\PCSP\left(\LO_2,\mathbf{H}_2\right)$ that we described in Section \ref{sec:related}. Consider the (infinite) structure $\Z=(\mathbb{Z}; \{(x,y,z)\mid x+y+z=1\})$. Trivially, we have $\LO_2\rightarrow \Z$ by inclusion. Assume that we additionally have $\Z\rightarrow \B$. Then $\PCSP\left(\LO_2,\B\right)$ can be solved in polynomial time by reducing it to solving a system of linear equations over $\mathbb{Z}$. Indeed, for an input $\X$, if $\X\rightarrow \LO_2$ then $\X\rightarrow\Z$. Finding a homomorphism from $\X$ to $\Z$ amounts to solving a system of linear equations over integers (which can done in polynomial time \cite{kb79}). By composing the found homomorphism from $\X$ to $\Z$ with the assumed homomorphism from $\Z$ to $\B$, we can find the required homomorphism from $\X$ to $\B$.
(There is a potential issue that homomorphisms from $\Z$ to $\B$ may not be computable, but this has not occurred so far). 
It has been conjectured independently by several authors (e.g. Conjecture 5 in \cite{nz24}) that 
$\PCSP\left(\LO_2,\B\right)$ is \NP-hard in all other cases.

%There are however several dichotomy theorems for specific classes of structures $\B$ \cite{bbb21,ckknz25} that expose a common feature between all known structures $\B$ for which $\PCSP\left(\LO_2,\B\right)$ is polynomial-time solvable. Indeed, for all such structures $\B$ it holds that $\LO_2\rightarrow \Z\rightarrow \B$ where $\Z$ is the structure with domain $\mathbb{Z}$ and ternary relation $\left\{\left(x,y,z\right)\mid x+y+z=1\right\}$. This reduces $\PCSP\left(\LO_2,\B\right)$ to solving a system of linear equations over $\mathbb{Z}$, for if $\X\rightarrow \LO_2$ then $\X\rightarrow\Z $ and if $\X\not\rightarrow \B$ then $\X\not\rightarrow \Z$. 
%Solving linear systems over integers can be done in polynomial time \cite{kb79}.
%The following dichotomy conjecture is thus justified.

\begin{conjecture}[The approximate 1-in-3 SAT conjecture]
For each $\B$, either $\Z\rightarrow \B$, in which case $\PCSP\left(\LO_2,\B\right)$ is solvable in polynomial time, or else $\PCSP\left(\LO_2,\B\right)$ is \NP-hard.
\end{conjecture}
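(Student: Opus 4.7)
The plan is to split the conjecture into its two directions. The forward (tractable) direction is already laid out in the text preceding the statement: given $\Z\to\B$, one composes an integer-programming solution of the AIP relaxation with the given homomorphism $\Z\to\B$, so the algorithmic content is already in place. The substance of the conjecture is the converse: whenever $\Z\not\to\B$, the problem $\PCSP(\LO_2,\B)$ is \NP-hard. I would attack this direction via the polymorphism/minion machinery of \cite{bbko21,bg2016}: it suffices to show that, under the hypothesis $\Z\not\to\B$, the polymorphism minion $\Pol(\LO_2,\B)$ admits no minion homomorphism to the projection minion (or, equivalently, satisfies the sufficient hardness condition of \cite{bk2022,bwz2021} already invoked in Theorem~\ref{thm:main}).

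The first step is to repackage the hypothesis. A polymorphism $f\in\Pol(\LO_2,\B)$ of arity $n$ is determined by a symmetric function on ordered $3$-partitions $(X,Y,Z)$ of $[n]$: since the only tuples of $\LO_2$ are the coordinate-wise exclusive-or triples, $f$ is an assignment $(X,Y,Z)\mapsto b\in B$ whose outputs along the three columns form a tuple of the symmetric ternary relation of $\B$. The existence of $\Z\to\B$ corresponds precisely to the availability, at every arity, of polymorphisms whose value depends only on an integer affine statistic such as $|X|-|Y|-|Z|$; the failure $\Z\not\to\B$ is therefore an explicit combinatorial obstruction ruling out such ``affine'' polymorphisms.

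The main step would be a uniform structure theorem for $\Pol(\LO_2,\B)$ whenever $\Z\not\to\B$, generalising Theorem~\ref{thm:structure} in this paper and the rainbow-free analysis of \cite{ckknz25}. Concretely, I would try to show that every such $\B$ falls (after minion-preserving reductions such as homomorphic equivalence and pp-interpretations) into one of three regimes: (i) $\B$ is rainbow-free, in which case \cite{ckknz25} gives \NP-hardness directly; (ii) $\B$ embeds enough of $\LO_3$ to supply a minion homomorphism $\Pol(\LO_2,\B)\to\Pol(\LO_2,\LO_3)$, and hardness is inherited from Theorem~\ref{thm:main}; or (iii) some further core obstruction on three-element quotients of $\B$ applies, handled by an extension of the \cite{bbb21} three-element analysis combined with the two flavours of topological reasoning (Borsuk-Ulam style and homotopy style) that Theorem~\ref{thm:structure} shows coexist for $(\LO_2,\LO_3)$. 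The proof would then conclude by verifying, in each regime, the minion-level inconsistency that forces \NP-hardness.

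The chief obstacle, and the reason the conjecture remains open, is the uniform structure theorem in step three. Even on three-element domains the $\LO_3$ case resisted classification until Theorem~\ref{thm:main}, and for larger domains the combinatorial variety of structures $\B$ with $\Z\not\to\B$ is not yet charted; in particular, it is unclear whether the two topological phenomena identified here suffice, or whether new obstructions appear. A possible way to contain this difficulty is to prove a reduction in the opposite direction: show that every hard $\PCSP(\LO_2,\B)$ receives a minion homomorphism from $\Pol(\LO_2,\LO_3)$, reducing the whole conjecture to Theorem~\ref{thm:main} plus a classification of the non-\NP-hard templates compatible with $\Z$. Establishing such a reduction in full generality, however, appears to require precisely the fine-grained understanding of polymorphism behaviour that the present paper only begins to develop, and it is here that I expect the bulk of the work to lie.
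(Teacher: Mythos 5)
This statement is a \emph{conjecture} in the paper, not a theorem: the authors do not prove it, and the paper's contribution is to settle only the single instance $\B=\LO_3$ (Theorem \ref{thm:main}), via the structure theorem for $\Pol\left(\LO_2,\LO_3\right)$ and Theorem \ref{thm:babypcp}. Your proposal, by your own admission, does not prove it either. The tractable direction you describe is indeed essentially the paper's discussion preceding the conjecture (solve the relaxation $x+y+z=1$ over $\mathbb{Z}$ and compose with a homomorphism $\Z\rightarrow\B$; note the paper's caveat that such a homomorphism must be available to the algorithm). But the entire content of the conjecture is the hardness direction for arbitrary $\B$ with $\Z\not\rightarrow\B$, and your step three --- a uniform structure theorem for $\Pol\left(\LO_2,\B\right)$, or a trichotomy into rainbow-free, ``$\LO_3$-like'', and residual regimes --- is not established here or anywhere else; it is precisely the open problem. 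So the proposal has a genuine gap: it is a research plan whose central lemma is missing, and it cannot be compared to a proof in the paper because no such proof exists.

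Two smaller points. First, your case (ii) states the minion condition in the correct direction (a minion homomorphism $\Pol\left(\LO_2,\B\right)\rightarrow\Pol\left(\LO_2,\LO_3\right)$ yields a reduction \emph{from} $\PCSP\left(\LO_2,\LO_3\right)$ \emph{to} $\PCSP\left(\LO_2,\B\right)$, hence hardness of the latter), but your closing suggestion reverses it: a minion homomorphism \emph{from} $\Pol\left(\LO_2,\LO_3\right)$ \emph{to} $\Pol\left(\LO_2,\B\right)$ would only reduce $\PCSP\left(\LO_2,\B\right)$ to the hard problem, which proves nothing about its hardness. Second, your ``repackaging'' is imprecise: an $n$-ary polymorphism is a map on subsets of $\left[n\right]$ constrained on all $3$-partitions, not a function assigning a single value to each partition; and the characterisation of $\Z\rightarrow\B$ via ``affine statistic'' polymorphisms is a known fact about the AIP algorithm (from \cite{bbko21}) that you would need to cite or prove, not an observation that follows from the hypothesis as stated.
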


As explained above, this conjecture has been confirmed in several special cases \cite{bbb21,ckknz25}. It is easy to verify that $\mathbf{Z}\not\rightarrow \LO_k$ for any $k\ge 2$, so that, indeed, $\PCSP\left(\LO_2,\LO_k\right)$ stands on the hardness side of the approximate 1-in-3 SAT conjecture. The simple reasoning, which can be found, e.g., in (the proof of) Proposition 8 in \cite{ckknz25} and which we now sketch, implies the following necessary condition for $\Z\rightarrow \B$. Let $D(\B)$ be the digraph whose vertices are the elements of $B$ and edges are all pairs $(x,y)$ such that $(x,x,y)$ is in the relation of $\B$. Assume that there is a homomorphism $\Z\rightarrow \B$. Consider all triples of the form $\left(a,a,1-2a\right)$ from $\Z$, and their images in $\B$ under the assumed homomorphism. These images show that the digraph $D(\B)$ must contain a homomorphic image of an infinite directed path, which always contains a directed cycle or a loop, since $B$ is finite. It is easy to see that $D(\LO_k)$ is a transitive tournament, so it contains neither a directed cycle nor a loop, and it follows that $\mathbf{Z}\not\rightarrow \LO_k$. We remark that the structures $\LO_k$ are the richest structures $\B$ with the property that $D(\B)$ contains neither a directed cycle nor a loop - it is not hard to see that any ($k$-element) structure with this property is isomorphic to a substructure of $\LO_k$.

%(note that $z$ is in the ternary relation of $\Z$ for any $a\in\mathbb{Z}$). Then there is no $a\in\mathbb{Z}$ such that $f\left(a\right)=2$; for otherwise, the triple $z$ would be mapped to $\left(2,2,f\left(1-2a\right)\right)\notin LO_3$. But this means that there is no $a\in \mathbb{Z}$ for which $f\left(a\right)=1$; for otherwise, $z$ would be mapped to $\left(1,1,f\left(1-2a\right)\right)$ which implies that $f\left(1-2a\right)=2$ by the definition of $LO_3$. This leaves to consider the case where $f$ is a constant map, but since there are no constant triples in the $LO_3$ relation, this can never occur. Thus, $\Z\not\rightarrow \LO_3$ as expected.
%There is an open question in \cite{nz24} whether the condition $\Z\rightarrow\B$ is equivalent to 
%a similar condition involving an infinite series of finite structures in place of $\Z$.

%This conjecture has been confirmed for the cases where $\B$ is rainbow free \cite{ckknz25} and, by combining the results from \cite{bbb21} and Theorem \ref{thm:main}, for all $3$-element structures $\B.$

\subsection{Polymorphisms}

Polymorphisms are ``higher order'' homomorphisms that determine the complexity of CSPs and PCSPs. The body of work concerned with the use of polymorphisms for deriving reductions between PCSPs is generally referred to as the algebraic theory of PCSPs, the core of which can be found in \cite{bbko21,bg2016}. In this section we review some basic notions. %Recall that we restrict ourselves to relational structures with a single relation.

%Define the $n^{th}$ \emph{power} of a relational structure $\mathbf{A}=\left(A;R_1,\hdots,R_m\right)$ to be the structure $\mathbf{A}^n=\left(A^n;R'_1,\hdots,R'_m\right)$ such that $R_i$ and $R_i'$ have the same arity for all $i\in\left[m\right]$ (that is, $\A$ and $\A^n$ are similar) and the relations $R'_i$ are defined as follows: for every $\mathrm{ar}_i\times n$ matrix $M$ whose columns are in $R_i$, the rows of $M$ (viewed as elements of $A^n$) form an element of $R_i'$.

\begin{definition}
Assume that $(\A,\B)$ is a PCSP template. 
A $n$-ary \emph{polymorphism} of a template $\left(\A,\B\right)$ is a function $f:A^n\rightarrow B$ such that,  for every $i$ and  for every $\mathrm{ar}_i\times n$ matrix $M$ with entries from $A$, if all columns of $M$ belong to the relation $R_i$ in $\A$, then the column obtained by applying $f$ to the rows of $M$ belongs to the corresponding relation $R'_i$ in $\B$. If $\A=\B$, then we say that  $f$ is a polymorphism of $\A$.
\end{definition}

If $(\A,\B)$ is any PCSP template and $\varphi: \A\rightarrow \B$, then it is easy to see that, for all $n\ge 1$ and $i\in\left[n\right]$,
all functions of the form $f\left(x_1,\hdots,x_n\right)=\varphi(x_i)$  are polymorphisms of
$(\A,\B)$. In the special case when $A\subseteq B$ and $\varphi$ is the inclusion map, such polymorphisms are called {\em projections} (also known as dictators). 

\begin{exmp}[Polymorphisms of $\LO_2$]\label{ex:pollo2}
A function $f:\{0,1\}^n\rightarrow \{0,1\}$ is a polymorphism of $\LO_2$ if, for every $3\times n$ matrix $M$ with 0/1 entries such that every column of $M$ contains exactly one 1, 
the column obtained by applying $f$ to the rows of $M$ also contains exactly one 1.
It is well-known and not hard to verify directly, that projections are the only polymorphisms of $\LO_2$.
%The elements in the ternary relation are the rows of $3\times n$ $01$-matrices for which every column has exactly one entry equal to $1$. Hence, the $n$-ary polymorphisms of $n$ are in bijection with mappings $f:2^{\left[n\right]}\rightarrow \left\{0,1\right\}$ such that for every partition $\left(X,Y,Z\right)$ of $\left[n\right]$, exactly one of $f\left(X\right),f\left(Y\right),f\left(Z\right)$ is equal to $1$. An example of a $n$-ary polymorphism of $\LO_2$ is the projection mapping $X\subseteq\left[n\right]$ to $i$ if, and only if, $i\in X$ for some fixed $i\in\left[n\right].$
\end{exmp}
\begin{exmp}[Polymorphisms of $\left(\LO_2,\mathbf{H}_2\right)$] 
A function $f:\{0,1\}^n\rightarrow \{0,1\}$ is a polymorphism of $\left(\LO_2,\mathbf{H}_2\right)$ if, for every $3\times n$ matrix $M$ with 0/1 entries, if every column of $M$ contains exactly one 1, then 
the column obtained by applying $f$ to the rows of $M$ contains both a 0 and a 1. This template has many polymorphisms different from projections. 

One example is the $(3n+1)$-ary Boolean function $f'$  which outputs 1 if and only if at least $n+1$ of its arguments are equal to 1. To see that this is a polymorphism, consider a $3\times (3n+1)$ matrix $M$ such that each column of $M$ contains exactly one 1. Since the total number of 1s in $M$ is $3n+1$, it is impossible that each row of $M$ contains at least $n+1$ 1s or that each row of $M$ contains $n$ or fewer 1s.
Therefore $f'$ applied to the rows of $M$ produces a column that contains both a 0 and a 1.
%Using a similar argument as for $\LO_2$, the $n$-ary polymorphisms of the template $\left(\LO_2,\mathbf{H}_2\right)$ can be identified with mappings $f:2^{\left[n\right]}\rightarrow \left\{0,1\right\}$. In fact, they are in bijection with mappings $f:2^{\left[n\right]}\rightarrow \left\{0,1\right\}$ such that $\left\{f\left(X\right),f\left(Y\right),f\left(Z\right)\right\}=\left\{0,1\right\}.$ An example of a $n$-ary polymorphism of $\left(\LO_2,\mathbf{H}_2\right)$ for $n$ not divisible by $3$ maps each $X\subseteq \left[n\right]$ to $1$ if, and only if, $\left|X\right|>n/3$.  
\end{exmp}
The polymorphisms of a structure $\A$ are closed under composition; hence the success of applications of universal algebra theory in the classification of CSPs. This is not the case for PCSPs: indeed, for a general template, composition of polymorphisms is not well defined. Instead, we work with a `composition free' structure where the functions are closed under permuting the variables, identifying them or adding dummy ones. Formally, we refer to these operations as taking a \emph{minor}.

\begin{definition}
    Let $f:A^n\rightarrow B$ be a function. For a map  $\pi:\left[n\right]\rightarrow \left[m\right]$, we say that a function $g:A^m\rightarrow B$ is a \emph{minor} (or a $\pi$-minor) of $f$, denoted $g=f^\pi$, if
    $$
    g\left(a_1,\hdots,a_m\right)=f\left(a_{\pi\left(1\right)},\hdots,a_{\pi\left(n\right)}\right)
    $$
    for all $a_1,\hdots,a_m\in A$.

    A set of finitary functions from $A$ to $B$ that is closed under taking minors is called a \em{(function) minion}.
\end{definition}

It is easy to see that, for any PCSP template $(\A,\B)$, the set of all polymorphisms of
$(\A,\B)$ is a minion. It is denoted by $\Pol\left(\A,\B\right)$.
%The minion of polymorphisms of a template $\left(\A,\B\right)$ is denoted by $\Pol\left(\A,\B\right)$. \textcolor{red}{It follows from the definition of a minor that if $f$ is a $n$-ary polymorphism of a template and $\pi:\left[n\right]\rightarrow \left[m\right]$, then any $\pi$-minor of $f$ depends only on the variables $x_i$ such that $i$ is in the image of $\pi$. That is, any variable of $f^\pi$ that is not in the image of $\pi$ is a dummy variable.}

There are several polymorphism-based sufficient conditions for $\NP$-hardness of PCSPs (see, e.g., \cite{bk24,bbko21,bk2022,bwz2021,ko2022}). We will use the following one in the proof of Theorem \ref{thm:main}. A \emph{chain of minors} in $\Pol\left(\A,\B\right)$ is a sequence $\left(f_1,\pi_{1,2},f_2,\hdots,f_{l-1},\pi_{l-1,l},f_l\right)$ where $f_i\in \Pol\left(\A,\B\right)$ for each $i\in\left[l\right]$ and $f_i^{\pi_{i,i+1}}=f_{i+1}$.

\begin{theorem}[\cite{bk2022,bwz2021}]\label{thm:babypcp}
    Fix $k,l\in\nats$ and let $\left(\A,\B\right)$ be a PCSP template. Suppose that to each polymorphism $f\in\Pol\left(\A,\B\right)$ we assign a subset $I\left(f\right)$ of at most $k$ of its variables. Suppose further that, for any chain of minors $\left(f_1,\pi_{1,2},f_2,\hdots,f_{l-1},\pi_{l-1,l},f_l\right)$ in $\Pol\left(\A,\B\right)$, there exist $1\leq i<j\leq l$ such that $\pi^{-1}_{i,j}\left(I\left(f_j\right)\right)\cap I\left(f_i\right)\neq \emptyset$ where $\pi_{i,j}=\pi_{j-1,j}\circ \hdots \circ \pi_{i,i+1}$. Then $\PCSP\left(\A,\B\right)$ is \NP-hard.
\end{theorem}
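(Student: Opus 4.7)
The plan is to prove the theorem by a ``baby PCP''-style reduction from a multi-layered variant of Label Cover, following the approach developed in \cite{bk2022,bwz2021}. In a multi-layered Label Cover instance the vertex set is partitioned into $l$ layers $V_1,\ldots,V_l$; each vertex $v$ carries a label set $[N_v]$; and for each pair of layers $i<j$ and each pair of vertices $u\in V_i$, $v\in V_j$ there is an edge $e$ carrying a projection $\pi_e:[N_u]\to[N_v]$, with the compositions along any path through the layers realised by a single direct edge. Parallel repetition applied to the PCP theorem makes it \NP-hard to distinguish instances admitting a proper labelling $\ell(v)\in[N_v]$ satisfying every edge from instances in which no ``$k$-list labelling'' $\lambda(v)\subseteq[N_v]$ with $|\lambda(v)|\le k$ satisfies even one edge (in the sense that $\pi_e^{-1}(\lambda(v))\cap\lambda(u)\neq\emptyset$).

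Given such an instance $G$, I would construct an instance $\X$ of $\PCSP(\A,\B)$ as follows. For each vertex $v$, add a gadget that is a fresh copy of the $N_v$-th power $\A^{N_v}$: its universe is $A^{N_v}$ and its relations are obtained by applying the relations of $\A$ coordinate-wise. For each edge $e=(u,v)$ with projection $\pi_e$, identify each variable $b\in A^{N_v}$ from $v$'s gadget with the variable $b\circ\pi_e\in A^{N_u}$ from $u$'s gadget. The two key properties of this gadget construction are: (i) any homomorphism $\X\to\A$ restricts on each gadget to a coordinate projection, and consistency across identifications forces the chosen coordinates to form a proper labelling of $G$; dually, (ii) any homomorphism $\X\to\B$ restricts on each gadget to a polymorphism $f_v\in\Pol(\A,\B)$ of arity $N_v$, and the edge identifications force $f_v=f_u^{\pi_e}$ for each edge. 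Completeness is immediate from (i). For soundness, given a homomorphism $\X\to\B$, extract the polymorphisms $f_v$ and define $\lambda(v):=I(f_v)\subseteq[N_v]$, a $k$-list labelling by hypothesis. Fix any sequence $v_1,\ldots,v_l$ with $v_i\in V_i$; the induced sequence $(f_{v_1},\pi_{e_1},f_{v_2},\ldots,\pi_{e_{l-1}},f_{v_l})$ is a chain of minors in $\Pol(\A,\B)$, so the hypothesis on $I$ yields $i<j$ with $\pi_{i,j}^{-1}(I(f_{v_j}))\cap I(f_{v_i})\neq\emptyset$. This is precisely a satisfied edge in $G$ for $\lambda$, contradicting the assumed hardness of $G$.

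The main obstacle is aligning the parameters of the multi-layered Label Cover construction with the hypothesis: the number of layers must match $l$, the list size must match $k$, and, crucially, the edges between layers $i$ and $j$ must carry projections that coincide with the compositions $\pi_{i,j}$ arising along paths through the gadget, so that the chain-of-minors conclusion translates into a genuine edge of $G$ that is satisfied. The construction of such a multi-layered smooth Label Cover, via Dinur--Regev--Smyth-style PCP amplifications combined with parallel repetition, is the technical heart of the argument and is carried out in \cite{bk2022,bwz2021}; once it is in place, the gadget reduction described above is polynomial-time (as $|A|$ and the arities of the $R_i$ are constants, each gadget has constant size), and the claimed \NP-hardness follows.
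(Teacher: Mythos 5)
The paper does not give its own proof of Theorem~\ref{thm:babypcp}; it is cited from \cite{bk2022,bwz2021}, and the surrounding text notes that the result was first proved in \cite{bwz2021} via the PCP theorem and (layered) Gap Label Cover, and subsequently re-proved in \cite{bk2022} by a direct bounded-width reduction from 1-in-3 SAT. Your sketch faithfully reconstructs the \emph{first} of these two routes: a multi-layered Label Cover instance with list-labelling soundness, gadget replacement of each vertex $v$ by the power $\A^{N_v}$, variable identification along edge projections, and the observation that any $\B$-assignment induces a chain of minors whose designated small sets $I(f_v)$ yield a satisfied edge. That is the correct high-level shape of the \cite{bwz2021} argument, and your delegation of the construction of the multi-layered, projection-composable Label Cover instance to the references is appropriate for a sketch.

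One local slip is worth flagging. Your statement (i), that \emph{any} homomorphism $\X\to\A$ restricts on each gadget to a coordinate projection, is false for general $\A$: a homomorphism $\A^{N_v}\to\A$ is just an $N_v$-ary polymorphism of $\A$, and these need not be projections unless $\A$ is projection-only (as $\LO_2$ happens to be). Fortunately the reduction never uses this direction. Completeness only needs the converse and constructive claim: given a proper labelling $\ell$, the map sending $b\in A^{N_v}$ to its $\ell(v)$-th coordinate is a homomorphism on each gadget, and the edge identifications are respected precisely because $\pi_e(\ell(u))=\ell(v)$. Soundness, as you have it in (ii), is the part that actually does the work. With that correction your proposal is consistent with the original proof in \cite{bwz2021}. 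You may also wish to note, as the paper does, that \cite{bk2022} later replaced the PCP/Label Cover machinery with a more elementary reduction directly from an \NP-hard CSP (such as 1-in-3 SAT), which avoids parallel repetition and the layered Label Cover construction altogether.
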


The above theorem was first proved in \cite{bwz2021} by using the PCP theorem and the NP-hardness of (layered) Gap Label Cover problem. It was later shown in \cite{bk2022} that the \NP-hardness in fact follows by a direct and simple ``bounded width'' reduction from 1-in-3 SAT (or from any $\NP$-hard CSP).

%Thus, if all mappings in $\Pol\left(\A,\B\right)$ are projections, then the hardness of $\PCSP\left(\A,\B\right)$ follows from the above by picking $k=1, l=2$ and setting $I\left(f\right)$ to be the only `influential' variable of $f$. Another common technique to prove hardness of a PCSP is by \emph{pp-reduction} (sometimes referred to as a \emph{gadget reduction}) from a known \NP-complete PCSP. We shall not give too many details on pp-reductions, as we only need a simple instance in Section \ref{sec:proof_main} to show that the hardness of $\PCSP\left(\LO_3,\LO_4\right)$ readily follows from Theorem \ref{thm:main}. We refer the reader to Sections 2 and 3 of \cite{dko2019} for a comprehensive treatment of pp-reductions.

\section{The structure of $\Pol\left(\LO_2,\LO_3\right)$}
The core of this section is dedicated to the proof of the structural result (Theorem \ref{thm:structure}) describing the general form of the polymorphisms of $\left(\LO_2,\LO_3\right)$ and to showing how this result implies Theorem \ref{thm:main}.

\subsection{$0$-sets, $1$-sets, $2$-sets...}

We use the notion of an $i$-\emph{set} as in~\cite{bbb21} to describe the action of a polymorphism of $\left(\LO_2,\LO_3\right)$. Fix some $n$ and consider a function 
$f:\left\{0,1\right\}^n\rightarrow \left\{0,1,2\right\}$. We can identify tuples from $\{0,1\}^n$ with subsets from $[n]$, where a subset $X$ corresponds to the tuple having 1 exactly in positions in $X$. Then notation $f(X)$ has the obvious meaning. We say that $X\subseteq\left[n\right]$ is an $i$-set of $f$ for some $i\in\left\{0,1,2\right\}$ if $f(X)=i$. It is easy to see that if $g=f^\pi$ and $X$ is an $i$-set of $g$ then 
$\pi^{-1}(X)$ is an $i$-set of $f$. If $f(X)\in \left\{0,1\right\}$ then we call $X$ a \emph{boolean set} of $f$, and we say that
a 2-set of $f$ {\em small} if it contains three elements or less. 

%The following obvious statement will be used implicitly throughout.
%\begin{lemma}
%    If $f\in \Pol\left(\LO_2,\LO_3\right)$, $g=f^\pi$, and $X$ is an $i$-set of $g$  then $\pi^{-1}(X)$ is an $i$-set of $f$.
     %Let $\left(f_1,\pi_{1,2},f_2,\hdots,f_{l-1},\pi_{l-1,l},f_l\right)$ be a chain of minors in $\Pol\left(\LO_2,\LO_3\right)$. If $X$ is an $i$-set of $f_s$ for some $s\in\left[l\right]$, then for any $t\in\left[l-1\right]$, we have that $\pi_{t,s}^{-1}\left(X\right)$ is an $i$-set of $f_t$, where $\pi_{t,s}=\pi_{s-1,s}\circ\hdots,\circ\pi_{t,t+1}.$
%\end{lemma}

It is straightforward to check that a function $f:\left\{0,1\right\}^n\rightarrow \left\{0,1,2\right\}$ is a polymorphism of $\left(\LO_2,\LO_3\right)$ if, and only if, for all partitions $\left(X,Y,Z\right)$ of $\left[n\right]$ into three subsets (some of which may be empty), we have that $\left(f\left(X\right),f\left(Y\right),f\left(Z\right)\right)\in \mathrm{LO}_3$. Observe that the empty set can never be a 2-set.
%where $\mathrm{LO}_3$ is the relation of $\LO_3$. 
We say that a partition $\left(X,Y,Z\right)$ of $\left[n\right]$ is a \emph{boolean partition} (for $f$) if $\left\{f\left(X\right),f\left(Y\right),f\left(Z\right)\right\}=\left\{0,1\right\}$. In the language of $i$-sets, the $n$-ary projections of $\left(\LO_2,\LO_3\right)$ are the mappings for which there exists some $t\in\left[n\right]$, such that $X\subseteq\left[n\right]$ is a $1$-set if, and only if, $t\in X$ and it is a $0$-set otherwise.

When we use the above terminology ($i$-set, boolean partition etc.)~and $f$ is clear from the context, we will omit a reference to it. This should cause no confusion.

\subsection{Reconfiguration graph of polymorphisms}

Let us informally describe our result about the structure of $\Pol\left(\LO_2,\LO_3\right)$ by means of the so-called \emph{reconfiguration graph}. For $n\ge 1$, consider a graph $G_n$ whose vertices are the $n$-ary polymorphisms of $\left(\LO_2,\LO_3\right)$ and there is an edge between two polymorphisms if they differ (as functions) in a single position. Our structural result (Theorem \ref{thm:structure}) claims, intuitively, that 
if all polymorphisms with small $2$-sets are removed from $G_n$, then the resulting graph is disconnected, and each of its connected components contains a unique $n$-ary projection. For our hardness result, we will need a more precise and technical version of this statement.

The split between polymorphisms with small $2$-set and the rest can be justified from the fact that the absence of a small $2$-set is a sufficient condition for a polymorphism to have a very rigid structure that mimics a projection up to some reconfiguration (indeed, the unique projection in its connected component of $G_n$ once the polymorphisms with small $2$-sets are removed). This is also guaranteed by the following consequence of Lov\'asz's result \cite{lovasz} about the chromatic number of the Kneser graphs (a founding pillar of the area of topological combinatorics).

\begin{lemma}[Lemma 13 in~\cite{bbb21}]\label{lem:kneser}
    Any polymorphism of $\left(\LO_2,\LO_3\right)$ has a 1-set or a 2-set with at most $3$ elements.
\end{lemma}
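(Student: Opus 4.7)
The plan is to argue by contradiction. Suppose $f\colon\{0,1\}^n\to\{0,1,2\}$ is a polymorphism of $\left(\LO_2,\LO_3\right)$ with no $1$-set and every $2$-set of cardinality at least $4$; I would derive a contradiction by iteratively forcing larger and larger subsets to be $0$-sets.

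The key structural observation is that, since $f$ avoids the value $1$, the only triples in $\mathrm{LO}_3\cap\{0,2\}^3$ are the permutations of $(0,0,2)$. Consequently, for every partition $(X,Y,Z)$ of $[n]$ into three (possibly empty) parts, exactly one of $X,Y,Z$ is a $2$-set and the other two are $0$-sets. Combined with the small-$2$-set assumption this immediately gives $f(\emptyset)=0$ and that every subset of size $\le 3$ is a $0$-set, which serves as the base case $k=3$ of the inductive claim that every subset of $[n]$ of size at most $k$ is a $0$-set.

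The inductive step is a doubling $k\mapsto 2k$: take $T$ with $k<|T|\le 2k$ and split $T=T_1\sqcup T_2$ with $|T_i|\le k$, so $T_1,T_2$ are $0$-sets by the inductive hypothesis. The partition $(T_1,T_2,\overline{T})$ then splits into two cases. In case (i), $|\overline{T}|\le k$, so $\overline{T}$ is also a $0$-set by hypothesis and we obtain the forbidden tuple $(0,0,0)\notin\mathrm{LO}_3$ -- this is already the contradiction we are aiming at. In case (ii), $|\overline{T}|>k$, and $\overline{T}$ must be the unique $2$-set of the partition, so $f(\overline{T})=2$; the further partition $(T,\overline{T},\emptyset)$ then has value tuple $(f(T),2,0)$, which lies in $\mathrm{LO}_3$ only if $f(T)\in\{0,1\}$, so the no-$1$-set hypothesis forces $f(T)=0$ and the induction continues.

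Iterating the doubling starting from $k=3$, the bound exceeds $n$ after $O(\log n)$ steps, at which point branch (i) is inevitably triggered (for instance already by $T=[n]$, giving $\overline{T}=\emptyset$), yielding the contradiction. The only real care needed is the handling of empty parts in three-way partitions, which is explicitly allowed by the polymorphism definition in the paper, so no serious obstacle arises. Notably, this direct combinatorial bootstrap would bypass the topological input (Lov\'asz's chromatic-number theorem for Kneser graphs) that underpins the proof in~\cite{bbb21}.
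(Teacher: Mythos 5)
You have misread the statement, and the misreading is load-bearing. The lemma asserts that $f$ has a $1$-set \emph{with at most $3$ elements} or a $2$-set \emph{with at most $3$ elements} --- the size bound applies to both disjuncts, not only to the second. This is confirmed by how the paper uses the lemma downstream: in the proof of Lemma~\ref{lem:ABC}, the authors write ``There is a 1-set with 2 or 3 elements by Lemma~\ref{lem:kneser}'', which only follows if the lemma guarantees a \emph{small} $1$-set (given the standing hypothesis that all $2$-sets have $\geq 4$ elements). Your negated hypothesis is ``$f$ has \emph{no $1$-set at all} and no small $2$-set'', which is far stronger than the correct negation ``$f$ has no $1$-set of size $\le 3$ and no $2$-set of size $\le 3$''.

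Your doubling induction is sound \emph{for what it proves}: a $\{0,2\}$-valued polymorphism of $\left(\LO_2,\LO_3\right)$ without small $2$-sets cannot exist. But the induction relies essentially on the absence of $1$-sets to rule out $f(T)=1$ in the step $(T,\overline{T},\emptyset)$; if $1$-sets of size $\ge 4$ are permitted (as the correct hypothesis allows), the step $f(T)\in\{0,1\}\Rightarrow f(T)=0$ collapses and the bootstrap stalls immediately at size $4$. So the argument does not extend to the actual statement. Indeed, the paper does not reprove Lemma~\ref{lem:kneser} --- it cites it verbatim as Lemma~13 of \cite{bbb21} and explicitly describes it as a consequence of Lov\'asz's theorem on the chromatic number of Kneser graphs. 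Your closing remark that the combinatorial bootstrap ``would bypass the topological input'' should have been a warning sign: when a statement the authors flag as topological admits a two-line elementary proof, it is worth re-reading the statement. The genuine content here is pinning down a $1$-set of \emph{bounded} size, and that is precisely what the Kneser-graph colouring argument delivers and your induction does not.
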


\subsection{Recolouring and saturation}
The $n$-ary polymorphisms of $\left(\LO_2,\LO_3\right)$ can be understood as colourings of the subsets of $\left[n\right]$ using the colour set $\left\{0,1,2\right\}$ and subject to the constraints enforced by $\LO_3$. We say that a set $X\subseteq\left[n\right]$ is \emph{recolourable} for a $n$-ary polymorphism $f$ if one can change the image of $X$ to a different value to obtain a mapping that is still a polymorphism of $\left(\LO_2,\LO_3\right)$. Specifically, we say that $X$ is \emph{recolourable to $i$} or $i$-\emph{recolourable} for some $i\neq f\left(X\right)$ if the mapping $f'$ defined as
$$
f'\left(Y\right)=\begin{cases}
    f\left(Y\right)\;\;\textrm{if}\;Y\neq X\\
    i\;\;\;\;\;\;\;\;\;\textrm{otherwise}
\end{cases}
$$
is a polymorphism of $\left(\LO_2,\LO_3\right)$. We say that a boolean set is \emph{static} (for $f$) if it is not recolourable to its opposite boolean value. That is, if it is an $i$-set for $i\in\left\{0,1\right\}$ and it is not $\left(1-i\right)$-recolourable. The following lemma is straightforward.

\begin{lemma}
\label{lem:static}
    A boolean set $X$ of a $n$-ary polymorphism $f$ of $\left(\LO_2,\LO_3\right)$ is static if, and only if, there exist sets $Y,Z$ such that $(X,Y,Z)$ is a boolean partition of $[n]$.  
\end{lemma}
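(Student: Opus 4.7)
The plan is to prove the two directions separately by unpacking definitions, relying on one simple observation about $\mathrm{LO}_3$ restricted to boolean entries: any triple in $\mathrm{LO}_3$ whose entries all lie in $\{0,1\}$ must be a permutation of $(0,0,1)$, because $\mathrm{LO}_3$ demands a unique maximum. In particular, if $(A,B,C)$ is a boolean partition for $f$, then exactly one of $f(A),f(B),f(C)$ equals $1$ and the other two equal $0$.

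For the ($\Leftarrow$) direction, assume some $(X,Y,Z)$ is a boolean partition of $[n]$. If $f(X)=0$, then exactly one of $f(Y),f(Z)$ is $1$, so flipping $f(X)$ to $1$ yields two $1$'s among the parts of this partition, violating $\mathrm{LO}_3$. If $f(X)=1$, then $f(Y)=f(Z)=0$ and flipping $f(X)$ to $0$ produces the all-zero triple, again not in $\mathrm{LO}_3$. Either way the attempted recolouring fails on the partition $(X,Y,Z)$, so $X$ is not $(1-f(X))$-recolourable, i.e.\ $X$ is static.

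For the ($\Rightarrow$) direction, I would argue the contrapositive. Assume there is no boolean partition containing $X$, and let $f'$ agree with $f$ except that $f'(X)=1-f(X)$. To show $f'\in\Pol(\LO_2,\LO_3)$, take any partition $(A,B,C)$ of $[n]$. If $X\notin\{A,B,C\}$, then $f'$ and $f$ agree on it. Otherwise, say $A=X$; by hypothesis $(X,B,C)$ is not a boolean partition, so $\{f(X),f(B),f(C)\}\not\subseteq\{0,1\}$ (using the observation above, together with the fact that $(0,0,0),(1,1,1)\notin\mathrm{LO}_3$). Hence some entry among $f(B),f(C)$ equals $2$. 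Since $(f(X),f(B),f(C))\in\mathrm{LO}_3$ and $f(X)\in\{0,1\}$, the value $2$ must occur uniquely, so exactly one of $f(B),f(C)$ equals $2$ and the other lies in $\{0,1\}$. Replacing $f(X)$ by $1-f(X)\in\{0,1\}$ leaves the $2$ as the unique maximum, keeping the triple in $\mathrm{LO}_3$. Therefore $f'$ is a polymorphism and $X$ is recolourable, contradicting staticity.

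The only place requiring any care is the case analysis on $(X,B,C)$ in the second direction, in particular noting that both $f(B)=f(C)=2$ is impossible for any polymorphism (since then $(f(X),2,2)$ has no unique maximum) and that an all-boolean triple in $\mathrm{LO}_3$ automatically gives a boolean partition. With those two observations the rest is a direct verification, matching the author's remark that the lemma is straightforward.
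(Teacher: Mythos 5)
Your proof is correct and is the intended direct verification from the definitions; the paper itself gives no proof, labelling the lemma straightforward, and your argument is exactly what that label presupposes. Both directions are handled cleanly: the key observation that an all-boolean triple in $\mathrm{LO}_3$ must be a permutation of $(0,0,1)$ drives the $(\Leftarrow)$ direction, and in the $(\Rightarrow)$ direction your contrapositive correctly shows that if no partition through $X$ is boolean then every partition through $X$ contains a unique $2$ elsewhere, so flipping $f(X)$ within $\{0,1\}$ preserves membership in $\mathrm{LO}_3$. Your parenthetical noting that $f(B)=f(C)=2$ is impossible, and that the case $X=B$ (forcing $X=\emptyset$) still works because then $f(C)=2$, closes the only edge cases worth flagging.
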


Recolouring is a central notion to our structural result: what we show is in fact, that polymorphisms of $\left(\LO_2,\LO_3\right)$ without small $2$-sets are essentially projections modulo some recolouring. 
%We refer to such polymorphisms as \emph{recoloured projections}.

\begin{definition}
    Let $f$ be a $n$-ary polymorphism of $\left(\LO_2,\LO_3\right)$. We say that $f$ is a \emph{recoloured projection} if there is some $t\in\left[n\right]$ such that for any static boolean set $S$, it holds that $f\left(S\right)=\left[t\in S\right]$ where $\left[t\in S\right]$ is the Iverson bracket (which evaluates to $1$ if $t\in S$ and to $0$ otherwise). We refer to such $t\in\left[n\right]$ as the \emph{dictating variable} of $f$.
\end{definition}

It is easy to characterise all $2$-recolourable boolean sets. The following is straightforward.
\begin{lemma}\label{lem:2set}
    For any polymorphism $f$ of $\left(\LO_2,\LO_3\right)$, there are no disjoint $2$-sets. In particular, the empty set is never a $2$-set. Furthermore, a boolean set $X$ is $2$-recolourable if, and only if, $X$ has non-empty intersection with every $2$-set of $f$.
\end{lemma}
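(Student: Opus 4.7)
The plan is to unpack the polymorphism condition in its partition form: $f$ is a polymorphism of $(\LO_2, \LO_3)$ if and only if, for every partition $(A, B, C)$ of $[n]$ (empty parts allowed), the triple $(f(A), f(B), f(C))$ has a unique maximum. All three assertions then fall out of choosing the right partitions.

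For the first two assertions, I would suppose $X$ and $Y$ are disjoint $2$-sets and form the partition $(X, Y, [n]\setminus(X\cup Y))$. The induced triple is $(2, 2, \cdot)$, which has no unique maximum, a contradiction. Specialising to $X = Y = \emptyset$ (equivalently, examining the partition $(\emptyset, \emptyset, [n])$ directly) then rules out $\emptyset$ being a $2$-set.

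For the characterisation of $2$-recolourability, let $f'$ denote the mapping obtained from $f$ by setting $f'(X) = 2$. The only partitions on which $f'$ and $f$ may disagree are those containing $X$ as a part; a generic such partition has the form $(X, B, C)$ with $B$ and $C$ a disjoint pair of subsets whose union is $\overline{X}$. On such a partition, $f'$ respects the polymorphism condition precisely when $(2, f(B), f(C)) \in \mathrm{LO}_3$, i.e.\ when $f(B), f(C) \neq 2$. Ranging over all such $(B, C)$, this reduces to the requirement that no subset of $\overline{X}$ is a $2$-set of $f$, which is exactly the condition that every $2$-set of $f$ meets $X$.

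The only mildly delicate point, which I would expect to flag as the main obstacle, is that a partition may feature $X$ as a part more than once; but this can only happen when $X = \emptyset$, in which case the first part of the lemma already forbids setting $f'(\emptyset) = 2$. I do not expect any further difficulty here: once the partition condition is spelled out, each direction collapses to inspecting a single family of partitions.
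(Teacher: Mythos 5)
The paper offers no proof for this lemma, simply declaring it straightforward, so there is no authorial argument to compare against; your proof is the expected unpacking of the partition characterisation and is correct. The one point worth sharpening is your handling of $X=\emptyset$: you correctly observe that $\emptyset$ can appear as two parts of a partition, and that the first claim applied to $f'$ rules out ever setting $f'(\emptyset)=2$, but this means the stated biconditional actually fails for $X=\emptyset$ when $f$ has no $2$-sets at all (e.g.\ a projection), since then $\emptyset$ vacuously meets every $2$-set yet is not $2$-recolourable. This is a harmless wrinkle in the lemma's phrasing rather than a gap in your argument---the paper only ever invokes the characterisation for non-empty $X$---but it would be cleaner to say the biconditional is proved for non-empty $X$, and that $\emptyset$ is never $2$-recolourable, rather than implying the first claim resolves the equivalence at $\emptyset$. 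As a small stylistic remark, the ``only if'' direction also follows directly from the first claim: if $f'(X)=2$ yields a polymorphism, then $X$ must meet every $2$-set of $f$, as those remain $2$-sets of $f'$ and $f'$ has no disjoint $2$-sets.
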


The above lemma states that the 2-sets of any polymorphism form an intersecting family
(i.e. a family of pairwise intersecting subsets of a set), which is an object much studied in extremal combinatorics, see,  e.g. \cite{ellis21}.
We say that a $n$-ary polymorphism $f$ is \emph{upwards closed} if for every $2$-set $X$, we have that $X\subseteq Y$ implies $f\left(Y\right)=2$. If $f$ is upwards closed and is such that the complement of every boolean set is a $2$-set (that is, for every $X\subseteq\left[n\right]$ either $X$ or $\overline{X}$ is a $2$-set), we say that $f$ is \emph{saturated}. Note that in a saturated polymorphism no boolean set is $2$-recolourable, so the $2$-sets of a saturated polymorphism of arity $n$ form a maximal intersecting family of subsets, as studied in \cite{meyerovitz95}. 

\begin{definition}
    Let $f,g$ be $n$-ary polymorphisms of $\left(\LO_2,\LO_3\right)$ and assume that $g$ is saturated. We say that $g$ is a \emph{saturation} of $f$ if there is a sequence $f_1=f,f_2,\hdots,f_{l-1},f_l=g$ of $n$-ary polymorphisms where for each $i\in\left[l\right]$, $f_i$ is obtained from $f_{i-1}$ by recolouring a boolean set to $2$. If, in addition, $g$ has no small $2$-set, we say that $g$ is a \emph{pure saturation} of $f$. We call the sequence $f_1,\hdots,f_l$ a \emph{saturation path} for $f$. 
\end{definition}

When using the fact that a polymorphism $f$ is saturated in a proof, we may say `by saturation' as a shorthand for `because $f$ is saturated' when $f$ is clear from the context. The following lemma implies that every polymorphism can be recoloured to one that is upwards closed.

\begin{lemma}\label{lem:upwards_closure}
    Let $f$ be a polymorphism of $\left(\LO_2,\LO_3\right)$. If $X$ is a $2$-set and $Y$ is a boolean set such that $X\subseteq Y$, then $Y$ is $2$-recolourable.
\end{lemma}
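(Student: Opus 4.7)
The plan is to construct the candidate recoloured function $f'$ that agrees with $f$ everywhere except at $Y$, where we set $f'(Y) = 2$, and then verify that $f'$ is itself a polymorphism of $\left(\LO_2, \LO_3\right)$. Since $f'$ and $f$ differ only on $Y$, the polymorphism condition is automatically preserved on every partition of $[n]$ that does not have $Y$ as one of its parts. So the whole verification reduces to checking that for every partition of the form $(Y, P, Q)$ of $[n]$, the triple $(f'(Y), f(P), f(Q)) = (2, f(P), f(Q))$ lies in $\mathrm{LO}_3$.

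The triple $(2, f(P), f(Q))$ belongs to $\mathrm{LO}_3$ precisely when $2$ is the unique maximum, i.e.\ when $f(P), f(Q) \in \{0, 1\}$. The crux of the argument is to rule out $f(P) = 2$ and $f(Q) = 2$. Suppose for contradiction $P$ is a $2$-set. Since $X \subseteq Y$ and $P \subseteq [n] \setminus Y$, we have $X \cap P = \emptyset$, so $X$ and $P$ would be disjoint $2$-sets of $f$, contradicting Lemma \ref{lem:2set}. The same reasoning with $Q$ in place of $P$ forces $f(Q) \in \{0,1\}$. Hence $(2, f(P), f(Q))$ has $2$ as its unique maximum, the $\mathrm{LO}_3$ condition holds, and $f'$ is a polymorphism.

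There is essentially no obstacle here: the lemma is a direct consequence of the intersecting family property of $2$-sets recorded in Lemma \ref{lem:2set}, combined with the observation that recolouring $f(Y)$ to $2$ only affects partitions that include $Y$ as a part. The only mild subtlety is remembering that the recolouring is only meaningful because $Y$ is boolean, so $f'(Y) = 2$ genuinely differs from $f(Y)$.
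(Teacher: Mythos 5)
Your proof is correct and takes essentially the same approach as the paper. The paper simply invokes the characterization of $2$-recolourability from Lemma \ref{lem:2set} directly (since $X$ meets every $2$-set, so does $Y\supseteq X$), whereas you unpack that characterization by re-checking the partition condition by hand; both arguments reduce to the same use of the no-disjoint-$2$-sets property.
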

\begin{proof}
    By Lemma \ref{lem:2set}, $X$ has non-empty intersection with all $2$-sets of $f$. Since $X\subseteq Y$, the same holds for $Y$, so $Y$ is $2$-recolourable by Lemma \ref{lem:2set}.
\end{proof}

Using a similar argument, we can show that every polymorphism has a saturation, though this may not be unique.

\begin{lemma}\label{lem:complementarity}
    Let $f$ be an upwards closed $n$-ary polymorphism of $\left(\LO_2,\LO_3\right)$. If $X,Y$ are disjoint non-empty boolean sets for which $X\cup Y=\left[n\right]$, then both $X$ and $Y$ are $2$-recolourable.
\end{lemma}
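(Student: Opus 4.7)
The plan is to reduce the claim to Lemma \ref{lem:2set}, which characterizes $2$-recolourable boolean sets as exactly those boolean sets that meet every $2$-set of $f$. Since the hypotheses on $X$ and $Y$ are symmetric (both are non-empty boolean sets, disjoint, and with $X \cup Y = [n]$), it suffices to show that $X$ is $2$-recolourable; the same argument will apply to $Y$ verbatim.

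To show that $X$ is $2$-recolourable via Lemma \ref{lem:2set}, I would fix an arbitrary $2$-set $Z$ of $f$ and prove that $Z \cap X \neq \emptyset$. Suppose, for contradiction, that $Z \cap X = \emptyset$. Since $X$ and $Y$ are disjoint and together cover $[n]$, we have $Y = \overline{X}$, and so $Z \subseteq Y$. Now invoke the hypothesis that $f$ is upwards closed: the definition directly gives $f(Y) = 2$, because $Z$ is a $2$-set contained in $Y$. This contradicts the assumption that $Y$ is a boolean set (that is, $f(Y) \in \{0,1\}$). Hence every $2$-set of $f$ meets $X$, and Lemma \ref{lem:2set} yields the $2$-recolourability of $X$. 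Applying the same argument with the roles of $X$ and $Y$ swapped finishes the proof.

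I don't anticipate any real obstacle here: the lemma is essentially a packaging step that combines two facts already available from the preceding paragraph. The only subtlety to check is that the upwards closed definition is being invoked in the intended direction (namely, that containment of a known $2$-set forces the image to be $2$, rather than merely permitting $2$-recolouring as in Lemma \ref{lem:upwards_closure}), which is precisely what gives the hard contradiction with $Y$ being boolean. Conceptually, the content of the lemma is that once $f$ is upwards closed, every $2$-set must straddle any complementary pair of non-empty boolean sets, so neither half of the partition can avoid the $2$-sets.
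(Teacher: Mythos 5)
Your proof is correct and is essentially the same as the paper's: both invoke upwards closedness to conclude that the boolean set on one side of the partition cannot contain any $2$-set, so every $2$-set meets the other side, and then apply Lemma \ref{lem:2set} together with the symmetry of the hypotheses. The only difference is the purely cosmetic choice of which of $X$, $Y$ you argue about first.
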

\begin{proof}
    Since $f$ is upwards closed, there are no subsets of $X$ that are $2$-sets. Thus, all $2$-sets intersect $Y$, so by Lemma \ref{lem:2set}, $Y$ is $2$-recolourable. By symmetry, the same argument applies to $X$.
\end{proof}

\begin{corollary}\label{cor:lift}
Every polymorphism $f$ of $\left(\LO_2,\LO_3\right)$ has a saturation. Moreover, if $f$ has arity at least 7 and no small $2$-sets, then $f$ has a pure saturation.
    
\end{corollary}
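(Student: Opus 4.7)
The plan is to construct a saturation path by iteratively recolouring boolean sets to $2$, invoking Lemmas~\ref{lem:upwards_closure} and~\ref{lem:complementarity} to justify each step. Starting from $f_1 = f$, at each iteration apply the first applicable of the following rules: \emph{(i)} if $f_i$ is not upwards closed, there exist a $2$-set $Z$ and a boolean superset $Y\supseteq Z$, and Lemma~\ref{lem:upwards_closure} ensures $Y$ is $2$-recolourable, so let $f_{i+1}$ be $f_i$ with $Y$ recoloured to $2$; \emph{(ii)} otherwise, if there are non-empty boolean sets $X,\overline{X}$ partitioning $[n]$, Lemma~\ref{lem:complementarity} gives that both are $2$-recolourable, and we recolour one of them. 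Each step strictly increases the number of $2$-sets, so the procedure must terminate at some polymorphism $f_l$.

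Next I would verify that $f_l$ is saturated. Upwards closure holds because rule \emph{(i)} no longer triggers. For any $X$ with $X$ and $\overline{X}$ both non-empty, if neither were a $2$-set, rule \emph{(ii)} would still apply, so at least one of them must be a $2$-set. The remaining case $X = \emptyset$ requires $[n]$ to be a $2$-set, which is the one subtle point: one must show that at termination there is at least one $2$-set at all. Indeed, if $f_l$ had no $2$-sets, then every subset would be boolean, and for $n \geq 2$ any non-empty proper subset $X_0\subsetneq [n]$ would supply a complementary boolean pair on which rule \emph{(ii)} could still act, contradicting termination. Once a single $2$-set exists, upwards closure forces $[n]$ to be a $2$-set. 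The degenerate case $n = 1$ is handled directly by Lemma~\ref{lem:2set}, which permits recolouring $\{1\}$ to $2$ whenever no $2$-set is present.

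For the pure saturation claim, I would run the same procedure on $f$ of arity $n\geq 7$ with no small $2$-sets, choosing the recolouring in rule \emph{(ii)} more carefully. Rule \emph{(i)} produces only $2$-sets which are supersets of existing $2$-sets, hence of size at least $4$ throughout the procedure. In rule \emph{(ii)}, since $|X| + |\overline{X}| = n \geq 7$, the larger of the two sets has size at least $\lceil n/2 \rceil \geq 4$, so recolouring the larger one introduces no small $2$-set. Thus the invariant ``every $2$-set has at least four elements'' is maintained, and the final $f_l$ is a pure saturation. The main technical difficulty, in my view, is that rule \emph{(ii)} cannot handle the pair $(\emptyset,[n])$ on its own; the existence-of-a-$2$-set argument just described is what bridges this gap and guarantees that the procedure actually terminates in a genuinely saturated polymorphism rather than stalling with $[n]$ still boolean.
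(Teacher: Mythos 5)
Your proof is correct and follows essentially the same two-step approach as the paper's: apply Lemmas~\ref{lem:upwards_closure} and~\ref{lem:complementarity} exhaustively, and for the pure-saturation claim choose the larger half (of size at least $\lceil n/2\rceil\ge 4$) to recolour. The paper's own proof is a two-sentence sketch; you have spelled out the termination argument, the fact that $[n]$ itself becomes a $2$-set, and the $n=1$ degenerate case, all of which are genuine (if routine) details that the paper leaves implicit.
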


\begin{proof}
    Starting from $f$, one can apply Lemmas \ref{lem:upwards_closure} and \ref{lem:complementarity} exhaustively to obtain a saturation.
    If $f$ has arity $n\ge 7$ and no small 2-set, then, when applying Lemma \ref{lem:complementarity}, one can always choose to 2-recolour the set containing at least $\lceil n/2\rceil$ elements, thus never introducing small 2-sets.
\end{proof}

Observe that the transformation from $f$ to its saturation can be seen as taking the intersecting family of all 2-sets of $f$, completing it to a maximal intersecting family, and then 2-recolouring all boolean sets in this maximal family.

Being able to recolour a polymorphism to a saturated one is a handy trick: this gives more structure and makes analysing recolourings easier without affecting the core of the polymorphism itself. 
For saturated polymorphisms, there is a more useful characterisation of static boolean sets. Hereafter, for a saturated polymorphism $f$, we denote the union of its minimal $2$-sets by $T_f$ \footnote{We understand minimality of $2$-sets by inclusion: a $2$-set is minimal if none of its non-trivial subsets is a $2$-set.}.

\begin{lemma}\label{lem:boolean_recolouring}
    For an $n$-ary saturated polymorphism $f$ of $\left(\LO_2,\LO_3\right)$, a boolean set $X$ is static if, and only if, $X\cap T_f$ is non-empty.
\end{lemma}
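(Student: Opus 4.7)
The plan is to prove both directions of the equivalence using Lemma~\ref{lem:static}, which characterises a boolean set $X$ as static precisely when there exist $Y,Z$ with $(X,Y,Z)$ a boolean partition of $[n]$.

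For the forward direction, I would assume $X$ is static and fix such a boolean partition $(X,Y,Z)$. Since $Y$ is boolean it is not a $2$-set, so saturation forces $\overline{Y}=X\cup Z$ to be a $2$-set. I would then pick a minimal $2$-set $M\subseteq X\cup Z$. If $M$ were disjoint from $X$, then $M\subseteq Z$, whence upwards closure would give $f(Z)=2$, contradicting $Z$ being boolean. Hence $M\cap X\neq\emptyset$, and since $M\subseteq T_f$, we obtain $X\cap T_f\neq\emptyset$.

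For the reverse direction, I would assume $X$ is boolean and fix a minimal $2$-set $M$ with $M\cap X\neq\emptyset$. Set $B:=M\setminus X$; I first observe that $B\neq\emptyset$, since otherwise $M\subseteq X$ and upwards closure would yield $f(X)=2$, contradicting $X$ being boolean. The partition I would analyse is $(X,B,C)$ where $C:=\overline{X\cup B}$. By the minimality of $M$, the proper non-empty subset $B\subsetneq M$ is not a $2$-set, so $f(B)$ is boolean. Moreover, $X\cup B\supseteq M$, so upwards closure forces $X\cup B$ to be a $2$-set; then by Lemma~\ref{lem:2set} (no two $2$-sets are disjoint) $C$ is not a $2$-set, so $f(C)$ is also boolean. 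The triple $(f(X),f(B),f(C))$ then lies in $\mathrm{LO}_3\cap\{0,1\}^3$, which forces the set of values to be $\{0,1\}$, so $(X,B,C)$ is a boolean partition and $X$ is static by Lemma~\ref{lem:static}.

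There is no substantial obstacle: the proof is essentially a book-keeping exercise combining saturation, upwards closure, and Lemma~\ref{lem:2set}. The one design choice that needs care is the partition $(X,B,\overline{X\cup B})$ used in the reverse direction; both the minimality of $M$ (to guarantee $f(B)$ is boolean) and the intersecting-family property of $2$-sets (to guarantee $f(C)$ is boolean) are needed to make this partition do its job.
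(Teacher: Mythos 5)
Your proof is correct and follows essentially the same approach as the paper: both directions hinge on saturation (and hence upward closure), the intersecting-family property from Lemma~\ref{lem:2set}, and the partition $\left(X,\,M\setminus X,\,\overline{X\cup M}\right)$ in the reverse direction. The only cosmetic difference is that the paper proves the forward direction by contrapositive (assuming $X\cap T_f=\emptyset$ and showing any third part of a partition with $X$ must be a $2$-set), whereas you argue it directly, but the underlying reasoning is identical.
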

\begin{proof}
    Suppose $X\cap T_f$ is empty, and consider a partition $\left(X,U,V\right)$ of $\left[n\right]$. If $U$ is boolean then $\overline{U}$ is a $2$-set by saturation and thus, there must be some minimal $2$-set $Q$ for which $Q\subseteq \overline{U}$. In particular, $\overline{U}\cap T_f$ is non-empty. Since $\overline{U}$ is the disjoint union of $X$ and $V$ and $X$ is disjoint from $T_f$ we must have that $\overline{U}\cap T_f\subseteq V$. This implies that $Q\subseteq V$ so $V$ is a $2$-set, so $X$ is not part of any boolean partition and hence, it is not static by Lemma \ref{lem:static}.

    For the converse, let $Q$ be a minimal $2$-set such that $X\cap Q$ is non-empty. Consider the partition $\left(X,Q\backslash X, \overline{Q\cup X}\right)$. It is boolean since $X$ is boolean by assumption, $Q\backslash X$ is a proper subset of a minimal $2$-set, and $\overline{Q\cup X}$ is disjoint from a $2$-set, namely $Q$. Thus, $X$ is a part of a boolean partition, so it is static by Lemma \ref{lem:static}.
\end{proof}

As mentioned above, polymorphisms may not have a unique saturation - projections being the most immediate example. For polymorphisms with unique pure saturation, we have the following characterisation of the boolean sets that are recoloured throughout a saturation path.

\begin{lemma}\label{lem:unique_saturation2}
    Let $f$ be a polymorphism of $\left(\LO_2,\LO_3\right)$ of arity $n\geq 7$ and no small $2$-sets. Let $X$ be a boolean set of $f$ and let $g$ be a pure saturation of $f$. If $\left|X\right|\geq n-3$ or there is some $Z\subset X$ such that $f\left(Z\right)=2$, then $g\left(X\right)\neq f\left(X\right)$. If, in addition, we assume that $g$ is the unique pure saturation of $f$, then the converse holds.
\end{lemma}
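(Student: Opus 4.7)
I would prove the two implications separately; only the converse uses the uniqueness hypothesis, and both rely on the basic fact that along any saturation path each set is recoloured at most once (always from boolean to $2$). For the forward direction, if $|X|\ge n-3$ then $|\overline X|\le 3$, so $\overline X$ cannot be a $2$-set of $g$ because $g$ is pure, yet saturation of $g$ forces one of $X,\overline X$ to be a $2$-set; hence $g(X)=2\ne f(X)$. If instead some $Z\subsetneq X$ satisfies $f(Z)=2$, then $Z$ remains a $2$-set throughout any saturation path, so $g(Z)=2$; since $g$ is saturated and therefore upwards closed, $Z\subseteq X$ gives $g(X)=2\ne f(X)$.

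\textbf{Converse direction.} Assuming uniqueness of the pure saturation $g$, I would argue the contrapositive: from $|X|<n-3$ and the hypothesis that no proper subset of $X$ is a $2$-set of $f$, I produce a pure saturation $g'$ of $f$ with $g'(X)=f(X)$ and then invoke uniqueness to conclude $g=g'$. I distinguish two sub-cases on $f(\overline X)$. If $f(\overline X)=2$, then any saturation preserves $\overline X$ as a $2$-set, so $g(\overline X)=2$; because the $2$-sets of a saturated polymorphism form an intersecting family (Lemma \ref{lem:2set}) and $X\cap\overline X=\emptyset$, $X$ cannot also be a $2$-set of $g$, so $g(X)=f(X)$ (in fact without invoking uniqueness here). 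If instead $\overline X$ is boolean in $f$, then the hypothesis is equivalent to $\overline X$ meeting every $2$-set of $f$, and Lemma \ref{lem:2set} then ensures $\overline X$ is $2$-recolourable in $f$, producing a polymorphism $f_1$. Since $|\overline X|\ge 4$ and $f$ had no small $2$-sets, $f_1$ still has no small $2$-sets, and the disjointness $X\cap\overline X=\emptyset$ makes $X$ un-$2$-recolourable from $f_1$ onward (a property preserved as further $2$-sets accumulate along the path). Corollary \ref{cor:lift} applied to $f_1$ then yields a pure saturation $g'$ of $f_1$---hence of $f$---in which $X$ is never recoloured, so $g'(X)=f(X)$; uniqueness closes the argument.

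\textbf{Main obstacle.} The delicate step is the construction of $g'$ in the second sub-case of the converse, which relies on three ingredients acting in concert: the size bound $|X|<n-3$ (giving $|\overline X|\ge 4$, so the extra $2$-recolouring stays pure), the purity of $f$ (so existing $2$-sets of $f$ already have size at least $4$), and the assumption that no proper subset of $X$ is a $2$-set of $f$ (which is exactly what makes $\overline X$ $2$-recolourable via Lemma \ref{lem:2set}). Dropping any one of these would break the construction. Once $g'$ is in hand, uniqueness of the pure saturation plays only a routine role in forcing $g=g'$.
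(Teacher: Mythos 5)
Your proof is correct and follows essentially the same route as the paper's. The forward direction is identical in substance. For the converse, you prove the contrapositive while the paper argues by contradiction starting from $g(X)=2$, but the key construction is the same: $2$-recolour $\overline X$ to build a second pure saturation and invoke uniqueness. Your explicit sub-case $f(\overline X)=2$ is a detail the paper leaves implicit (there, the assumption $g(X)=2$ together with Lemma \ref{lem:2set} silently rules out $\overline X$ being a $2$-set before they invoke $2$-recolourability of $\overline X$); spelling it out, and noting that sub-case needs no uniqueness, is a modest but genuine improvement in rigor, not a different argument.
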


\begin{proof}
     Let $X$ be a boolean set of $f$. If $\left|X\right|\geq n-3$, then $g\left(X\right)=2$, since $g$ cannot have small $2$-sets and either $X$ or $\overline{X}$ must be a $2$-set of $g$. If $X$ has a subset that is a $2$-set of $f$, then $g\left(X\right)=2$, since saturations are upwards closed. In both cases, $g\left(X\right)\neq f\left(X\right)$ as required.

    For the converse, let $X$ be a boolean set of $f$ and suppose $g\left(X\right)\neq f\left(X\right)$, where $g$ is the unique pure saturation of $f$. Note that this implies $g\left(X\right)=2$. Suppose that $\left|X\right|\leq n-4$ and no subset of $X$ is a $2$-set of $f$. By Lemma \ref{lem:2set}, $\overline{X}$ must be $2$-recolourable in $f$, so let $f'$ be the polymorphism obtained by $2$-recolouring $\overline{X}$. Since $\left|\overline{X}\right|\geq 4$, $f'$ has no small $2$-sets; in particular, there is a saturation $g'$ of $f'$ with no small $2$-sets (by Corollary \ref{cor:lift}) such that $g'\left(X\right)\neq 2$, which means that $g'\left(X\right)=f\left(X\right)\neq g\left(X\right)$. But $g'$ is by construction a pure saturation of $f$, which contradicts the assumption that $g$ is unique.   
\end{proof}

We now state the structural theorem at the core of our main result (Theorem \ref{thm:main}). We provide the proof in Section \ref{sec:proof_of_structure}. 

\begin{theorem}\label{thm:structure}
    For a $n$-ary polymorphism $f$ of $\left(\LO_2,\LO_3\right)$ that has no small 2-sets, there exists a unique $t\in\left[n\right]$ (depending on $f$) such that the following holds: $n\le 6$ and $f$ is a recoloured projection with dictating variable $t$, or $n\ge 7$ and every pure saturation of $f$ is a recoloured projection with dictating variable $t$.
\end{theorem}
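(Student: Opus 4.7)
The plan is to split on arity. For $n\le 6$ I would argue directly that $f$ is itself a recoloured projection, while for $n\ge 7$ I would fix an arbitrary pure saturation $g$ of $f$ (which exists by Corollary~\ref{cor:lift}), show $g$ is a recoloured projection, and argue the dictating variable depends only on $f$. In the small-arity regime, if $f$ has no $2$-sets it takes values in $\{0,1\}$ and is a polymorphism of $\LO_2$, hence a projection by Example~\ref{ex:pollo2}. Otherwise the no-small-$2$-set hypothesis together with $n\le 6$ makes the possible $2$-sets few and large; enumerating boolean partitions $(X,Y,Z)$ of $[n]$ (using that exactly one of $f(X),f(Y),f(Z)$ is $1$ when all three are boolean) yields a finite linear system on the $0/1$-values of $f$ whose only solutions are recoloured projections, as already illustrated in the $n=4$ subcase where the identity $a+b+c+d=1$ on singleton values pins down a unique dictating variable.

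For $n\ge 7$, Lemma~\ref{lem:kneser} applied to $g$ provides a $1$-set of size at most~$3$; let $U$ be one of minimum size. The heart of the argument is to show $|U|=1$. Minimality combined with the partition $(V_1,V_2,\overline{U})$ — where $\overline{U}$ is a $2$-set by saturation, forcing $(g(V_1),g(V_2),2)\in\mathrm{LO}_3$ and hence $g(V_1),g(V_2)\in\{0,1\}$ — shows that every proper nonempty subset of $U$ is a $0$-set. If $|U|\in\{2,3\}$, one combines the antipodal structure of $1$-sets inside any single minimal $2$-set $M$ (in the subcase $\overline{M}$ is a $0$-set, subsets of $M$ that partition $M$ form an antisymmetric family of $1$-sets) with the purity consequence that disjoint $1$-sets $U_1,U_2$ force $\overline{U_1\cup U_2}$ to be a $2$-set of size at least $4$ (from $(1,1,\cdot)\in\mathrm{LO}_3$), and a careful counting argument to derive a contradiction, collapsing $U$ to a singleton $\{t\}$.

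With $\{t\}$ identified, I would verify $g(S)=[t\in S]$ for every static boolean $S$ by propagating the value at $\{t\}$ through boolean partitions of the form $(\{t\},S\setminus\{t\},\overline{S})$ or $(\{t\},S,\overline{S\cup\{t\}})$, using saturation to classify the third part as a $2$-set and the $\mathrm{LO}_3$ condition to pin down the boolean value. Uniqueness of $t$ within a fixed $g$ follows because for any $a\in T_g\setminus\{t\}$ the static doubleton $\{t,a\}$ has $g(\{t,a\})=1$, forcing any second candidate $t'$ to equal $a$ and thus contradicting $|T_g|\ge 4$. Independence of $t$ from the choice of pure saturation follows from the forward direction of Lemma~\ref{lem:unique_saturation2}: sets of size at most $3$ cannot become $2$-sets in a pure saturation, so they retain their $f$-value and in particular the singleton $1$-sets are an $f$-invariant, making the dictating variable extracted by the argument above the same across all pure saturations. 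The main obstacle is the step ruling out $|U|\ge 2$: the minimal-$2$-set structure of $g$ is genuinely flexible and the minimal $2$-sets need not share a common element a priori, so the argument must combine the Kneser-flavoured existence of a small $1$-set with the reconfiguration rigidity of pure saturations — precisely the interplay of topological flavours that the paper advertises as its methodological contribution.
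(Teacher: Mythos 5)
The overall skeleton you propose for $n\ge 7$ — take a pure saturation $g$, use Lemma~\ref{lem:kneser} to locate a small $1$-set $U$ of $g$, argue that $U$ is a singleton $\{t\}$, and then propagate — is the right shape, and your verification that the dictating variable is well-defined across pure saturations is essentially sound (sets of size $\le 3$ are untouched by pure saturation, so a static singleton $1$-set of $g$ is visible in $f$, though to nail uniqueness across saturations the paper compares values of a carefully chosen $3$-element set rather than relying on singletons alone, since which singletons are \emph{static} can depend on $g$). However, your proposal has a genuine and central gap, which you yourself flag: the step ruling out $|U|\in\{2,3\}$. You write that ``one combines the antipodal structure of $1$-sets inside any single minimal $2$-set $M$ \ldots with the purity consequence that disjoint $1$-sets force $\overline{U_1\cup U_2}$ to be a $2$-set of size at least $4$ \ldots and a careful counting argument,'' but this is precisely the missing content, not a proof. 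This is where the paper does its real work: Proposition~\ref{prop:minimal_counterexamples} establishes, via a double minimal-counterexample argument occupying Lemmas~\ref{lem:saturated_WLOG}--\ref{lem:existence}, that when $[n]$ is covered by $4$-element $2$-sets there are no disjoint static $1$-sets and a singleton static $1$-set exists, and only then does Lemma~\ref{lem:induction_step} lift this to all saturated polymorphisms by an induction on the number of variables \emph{not} contained in any $4$-element minimal $2$-set (collapsing such a variable onto each other variable in turn and matching the resulting dictating variables). Your direct approach does not engage with this decomposition, and there is no reason to expect a one-shot counting argument to succeed: the obstruction you identify (minimal $2$-sets need not share a common element) is real, and the paper's two-stage structure exists precisely to circumvent it. Separately, your $n\le 6$ argument is under-specified: invoking ``a finite linear system on the $0/1$-values of $f$'' is a heuristic, not a derivation (the constraint $(f(X),f(Y),f(Z))\in\mathrm{LO}_3$ is not a linear equation when $2$-sets may appear), and the paper's Lemma~\ref{lem:small_arity} is a genuine case analysis that your sketch does not reproduce. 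In short, the proposal identifies the correct target and the correct tools, but the hardest lemma — which is the bulk of the paper's proof — is acknowledged as an obstacle rather than resolved.
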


%\begin{exmp}\label{exmp:2}
%Consider the following function $g$ of arity 8:
%$$
%g(S)=
%\begin{cases}
			%2, & \text{if $|S\cap \{1,\ldots,7\}|\ge 4$}\\
            %1, & \text{if $1\le |S\cap \{1,\ldots,7\}|\le 3$ and $1\in S$}\\
            %[8\in S], & \text{otherwise}
		 %\end{cases}
%$$
%Let $f$ be the same as $g$, but $f(\{1,\ldots,7\})=0$ and $f(\{1,\ldots,8\})=1$.
%One can check that both $f$ and $g$ are polymorphisms, $g$ is the unique pure saturation of $f$.
%\end{exmp}
%Saturated polymorphisms satisfy the following `density' property which we shall use throughout.

%\begin{proof}
    %Let $Q$ be as in the statement. The set $Q\backslash t$ is boolean, so by saturation, $\overline{Q}\cup\left\{t\right\}$ must be a $2$-set. But $\overline{Q}$ is boolean, so none of its subsets are minimal $2$-sets. Thus, any minimal $2$-set that is a subset of $\overline{Q}\cup\left\{t\right\}$ must contain $t$, so $S$ as in the statement must exist.
%\end{proof}

\subsection{Proof of Theorem \ref{thm:main} and the hardness of $\PCSP\left(\LO_3,\LO_4\right)$ revisited}\label{sec:proof_main}

In the section, we show how Theorem \ref{thm:structure}, with the help of Theorem \ref{thm:babypcp} and some auxiliary lemmas, implies Theorem \ref{thm:main}. In this proof, we will need to distinguish cases depending on whether or not polymorphisms under consideration have a unique pure saturation, and the three technical lemmas below provide useful properties of these cases.

%First, we prove that the `special variable' $t$ singled out by Theorem \ref{thm:structure} is never a dummy variable.

%\begin{lemma}\label{lem:relevance_dictating}
%    Let $f$ be a $n$-ary polymorphism of $\left(\LO_2,\LO_3\right)$ with no small $2$-set and let $t$ be the dictating variable of $f$ if $n\leq 6$ or that of a pure saturation of $f$ if $n\geq 7$. Then $t$ is not a dummy variable of $f$.
%\end{lemma}

%\begin{proof}
%    Let $f$ and $t$ be as in the statement and suppose $f$ has arity $n\leq 6$. Suppose $n\geq 2$, for the statement is trivial if $n=1$. Then there is a boolean partition $\left(X,Y,Z\right)$ where $X=\left\{t\right\}$, $Y$ is non-empty and has at most $2$ elements and $Z$ has at most $3$ elements. By Theorem \ref{thm:structure}, we have $f\left(Y\right)=0$ and $f\left(X\cup Y\right)=1$, so it follows that $f$ depends on $t$ as required.

%    Consider now the case $n\geq 7$, and let $g$ be a pure saturation of $f$. Let $T$ be a minimal $2$-set of $g$ and thus, $\left|T\right|\geq 4.$ Let $a,b$ be distinct elements of $T\backslash\left\{t\right\}$. By Theorem \ref{thm:structure}, we have that $g\left(\{t,a,b\}\right)=1$ and $g\left(\{a,b\}\right)=0$. Since $g$ has no small $2$-set, it holds that $f\left(\{t,a,b\}\right)=g\left(\{t,a,b\}\right)=1$ and $f\left(\{a,b\}\right)=g\left(\{a,b\}\right)=0$ which shows that $f$ depends on $t$ as required.
%\end{proof}

%Next, we show that for although polymorphisms with no small $2$-sets are not always recoloured projections, they still have many projection-like properties.

\begin{lemma}\label{lem:non-unique}
Assume that $f$ is a $n$-ary polymorphism of $\left(\LO_2,\LO_3\right)$ with $n\geq 7$ and no small 2-sets. Suppose that $f$ does not have a unique pure saturation and let $t$ be the dictating variable of any pure saturation of $f$, as in Theorem \ref{thm:structure}. Then $f$ cannot have disjoint non-empty boolean sets $S,T\subseteq\left[n\right]$ such that $f(S)\ne [t\in S]$ and $f(T)\ne [t\in T]$.
\end{lemma}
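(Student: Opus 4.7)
The plan is to argue by contradiction: assume that $S, T$ are disjoint non-empty boolean sets of $f$ with $f(S) \ne [t\in S]$ and $f(T) \ne [t\in T]$, and show that any two pure saturations of $f$ must coincide, contradicting the hypothesis of non-uniqueness. Setting $R = [n] \setminus (S \cup T)$, exactly one of $S, T, R$ contains $t$. I would first treat the main case $t \in R$; the cases $t \in S$ and $t \in T$ are symmetric to each other and would be handled by a parallel argument with an additional split on $f(R) \in \{0, 2\}$, where $f(R) = 2$ reduces to the main case and $f(R) = 0$ requires minor adjustments exploiting that $R$ is then forced static in every pure saturation.

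In the main case, $[t\in S] = [t\in T] = 0$ forces $f(S) = f(T) = 1$, and applying the $\LO_3$-constraint to the partition $(S, T, R)$ forces $f(R) = 2$; so $R$ is a 2-set of $f$ with $|R| \ge 4$. For any pure saturation $g^*$ of $f$ with dictating variable $t^*$, the 2-set $R$ is preserved, and by Lemma~\ref{lem:2set} (no two disjoint 2-sets), $g^*(S) = g^*(T) = 1$. If $S$ were static in $g^*$, the recoloured-projection property would give $g^*(S) = [t^*\in S] = 1$, forcing $t^* \in S$; symmetrically for $T$. Since $S \cap T = \emptyset$, at most one of $S, T$ is static in $g^*$; for $g^* = g$ with $t \in R$, neither is, and Lemma~\ref{lem:boolean_recolouring} gives $T_g \subseteq R$.

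The heart of the argument uses the assumed non-uniqueness to fix a second pure saturation $g' \ne g$ of $f$, with dictating variable $t'$, and compares $g, g'$ on small sets. Pure saturations have no small 2-sets, so every singleton and every 2-element set is boolean in both saturations and takes its $f$-value there. The recoloured-projection property then gives $f(\{x\}) = [x = t^*]$ for every $x \in T_{g^*}$. Applying this to the pair $\{t, y\}$ with $y \in T_{g'}$, which is static in $g'$ and also in $g$ whenever $t \in T_g$ or $y \in T_g$, yields $1 = [t' \in \{t, y\}]$, forcing $t' \in \{t, y\}$. Iterating this over $y \in T_{g'}$ with the symmetric analysis swapping $g$ and $g'$ forces either $t = t'$, or the disjointness $T_g \cap T_{g'} = \emptyset$ together with $t \notin T_g$ and $t' \notin T_{g'}$; combined with the containments for $T_{g^*}$ derived earlier and $|T_g|, |T_{g'}| \ge 4$, this yields the desired contradiction for many configurations (e.g. when $|R|$ is too small to host two disjoint minimal 2-sets).

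The main obstacle is the residual case $t = t'$ with $g \ne g'$, in which both saturations share the dictating variable yet differ in their maximal intersecting family of 2-sets inside $R$, together with the part of the $t \ne t'$ analysis that survives the basic cardinality count when $n$ is large. Since Lemma~\ref{lem:unique_saturation2}'s converse (which would bound minimal 2-set sizes from below by $n-3$) requires uniqueness and is therefore unavailable, I would derive size bounds on minimal 2-sets directly by extracting polymorphism constraints from partitions such as $(Q, S, T \cup (R \setminus Q))$ and $(Q, T, S \cup (R \setminus Q))$, where $Q$ is a minimal 2-set of $g$; these would force additional 2-sets of $f$ whose upward closure, together with the rigidity of $f$ on singletons inside $R$ imposed by the common dictating variable, uniquely determines the 2-set family of every pure saturation, forcing $g = g'$ and completing the contradiction.
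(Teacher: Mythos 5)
Your proposal takes a genuinely different strategic route from the paper, but it has a gap in what turns out to be the only real case.

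First, a redundancy: by Theorem \ref{thm:structure} the dictating variable $t$ is \emph{the same for every} pure saturation of $f$. Your third paragraph works hard, via the singletons $f(\{x\})=[x=t^*]$ for $x\in T_{g^*}$ and the pairs $\{t,y\}$, to constrain how $t$ and $t'$ can differ; but $t=t'$ is already given. This means the branch you label ``the residual case $t=t'$ with $g\ne g'$'' is not residual at all --- it is the \emph{entire} case once $g'\ne g$ is fixed. And for that case your writeup offers only an intention (``I would derive size bounds on minimal 2-sets directly by extracting polymorphism constraints from partitions such as $(Q,S,T\cup(R\setminus Q))$ \dots uniquely determines the 2-set family of every pure saturation, forcing $g=g'$''), not an argument. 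There is no reason given why the 2-set family should be pinned down, and indeed the conclusion you are aiming for --- that under the hypothesis on $S,T$, any two pure saturations coincide --- is exactly the statement's contrapositive, so the plan is circular unless that step is actually carried out. Also, your side remark that $t\in S$ with $f(R)=2$ ``reduces to the main case'' is off: there you get $(f(S),f(T),f(R))=(0,1,2)$, structurally different from $(1,1,2)$, so the reduction is not immediate.

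The paper's proof is shorter and avoids these issues by not trying to show $g=g'$. It fixes any pure saturation $g$; since $S,T$ are disjoint, at least one of $g(S),g(T)$ is not $2$, say $g(S)=f(S)$. Then it only needs to show $S$ is static for $g$ (equivalently $S\cap T_g\ne\emptyset$), because that forces $g(S)=[t\in S]\ne f(S)=g(S)$ --- a direct contradiction. To get $S\cap T_g\ne\emptyset$, the paper exploits non-uniqueness concretely: there is a minimal 2-set $A$ of $g$ with neither $A$ nor $\overline A$ a 2-set of $f$, and one can arrange a saturation path ending with the recolouring of $A$. If $|\overline A|\le 3$ then $T_g=[n]$. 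If $|\overline A|\ge 4$ and $S\subset\overline A$ strictly, swap the last step to 2-recolour $\overline A$ instead of $A$; the resulting pure saturation $g'$ has $S$ static but $g'(S)=f(S)\ne[t\in S]$, contradicting that $t$ is its dictating variable. If $S=\overline A$, any other minimal 2-set $B$ of $g$ meets $\overline A=S$, so $S\cap T_g\ne\emptyset$ anyway. You derived the correct preliminary fact $T_g\subseteq R$ (in the $t\in R$ case), but the crucial move the paper makes --- redirecting the final step of the saturation path to manufacture a second pure saturation $g'$ in which $S$ becomes static --- is missing from your proposal, and it is precisely what closes the argument.
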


\begin{proof}
    Let $g$ be a pure saturation of $f$. Since the choice of $g$ is not unique, there is a minimal 2-set $A$ of $g$ such that neither $A$ nor $\overline A$ is a 2-set of $f$. Clearly, we have $|A|\ge 4$. It is easy to see that one can choose a saturation path from $f$ to $g$ so that 2-recolouring of $A$ is the last step on this path.  %\footnote{\textcolor{red}{This `without loss of generality' assumption can be justified as follows. Consider a saturation path $f=f_1,\hdots,f_l=g$ and let $X_i\subseteq\left[n\right]$ be a boolean set of $f_i$ that is a $2$-set of $f_{i+1}$ for $i\in\left[l-1\right]$. Using Lemma \ref{lem:2set}, one can show that given any permutation $\sigma:\left[l-1\right]\rightarrow\left[l-1\right]$, there is a saturation path $f=f'_1,\hdots,f'_l=g$ such that $X_{\sigma\left(i\right)}$ is a boolean set of $f_i'$ and a $2$-set of $f_{i+1}'$ for all $i\in\left[l-1\right]$.}}. 
    Suppose, for contradiction, that there exist disjoint non-empty $S,T$ that are boolean sets of $f$ for which $f\left(S\right)\neq \left[t\in S\right]$ and $f\left(t\right)\neq\left[t\in T\right]$. We cannot have $g(S)=g(T)=2$, since $S$ and $T$ are disjoint, so assume without loss of generality that $g(S)\ne 2$, and hence $g(S)=f(S)$. To get a contradiction, it suffices to show that $S$ is static for $g$, because then $g(S)=[t\in S]$ by Theorem \ref{thm:structure} and so $g(S)\ne f(S)$. By Lemma \ref{lem:boolean_recolouring}, it suffices to show that $S\cap T_g\ne \emptyset$.
    
    Consider first the case $|\overline A|\le 3$. Choose any element $a\in A$.
    Since $A$ is a minimal 2-set of $g$ and $g$ is saturated, we have that $A\backslash \{a\}$ is a boolean set of $g$, and its complement $\overline{A}\cup \{a\}$ is a 2-set of $g$, which must be minimal because it contains at most four (and hence exactly four) elements. Since both $A$ and $\overline{A}\cup \{a\}$ are minimal 2-sets of $g$, we have that $T_g=[n]$ and $S\cap T_g\ne \emptyset$.

    Assume now that $|\overline A|\ge 4$ and suppose, for contradiction, that $S\cap T_g=\emptyset$. Then $S\subseteq \overline{A}$. Suppose we have that $S\subset \overline{A}$ and consider the saturation $g'$ of $f$ such that $g'(X)=g(X)$ if $X\notin\left\{ A,\overline{A}\right\}$, $g'(A)=f(A)$ and $g'(\overline{A})=2$. Put otherwise, we follow the saturation path from $f$ to $g$, but in the last step we make a different choice: we 2-recolour $\overline{A}$ rather than $A$. Note that, since $|\overline A|\ge 4$, $g'$ is a pure saturation of $f$ and $\overline{A}$ is its minimal 2-set. Since $S\subset \overline{A}$, $S$ is a boolean set of $g'$, which is static by Lemma \ref{lem:boolean_recolouring}. But $g'(S)=f(S)\ne [t\in S]$, and this contradicts $t$ being a dictating variable of $g'$. So we must have that $S=\overline{A}$. Since $g$ is saturated and has no small 2-sets, it has a minimal 2-set $B$ distinct from $A$. Then $S\cap B\neq \emptyset$, and thus $S$ is a static boolean set of $g$.  
 \end{proof}

 \begin{lemma}\label{lem:hitting_set_projections}
     Let $f$ be a polymorphism of $\left(\LO_2,\LO_3\right)$ with arity $n\geq 7$ and no small $2$-set. Suppose further that $f$ has a unique pure saturation $g$. Let $t$ be the dictating variable of $g$, and let $X\subseteq\left[n\right]$ be any set such that $t\in X$ and $f\left(X\right)=1$. Then $X\cap Y\neq \emptyset$ for any set $Y\subseteq\left[n\right]$ such that $f\left(Y\right)=1$ and $Y\cap T_g\neq \emptyset$.
 \end{lemma}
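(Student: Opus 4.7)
The approach is proof by contradiction: assume $X\cap Y=\emptyset$ and set $Z=[n]\setminus(X\cup Y)$. The plan is to first show that $Z$ is a 2-set of $f$ (and hence of $g$); then to deduce that both $X$ and $Y$ must remain boolean sets of $g$ with the same values as under $f$; and finally to derive a contradiction from the dictator-like behaviour of $g$ on static boolean sets guaranteed by Theorem~\ref{thm:structure}.

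For the first step, apply the polymorphism constraint to the partition $(X,Y,Z)$ of $[n]$: the triple $(f(X),f(Y),f(Z))=(1,1,f(Z))$ must lie in $\mathrm{LO}_3$, which forces $f(Z)=2$. In particular $Z\ne\emptyset$ (the empty set is never a 2-set by Lemma~\ref{lem:2set}), so $Z$ is a 2-set of $f$; since saturation paths only recolour boolean sets to $2$, $Z$ is also a 2-set of $g$.

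For the second step, observe that $\overline X=Y\cup Z\supseteq Z$, so by upward closure of the saturated polymorphism $g$ we obtain $g(\overline X)=2$. The 2-sets of $g$ form an intersecting family (Lemma~\ref{lem:2set}), and since $X$ and $\overline X$ are disjoint, $X$ cannot simultaneously be a 2-set of $g$; hence $X$ is a boolean set of $g$. The symmetric argument applied to $\overline Y=X\cup Z$ shows that $Y$ is boolean in $g$ as well. Because boolean values are preserved along any saturation path (each step can only turn a boolean set into a 2-set), we conclude $g(X)=f(X)=1$ and $g(Y)=f(Y)=1$.

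Finally, Theorem~\ref{thm:structure} guarantees that $g$ is a recoloured projection with dictating variable $t$. By Lemma~\ref{lem:boolean_recolouring}, the boolean set $Y$ is static in $g$ precisely because $Y\cap T_g\neq\emptyset$, so $g(Y)=[t\in Y]$. But $t\in X$ together with $X\cap Y=\emptyset$ force $t\notin Y$, whence $g(Y)=0$, contradicting $g(Y)=1$ established in the previous step. The delicate part of the argument is the second step: one must carefully combine the upward closure with the intersecting-family property of the 2-sets of a saturated polymorphism to rule out $X$ and $Y$ turning into 2-sets of $g$. I expect this to be the only place requiring care; uniqueness of $g$ is not directly exploited in the argument itself and serves only to ensure that $t$ is well-defined in the hypothesis.
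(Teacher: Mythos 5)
Your proof is correct, and it takes a genuinely cleaner route than the paper's. The paper splits into cases $g(Y)=1$ and $g(Y)=2$; the second case requires Lemma~\ref{lem:unique_saturation2}, which is where the uniqueness of the pure saturation $g$ is actually exploited. You instead observe that the LO$_3$ constraint on the partition $(X,Y,Z)$ with $Z=[n]\setminus(X\cup Y)$ forces $Z$ to be a $2$-set of $f$, hence of $g$; upward closure of $g$ then makes $\overline{X}$ and $\overline{Y}$ both $2$-sets, so by the intersecting-family property $X$ and $Y$ must remain boolean in $g$, which pins down $g(Y)=f(Y)=1$ directly. This sidesteps Lemma~\ref{lem:unique_saturation2} entirely, and you correctly note the uniqueness hypothesis is then unused (since Theorem~\ref{thm:structure} already guarantees all pure saturations of $f$ share the same dictating variable, so $t$ is well-defined regardless). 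Your argument also shows the lemma holds under the weaker hypothesis that $g$ is \emph{any} pure saturation of $f$. One minor redundancy: you establish $g(X)=f(X)=1$ but never use it; only $g(Y)$ enters the final contradiction. The observation that $X$ must be boolean in $g$ is therefore dispensable, though harmless.
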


\begin{proof}
    Let $Y$ be a $1$-set of $f$ and suppose $Y\cap T_g$ is non-empty. We distinguish the cases $g\left(Y\right)=1$ and $g\left(Y\right)=2$. If $g\left(Y\right)=1$, then $Y$ is static in $g$ by Lemma \ref{lem:boolean_recolouring}, so $t\in Y$ by Theorem \ref{thm:structure} and thus, $X\cap Y$ is non-empty as required.
    Consider now the case $g\left(Y\right)=2$. Because $g$ is the unique pure saturation of $f$ and $g\left(Y\right)\neq f\left(Y\right)$, it follows from Lemma \ref{lem:unique_saturation2} that either $\left|Y\right|\geq n-3$ or $Y$ has a subset $Z\subset Y$ such that $f\left(Z\right)=2$. Suppose $Y$ has a subset $Z$ for which $f\left(Z\right)=2$. If $X$ is disjoint from $Y$ then there must be a partition $\left(W,X,Y\right)$ of $\left[n\right]$ with $f\left(W\right)=2$, and therefore $W,Z$ are disjoint $2$-sets - a contradiction to Lemma \ref{lem:2set}. If instead, $\left|Y\right|\geq n-3$ and $X$ is disjoint from $Y$ then there is a partition $\left(W,X,Y\right)$ of $\left[n\right]$ where $f\left(X\right)=f\left(Y\right)=1$ so $f\left(W\right)=2$, but $\left|W\right|\leq 3$, which is impossible since $f$ does not have any small $2$-sets. Thus, $X\cap Y$ is non-empty as required.
\end{proof}

\begin{lemma}\label{lem:saturation_commutes_sometimes}
    Let $f$ be a polymorphism of $\left(\LO_2,\LO_3\right)$ with arity $n\geq 7$ and no small $2$-set. Suppose $g$ is the unique pure saturation of $f$. Let $f'=f^\pi$ be a $m$-ary polymorphism for some minor map $\pi:\left[n\right]\rightarrow \left[m\right]$ with $m\geq 7$ and suppose $f'$ has no small $2$-set. If $g'$ is the unique pure saturation of $f'$, and if $g'$ has no minimal $2$-set with $m-3$ or more elements, then $\pi\left(t\right)=t'$, where $t,t'$ are the respective dictating variables of $g,g'.$
\end{lemma}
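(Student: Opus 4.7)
The plan is to prove the lemma in two stages. First, I will show that the $\pi$-minor $g^\pi$ of $g$ coincides with $g'$ as an $m$-ary polymorphism of $(\LO_2,\LO_3)$; then I will deduce $\pi(t)=t'$ by a small hitting argument that invokes Theorem~\ref{thm:structure}. Note that $g^\pi$ is automatically an $m$-ary polymorphism of $(\LO_2,\LO_3)$ (polymorphisms form a minion) and is saturated, because for every $X\subseteq[m]$ one of $\pi^{-1}(X)$ or $\pi^{-1}(\overline{X})$ is a $2$-set of $g$ by saturation of $g$.

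The crux of the first stage is to identify the $2$-sets of $g^\pi$ with those of $g'$. Using the assumption that every minimal $2$-set $B'$ of $g'$ satisfies $|B'|\leq m-4$, Lemma~\ref{lem:unique_saturation2} combined with the observation that a proper $2$-set subset of $B'$ in $f'$ is inherited by $g'$ (saturations preserve $2$-sets) and would contradict minimality, forces $B'$ to already be a $2$-set of $f'$. Pulling back, $\pi^{-1}(B')$ is a $2$-set of $f$ and hence of $g$; since $g$ is upward-closed (Lemma~\ref{lem:upwards_closure} applied to the saturated $g$), every $2$-set of $g'$ pulls back to a $2$-set of $g^\pi$. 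For the reverse inclusion, if some $Y$ were a $2$-set of $g^\pi$ but boolean in $g'$, then $\overline{Y}$ would be a $2$-set of $g'$ and hence of $g^\pi$, producing disjoint $2$-sets in $g^\pi$, contradicting Lemma~\ref{lem:2set}. Once the $2$-sets agree, applying Lemma~\ref{lem:unique_saturation2} to both $g'$ and $g$ on an arbitrary boolean $X$ yields $g'(X)=f'(X)=f(\pi^{-1}(X))=g(\pi^{-1}(X))$, so $g^\pi=g'$ as functions.

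For the second stage, pick any $u\in T_{g'}\setminus\{t'\}$; this set has at least three elements since $|T_{g'}|\geq 4$ (minimal $2$-sets of $g'$ have size $\geq 4$). Set $S=\{t',u\}$, a boolean static set of $g'$ with $g'(S)=[t'\in S]=1$. By the first stage, $g(\pi^{-1}(S))=g^\pi(S)=1$, so $\pi^{-1}(S)$ is a boolean $1$-set of $g$. To invoke Theorem~\ref{thm:structure} and conclude $t\in\pi^{-1}(S)$, I must verify that $\pi^{-1}(S)$ is static in $g$. Fix a minimal $2$-set $B'$ of $g'$ with $u\in B'$; the first stage gives that $\pi^{-1}(B')$ is a $2$-set of $g$, hence contains some minimal $2$-set $M$ of $g$. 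If $M$ were disjoint from $\pi^{-1}(S)$, then $\pi(M)\subseteq B'\setminus\{t',u\}\subsetneq B'$ (as $u\in B'\setminus\pi(M)$), while $\pi(M)$ is itself a $2$-set of $g^\pi=g'$ by upward-closedness of $g$, contradicting the minimality of $B'$. Hence $\pi^{-1}(S)$ is static, so $t\in\pi^{-1}(S)$, i.e. $\pi(t)\in\{t',u\}$. Varying $u$ over $T_{g'}\setminus\{t'\}$ (size $\geq 3$) forces $\pi(t)=t'$.

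The main obstacle is the first stage: without the size bound $|B'|\leq m-4$ supplied by the hypothesis, a minimal $2$-set of $g'$ could fail to be a $2$-set of $f'$ via the ``$|B'|\geq m-3$'' clause of Lemma~\ref{lem:unique_saturation2}, which would destroy the push-forward of $2$-set structure under $\pi^{-1}$ and with it the identification $g^\pi=g'$. The minimality argument in the second stage, which exploits $g^\pi=g'$ to localise a minimal $2$-set of $g$ inside $\pi^{-1}(S)$, is the other creative ingredient; it converts the global equality of polymorphisms into the pointwise statement $\pi(t)=t'$ about the dictating variables.
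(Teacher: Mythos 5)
Your proof is correct and follows essentially the same route as the paper: establish that $g^\pi$ and $g'$ agree on boolean sets (you prove the slightly stronger full equality $g^\pi=g'$), then pull back static $1$-sets of $g'$ to static $1$-sets of $g$ and conclude $\pi(t)=t'$ via Theorem~\ref{thm:structure}. The minor variations --- verifying staticness of $\pi^{-1}(S)$ directly through $T_g$ and Lemma~\ref{lem:boolean_recolouring} rather than by pulling back a boolean partition, and intersecting $\{t',u\}$ over varying $u\in T_{g'}\setminus\{t'\}$ rather than fixing two static $1$-sets meeting only in $\{t'\}$ --- are cosmetic.
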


Note that for a $m$-ary saturated polymorphism with no small $2$-set, all sets with $m-3$ or more elements are $2$-sets, so it suffices to show the above statement assuming $g'$ has no minimal $2$-sets with exactly $m-3$ elements. 
\begin{proof}
    First we show that $g'$ and $g^\pi$ coincide on all boolean sets. Let $X$ be a boolean set of $g'$, so we have $g'\left(X\right)=f'\left(X\right)$. Since $g'$ is the unique pure saturation of $f'$ and $g'\left(X\right)=f'\left(X\right)$, it follows from Lemma \ref{lem:unique_saturation2} that either $\left|\overline{X}\right|\geq m-3$ (or equivalently $\left|X\right|\leq 3$) or there is some $Z\subseteq\overline{X}$ such that $f'\left(Z\right)=2$. Consider first the case $\left|X\right|\leq 3$. Since $g'$ has no minimal $2$-set of size $m-3$ or larger, and $g'\left(\overline{X}\right)=2$, it must hold that there is some $Z\subset \overline{X}$ such that $Z$ is a $2$-set of $g'$ and $\left|Z\right|\leq m-4$. But $Z$ is therefore a $2$-set of $f'$, and thus, $\pi^{-1}\left(Z\right)$ is a $2$-set of $f$. Since $\pi^{-1}\left(X\right)$ is disjoint from $\pi^{-1}\left(Z\right)$, it follows from Lemma \ref{lem:2set} that $\pi^{-1}\left(X\right)$ is a boolean set of both $f$ and $g$ and, in particular, $f\left(\pi^{-1}\left(X\right)\right)=g\left(\pi^{-1}\left(X\right)\right)$. But $g'\left(X\right)=f'\left(X\right)=f^\pi\left(X\right)=f\left(\pi^{-1}\left(X\right)\right)$ and $g^\pi\left(X\right)=g\left(\pi^{-1}\left(X\right)\right)$, so $X$ is a boolean set of $g^\pi$ with $g^\pi\left(X\right)=g'\left(X\right)$. For the case where $Z\subseteq \overline{X}$, we proceed with an argument verbatim as above and conclude that $X$ is a boolean set of $g^\pi$ and $g^\pi\left(X\right)=g'\left(X\right)$. In particular, this means that for all boolean sets $X$ of $g'$, we have $g'\left(X\right)=g\left(\pi^{-1}\left(X\right)\right)$ as claimed. 

    Let $\left(W,X,Y\right)$ be a partition of $\left[m\right]$ that is boolean for $g'$. From the above argument it follows that $\left(W,X,Y\right)$ is also a boolean partition for $g^\pi$ and hence, $\left(\pi^{-1}\left(W\right),\pi^{-1}\left(X\right),\pi^{-1}\left(Z\right)\right)$ is a partition of $\left[n\right]$ that is boolean for $g$. In particular, for any static $1$-set $X$ of $g'$, it holds that $\pi^{-1}\left(X\right)$ a static $1$-set of $g$. By Theorem \ref{thm:structure}, we may choose static $1$-sets of $g'$, say $A$ and $B$, such that $A\cap B=\left\{t'\right\}$. Then since $\pi^{-1}\left(A\right),\pi^{-1}\left(B\right)$ are static $1$-sets of $g$ it follows from Theorem \ref{thm:structure} that $t\in \pi^{-1}\left(A\right)\cap \pi^{-1}\left(B\right)$. Thus, $t\in\pi^{-1}\left(t'\right)$ or, equivalently, $\pi\left(t\right)=t'$ as required.
\end{proof}

We now apply the properties proven in Lemmas \ref{lem:non-unique}, \ref{lem:hitting_set_projections} and \ref{lem:saturation_commutes_sometimes} to show Theorem \ref{thm:main}.

\begin{proof}[Proof of Theorem \ref{thm:main}]

We apply Theorem \ref{thm:babypcp} to $\Pol\left(\LO_2,\LO_3\right)$ with $k=3,l=4$.

For each polymorphism $f$ of $\left(\LO_2,\LO_3\right)$, we choose $I(f)$ to be an arbitrary small 2-set of $f$ if there is one. Otherwise, if $f$ has arity $n\leq 6$ then $f$ is a recoloured projection and we take $I\left(f\right)=\left\{t\right\}$ where $t$ is the dictating variable of $f$. If $f$ has arity $n\geq 7$ and no small $2$-set, consider a pure saturation $g$ of $f$ and let $t$ be its dictating variable. In this case, we take $I\left(f\right)=\left\{t\right\}$ if $f\left(\left\{t\right\}\right)=1$; otherwise, we take $I\left(f\right)$ to be any $2$-element $1$-set containing $t$ (the existence of such sets is an easy consequence of Theorem \ref{thm:structure}). It may seem that it would be more natural to set $l=3$ and to let $I(f)=\{t\}$ for all polymorphisms $f$ without small 2-sets. We will show in Example \ref{exmp:1} below that this would not work.

First, observe that, for any polymorphisms $f$ and $h$ such that $h=f^\pi$, the set $\pi^{-1}\left(I\left(h\right)\right)$ is never empty.
Indeed if $I\left(h\right)$ is a $2$-set, then $\pi^{-1}\left(I\left(h\right)\right)$ is a $2$-set of $f$, and so is non-empty by Lemma \ref{lem:2set}. Assume that the arity of $h$ is at least 2 and $I\left(h\right)$ contains the dictating variable $t$ of $h$ or that of a pure saturation of $h$. Then $\pi^{-1}\left(\{t\}\right)$ is non-empty; for otherwise, we would have $h(X\cup \{t\})=h(X)$ for every $X$. But it follows from Theorem \ref{thm:structure} that there exists $a\ne t$ such that $h(\{a,t\})\ne h(\{a\})$. Indeed, $a$ can be any element different from $t$ if the arity of $h$ is at most 6 or, if the arity of $h$ is at least $7$, we can take $a$ to be any element different from $t$ belonging to a minimal 2-set of a pure saturation of $h$.

%\textcolor{red}{Observe that when $f$ has no small $2$-set, then $I\left(f\right)$ as defined is always a $1$-set containing $t$. Moreover, note that for any $\pi$ and $h=f^\pi$, the set $\pi^{-1}\left(I\left(h\right)\right)$ is never empty. Indeed, if $I\left(h\right)$ is a $2$-set, then $\pi^{-1}\left(I\left(h\right)\right)$ is a $2$-set so is non-empty by Lemma \ref{lem:2set}. If $I\left(h\right)$ contains the dictating variable of $h$ or that of a pure saturation thereof, then $\pi^{-1}\left(I\left(h\right)\right)$ is non-empty by Lemma \ref{lem:relevance_dictating} (see the discussion in Section \ref{sec:pcsp} on dummy variables).}

Consider a  minor chain $\left(f_1,\pi_{12},f_2,\pi_{23},f_3, \pi_{34},f_4\right)$ in $\Pol\left(\LO_2,\LO_3\right)$. 
If (at least) two of $f_1,f_2,f_3,f_4$, say $f_i$ and $f_j$ with $i<j$, have small 2-sets, then $\pi_{ij}^{-1}(I(f_j))$ is a 2-set of $f_i$, and hence $\pi_{ij}^{-1}(I(f_j))\cap I(f_i)\ne \emptyset$ by Lemma \ref{lem:2set}, as required.

Assume now that no more than one polymorphism in the chain has a small $2$-set. Suppose $f_i,f_j,f_k$ with $1\leq i<j<k\leq 4$ have no small $2$-set.  
Let the arities of the operations be $n_i,n_j,$ and $n_k$, respectively, and let $t_i,t_j,t_k$ be the dictating variables contained in the choice sets $I\left(f_i\right), I\left(f_j\right)$, and $I\left(f_k\right)$, respectively. 

First we deal with the case when either $n_i\le 6$ or $n_i\ge 7$ and $f_i$ does not have a unique pure saturation. We show that we have $\pi_{ij}(t_i)=t_j$, which implies the desired result.
Assume, for contradiction, that $\pi_{ij}(t_i)=s\ne t_j$. We claim that there exist two disjoint subsets $S'$ and $T'$ of $[n_j]$ such that $s\in S'$, $f_j(S')=0$ and  $t_j\in T'$, $f_j(T')=1$. Indeed, if $n_j\le 6$ then we can take $S'=\{s\}$, $T'=\{t_j\}$, and this will work because $f_j$ is a recoloured projection with dictating variable $t_j$. If $n_j\ge 7$ then one can take any pure saturation $g_j$ of $f_j$ and choose distinct $a,b\in T_{g_j}\backslash \{s,t_j\}$ (this is possible because $\left|T_{g_j}\right|\geq 4$). Then by choosing $S'=\{s,a\}$ and $T'=\{t_j,b\}$, it is easy to see that these have the required properties by using Theorem \ref{thm:structure}. Now let $S=\pi^{-1}_{ij}\left(S'\right)=\{x\in [n_i]\mid \pi_{ij}(x)\in S'\}$ and similarly $T=\pi^{-1}_{ij}\left(T'\right)=\{x\in [n_i]\mid \pi_{ij}(x)\in T'\}$.
Observe that $S$ and $T$ are disjoint (because $S'$ and $T'$ are disjoint) and non-empty (because $t_i\in S$ and $\pi^{-1}_{ij}(t_j)\subseteq T$). Moreover, $f_i(S)=f_j(S')=0\ne [t_i\in S]$ and $f_i(T)=f_j(T')=1\ne [t_i\in T]$. 
 If $n_i\le 6$ then this cannot happen because $f_i$ is a recoloured projection with dictating variable $t_i$, and if $n_i\ge 7$ then this contradicts Lemma \ref{lem:non-unique}. 

Clearly, if either $n_j\le 6$ or $n_j\ge 7$ and $f_j$ does not have a unique pure saturation, the same reasoning as above applies to $f_j$ and $f_k$.

It remains to consider the case when both $f_i$ and $f_j$ have arity at least 7 and both have a unique pure saturation. Let $g_j$ denote the unique saturation of $f_j$.
%Suppose $f_i,f_j,f_k$ with $1\leq i<j<k\leq 4$ have no small $2$-set (recall that we are assuming that at most one polymorphism in the chain has a small $2$-set) and that $n_i\geq 7$ and $g_i$ is the unique pure saturation of $f_i$ without small $2$-set. Also suppose that $f_j$ has unique saturation $g_j$ with no small $2$-set and $n_j\geq 7$ (otherwise, we argue as the previous case that $I\left(f_j\right)\cap \pi_{jk}^{-1}\left(I\left(f_k\right)\right)$ is non-empty). Let $t_i,t_j$ be the dictating variables of $g_i,g_j$ respectively.
If $g_j$ has no minimal $2$-set with $n_j-3$ elements, then by Lemma \ref{lem:saturation_commutes_sometimes}, we have that $\pi_{ij}\left(t_i\right)=t_j$ and we are done. If $g_j$ has a minimal $2$-set $T$ with $n_j-3$ elements, since $g_j$ is saturated, this implies that for any $x\in T$, the set $\left(\left[n_j\right]\backslash T\right) \cup \left\{x\right\}$ is a minimal $2$-set of $g_j$ as it has exactly $4$ elements. We therefore have that $T_{g_j}=\left[n_j\right]$. 
Then $\pi_{jk}^{-1}\left(I\left(f_k\right)\right)$ is a (non-empty) $1$-set of $f_j$, and, obviously, $\pi_{jk}^{-1}\left(I\left(f_k\right)\right)\cap T_{g_j}\neq \emptyset$. Lemma \ref{lem:hitting_set_projections} implies that $\pi_{jk}^{-1}\left(I\left(f_k\right)\right)\cap I\left(f_j\right)$ is non-empty, as required.
%But $f_j$ has a unique pure saturation $g_j$ for which $T_{g_j}=\left[n_j\right]$ and thus, either $\pi_{jk}^{-1}\left(I\left(f_k\right)\right)=\emptyset$ or $\pi_{jk}^{-1}\left(I\left(f_k\right)\right)\cap T_{g_j}\neq \emptyset$. But $\pi_{jk}^{-1}\left(I\left(f_k\right)\right)$ cannot be empty, so it must be the case that $\pi_{jk}^{-1}\left(I\left(f_k\right)\right)\cap T_{g_j}\neq \emptyset$. Since $\pi_{jk}^{-1}\left(I\left(f_k\right)\right)$ is a $1$-set of $f_j$, by Lemma \ref{lem:hitting_set_projections} it follows that $\pi_{jk}^{-1}\left(I\left(f_k\right)\right)\cap I\left(f_j\right)$ is non-empty as required.
\end{proof}

We now give an example showing that setting $I(f)$ to be $\{t\}$ for each polymorphism $f$ without small 2-sets would create an issue.
%a polymorphisms whose pure saturation is unique. We will later use it to justify our choice of parameters when applying the hardness condition in Theorem \ref{thm:babypcp}.

\begin{exmp}\label{exmp:1}
    Consider the following function $g$ of arity 9:
$$
g(S)=
\begin{cases}
			2, & \text{if $|S\cap \{3,\ldots,9\}|\ge 4$}\\
            1, & \text{if $1\le |S\cap \{3,\ldots,9\}|\le 3$ and $1\in S$}\\
            0, & \text{otherwise}
		 \end{cases}
$$
Let $f$ be the same as $g$, but $f(\{3,\ldots,9\})=f(\{1,\ldots,9\})=1$.
One can check that both $f$ and $g$ are polymorphisms of $\left(\LO_2,\LO_3\right)$, and $g$ is the only pure saturation of $f$.

Clearly, the first variable is dictating for $g$, but choosing 
$I(f)=\{1\}$ would lead to a problem. Let $f'=f^\pi$, where $\pi:\left[9\right]\rightarrow \left[2\right]$ is such that $\pi\left(i\right)=1$ for $i\in\left\{1,2\right\}$ and $\pi\left(i\right)=2$ for $i\in\left\{3,\hdots,9\right\}$. It is easy to check that $f'$ is the projection on the second coordinate, so  $I(f')=\{2\}$. But then $\pi^{-1}(I(f'))\cap I(f)=\emptyset$. Moreover, it is easy to see that one can select another minor $f''$ of $f$ such that $f,f'',f'$ is a chain of minors and $f''$ has a small 2-set. 
\end{exmp}

%Let us comment on our choice of $I\left(f\right)$ when $f$ has no small $2$-set, arity $n\geq 7$ and $f\left(\left\{t\right\}\right)=0$. Example \ref{exmp:1} shows that choosing $I\left(f\right)=\left\{t\right\}$ may not suffice to use the hardness condition in Theorem \ref{thm:babypcp}. Indeed, let $f,g$ be as in the example and consider a mapping $\pi:\left[9\right]\rightarrow \left[m\right]$ for $m\geq 2$, where $\pi\left(i\right)=1$ for $i\in\left\{1,2\right\}$ and $\pi\left(i\right)=2$ for $i\in\left\{3,\hdots,9\right\}$. Note that $g$ is the unique pure saturation of $f$ and is a recoloured projection with $1$ as dictating variable, but $f^\pi$ is a $m$-ary projection with $2$ as dictating variable and $\pi\left(1\right)\neq 2$. However, as predicted by Lemma \ref{lem:hitting_set_projections}, $\pi^{-1}\left(2\right)$ intersects all $1$-sets of $f$ containing the variable $1$. \textcolor{red}{I'm not sure if this paragraph is just confusing. I think we're OK without it, but it might be helpful for the reader to have some examples.}

The hardness of $\PCSP\left(\LO_3,\LO_4\right)$ follows easily from Theorem \ref{thm:main}. For those familiar with pp-definitions, the reduction from $\PCSP\left(\LO_2,\LO_3\right)$ follows because the formula $\exists u_x,u_y,u_z R(x,y,z) \wedge R(x,x,u_x) \wedge R(y,y,u_y) \wedge R(z,z,u_z)$ pp-defines $\left(\LO_2,\LO_3\right)$ from $\left(\LO_3,\LO_4\right)$. For others, we spell this reduction out in elementary terms. 

\begin{corollary}[Theorem 1 in \cite{fnotw2024}]
    $\PCSP\left(\LO_3,\LO_4\right)$ is \NP-hard.
\end{corollary}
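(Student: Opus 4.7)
The plan is to reduce $\PCSP(\LO_2,\LO_3)$, which is \NP-hard by Theorem \ref{thm:main}, to $\PCSP(\LO_3,\LO_4)$ in polynomial time, realising the pp-definition sketched just before the corollary as a hypergraph gadget. The driving observation is that in $\LO_k$ a constraint of the form $R(v,v,u)$ forces $h(v) \le k-2$ for any homomorphism $h$: having two copies of $v$ in the triple means the unique maximum can only be attained at $u$, so $h(u) > h(v)$, and since $h(u) \le k-1$ we obtain $h(v) \le k-2$.

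Given an instance $\X = (V,E)$ of $\PCSP(\LO_2,\LO_3)$, I will build $\X'$ as follows. For every hyperedge $e = (x,y,z) \in E$ and every $w \in \{x,y,z\}$, introduce a fresh vertex $u_w^e$. The hyperedges of $\X'$ are the original triple $(x,y,z)$ together with the three padding triples $(x,x,u_x^e)$, $(y,y,u_y^e)$, $(z,z,u_z^e)$ for every $e \in E$. The size of $\X'$ is linear in that of $\X$.

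I will then verify both directions of the reduction. For the YES-instance direction, given $h : \X \to \LO_2$, I extend it to $h' : \X' \to \LO_3$ by setting $h'(v) := h(v) \in \{0,1\}$ on $V$ and $h'(u_w^e) := 2$ on every padding vertex. Each original triple is mapped to a member of $\mathrm{LO}_2 \subseteq \mathrm{LO}_3$, and each padding triple becomes $(h(w),h(w),2)$, whose unique maximum is $2$, so it lies in $\mathrm{LO}_3$. For the NO-instance direction, given $h' : \X' \to \LO_4$, the key observation is that for any $v \in V$ appearing in some hyperedge, the padding triple $(v,v,u_v^e)$ mapping into $\mathrm{LO}_4$ forces $h'(v) \le 2$, as above. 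Defining $h(v) := h'(v)$ on such $v$ (and $h(v) := 0$ on any isolated vertex) yields $h : V \to \{0,1,2\}$, and each hyperedge $(x,y,z) \in E$ is then sent into $\mathrm{LO}_4 \cap \{0,1,2\}^3 = \mathrm{LO}_3$, giving the desired homomorphism $\X \to \LO_3$.

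No real difficulty arises: the reduction is entirely elementary. The only point of care is the verification that the one-vertex padding gadget behaves correctly in both $\LO_3$ and $\LO_4$, namely that it caps the image of $v$ at $k-2$, which is precisely the content of the pp-definition recalled just above the corollary.
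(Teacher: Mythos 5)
Your proof is correct and takes essentially the same approach as the paper: the same one-vertex padding gadget $(v,v,u)$ that caps $h(v)$ at $k-2$, with the same two-directional verification. The only cosmetic differences are that you introduce a fresh padding vertex per vertex-hyperedge incidence rather than one per vertex, and you colour all padding vertices $2$ instead of $h(v)+1$ in the completeness step; both variants work equally well.
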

\begin{proof}
    We give a simple (gadget) reduction from $\PCSP\left(\LO_2,\LO_3\right)$ to $\PCSP\left(\LO_3,\LO_4\right)$. Given an instance $\X$ of $\PCSP\left(\LO_2,\LO_3\right)$, modify it by adding a new element $n_x$ for every element $x$ of $\X$ and adding the tuple $(x,x,n_x)$ to the relation of $\X$. Denote the obtained instance by $\X'$. If $h:\X\rightarrow \LO_2$ then there is a homomorphism $h':\X'\rightarrow \LO_3$ defined as $h'(x)=h(x)$ for all old elements and $h'(n_x)=h(x)+1$ for all new elements. It is easy to see that $h'$ is a homomorphism. 
    Assume that we can efficiently find a homomorphism $h'':\X'\rightarrow \LO_4$. Observe that all the old elements (from $\X$) must be mapped to $\{0,1,2\}$ because of the constraints added when building $\X'$. Thus the restriction of $h''$ to $\X$ is a homomorphism from $\X$ to $\LO_3$.
\end{proof}

\subsection{Proof of Theorem \ref{thm:structure}}\label{sec:proof_of_structure}

The proof of Theorem \ref{thm:structure} can be broken down into the following steps. We warm up by proving it for polymorphisms of arity at most $6$ (Lemma \ref{lem:small_arity}). Next, we treat the case where $\left[n\right]$ can be written as the union of $4$-element $2$-sets (Proposition \ref{prop:minimal_counterexamples} and Lemma \ref{lem:structure_4_element}). We use this as a basis case for an induction argument to show that Theorem \ref{thm:structure} holds for saturated polymorphisms (Lemma \ref{lem:induction_step}). As it turns out, this implies the statement for all polymorphisms of arity $7$ and above thus completing the proof.

\begin{lemma}\label{lem:small_arity}
    Let $f$ be polymorphism of $\left(\LO_2,\LO_3\right)$ of arity $n\leq 6$. Then either 
    $f$ has a small 2-set or there is a unique $t\in [n]$ such that $f$ is a recoloured projection with dictating variable $t$.
\end{lemma}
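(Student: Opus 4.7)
The plan is to find a minimal 1-set of $f$, show it must be a singleton, identify that singleton as the unique dictating variable, and verify the recoloured-projection condition on all static boolean sets. Throughout I would use two key facts: in any boolean partition $(X,Y,Z)$, the relation $\mathrm{LO}_3$ forces exactly one of the three sets to be a 1-set and the other two to be 0-sets; and, by hypothesis, no subset of $[n]$ of size at most $3$ is a 2-set.

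\textbf{Step 1: a minimal 1-set is a singleton.} By Lemma \ref{lem:kneser}, $f$ has some 1-set $A$ with $|A|\le 3$; I take $A$ minimal by inclusion. Every proper subset of $A$ is then a 0-set (neither a 1-set by minimality nor a 2-set by hypothesis). Applying $\mathrm{LO}_3$ to the partition $(\{a\}, A\setminus\{a\}, [n]\setminus A)$ for $a\in A$ forces $f([n]\setminus A)\in\{1,2\}$. If it equals $2$, then $[n]\setminus A$ is a 2-set of size $n-|A|$, which is small in every case except $n=6$ and $|A|=2$; if it equals $1$, then $A$ and $[n]\setminus A$ are disjoint 1-sets, so the partition $(A,[n]\setminus A,\emptyset)$ forces $f(\emptyset)=2$, contradicting Lemma \ref{lem:2set}. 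The residual case $n=6$, $A=\{a,b\}$, is dispatched by examining partitions $(\{a\},\{b,c\},\{d,e,f\})$ with $\{c,d,e,f\}=[6]\setminus A$: either some 3-subset $\{d,e,f\}$ is a 1-set disjoint from $A$ (complement $\{c\}$ is a small 2-set), or each $\{b,c\}$ and each $\{a,c\}$ is a 1-set (then $\{a,c\}$ and $\{b,d\}$ are disjoint 1-sets with complement $\{e,f\}$ a small 2-set).

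\textbf{Step 2: the singleton element dictates.} Setting $A=\{t\}$, I would show that every 1-set of $f$ contains $t$. For any 1-set $B$ with $t\notin B$, the partition $(\{t\},B,[n]\setminus(\{t\}\cup B))$ forces the third part to be a 2-set of size $n-1-|B|$, which is small whenever $n\le 5$ or ($n=6$ and $|B|\ge 2$). The only surviving case is $n=6$ and $B=\{s\}$ a singleton, giving disjoint 1-singletons $\{s\},\{t\}$ with $f([6]\setminus\{s,t\})=2$. Here the partitions $(\{s\},\{t,u\},\{v,w,x\})$ and $(\{t\},\{s,u\},\{v,w,x\})$ (with $\{u,v,w,x\}=[6]\setminus\{s,t\}$) force $f(\{s,u\})=f(\{t,u\})=0$ and that every 3-subset of $[6]\setminus\{s,t\}$ is a 0-set. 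The partition $(\{s,u\},\{t,v\},\{w,x\})$ then forces $f(\{w,x\})=1$, making $\{w,x\}$ a 1-set disjoint from $\{s\}$ whose complement $\{t,u,v\}$ must be a 2-set of size $3$ --- a small 2-set, contradiction. Consequently every 1-set contains $t$; for a static 0-set $S$, Lemma \ref{lem:static} yields a boolean partition $(S,Y,Z)$ in which one of $Y,Z$ is a 1-set containing $t$, so $t\notin S$. Thus $f(S)=[t\in S]$ for every static boolean $S$.

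\textbf{Step 3: uniqueness and main obstacle.} For uniqueness I would show $\{t\}$ is itself static: for $n\le 4$ the partition $(\{t\},[n]\setminus\{t\},\emptyset)$ works since $[n]\setminus\{t\}$ is neither a 2-set (too small) nor a 1-set (it does not contain $t$); for $n\in\{5,6\}$ one splits $[n]\setminus\{t\}$ into two parts of sizes at most $3$ not containing $t$, each forced to be a 0-set by the same reasoning, giving a boolean partition. Then any competitor $t'\ne t$ would require $f(\{t\})=[t'\in\{t\}]=0$, contradicting $f(\{t\})=1$. The main obstacle in this plan is the tight $n=6$ numerology occurring in both Step 1 (with $|A|=2$) and Step 2 (with $|B|=1$): in each case the arity is just large enough to avoid an immediate small-2-set contradiction, so one must chain several partition deductions together before finally forcing a small 2-set.
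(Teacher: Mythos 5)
Your proposal follows essentially the same strategy as the paper: locate a singleton $1$-set, argue that it governs all static boolean sets, and handle a residual $n=6$ case by chaining partition constraints. The two points of divergence are mild and interesting: you obtain a small $1$-set from Lemma~\ref{lem:kneser}, whereas the paper gets a non-empty $1$-set of size at most $2$ directly from a boolean partition of $[n]$ into non-empty $\le 2$-element parts; and your Step~2 proves the cleaner (slightly stronger) intermediate claim that \emph{every} $1$-set contains $t$, rather than only the static ones.

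There is, however, a genuine gap: your Step~1 silently assumes the minimal $1$-set $A$ is non-empty, equivalently that $f(\emptyset)=0$. From the partition $(\emptyset,\emptyset,[n])$ one only gets $f(\emptyset)<f([n])$, so a priori $f(\emptyset)=1$ is allowed, in which case $A=\emptyset$, the phrase ``for $a\in A$'' is vacuous, and Step~1 never concludes that $A$ is a singleton. The same hole resurfaces in Step~2 for $B=\emptyset$ with $n\in\{5,6\}$, where $[n]\setminus\{t\}$ has $4$ or $5$ elements and is therefore not a small $2$-set. In fact $f(\emptyset)=1$ \emph{is} impossible when there are no small $2$-sets and $n\le 6$, but this needs an argument you have not given: for $n\le 3$ the forced $f([n])=2$ is already a small $2$-set; for $n=4$ all proper non-empty subsets are $0$-sets via $(\emptyset,X,\overline X)$, and $(\{1\},\{2\},\{3,4\})$ gives $(0,0,0)$; for $n\in\{5,6\}$ one first observes that all sets $X$ with $|X|$ and $|\overline X|$ both at most $3$ are $0$-sets, then uses a balanced partition to produce a small $1$-set and chains to either $(0,0,0)$ or two disjoint singleton $1$-sets, whence your own Step~2 $n=6$ argument finishes the job. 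The paper avoids the issue entirely by extracting a \emph{non-empty} $1$-set from a boolean partition into non-empty parts; you should either do the same or explicitly rule out $f(\emptyset)=1$ first.
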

%\textcolor{red}{Do we really need to assume that $f$ is upward closed?}
%^\textcolor{blue}{I suppose not, will return to this after dealing with the proof of Thm.1.1}
\begin{proof}
    Pick a polymorphism $f$ of arity $n\leq 6$, and suppose that all of its $2$-sets have $4$ or more elements. We show that these polymorphisms are recoloured projections. We may assume that $4\leq n\leq 6$ for if $n\leq 3$ then $f$ has no 2-sets at all, and so is a polymorphism of $\LO_2$, hence a projection. We therefore have a boolean partition $\left(X,Y,Z\right)$ of $\left[n\right]$ where all parts are non-empty and have at most $2$ elements, so $f$ has a $1$-set of size $1$ or $2$.
    
    Suppose that $X=\left\{1,2\right\}$ is a $1$-set and that every singleton is a $0$-set. This means that there cannot be disjoint $1$-sets, for otherwise, there would be a $2$-set of size $2$ or less. Considering the partition $\left(\left\{1,3\right\},\left\{2\right\},\left[n\right]\backslash\left\{1,2,3\right\}\right)$, we note that $\left\{1,3\right\}$ must be a $1$-set (because $\left[n\right]\backslash \{1,2,3\}$ is disjoint from $\{1,2\}$ and so can only be a 0-set). Now consider any boolean partition $\left(\left\{1\right\},A,B\right)$ where $2\in A$ and $3\in B$. Each of $A$ and $B$ is disjoint from some $1$-set so they both must be $0$-sets. But $\left\{1\right\}$ is then a $1$-set - a contradiction to our assumption on singletons. Hence, we must have a singleton $1$-set $\left\{i\right\}$ for some $i\in [n]$. Every boolean set $S$ which has at least two elements, but does not contain $\{i\}$,  takes part in a boolean partition of $\left[n\right]$ involving $\{i\}$, and must therefore be a $0$-set. Assume that there is another singleton 1-set $\{i'\}$. Then every boolean set $S$ which has at least two elements, but does not contain $i'$, is a 0-set. We cannot have $n\le 5$, as otherwise $[n]\setminus \{i,i'\}$ would be a small 2-set. Then $n=6$ and we can have a partition of $[n]$ into three 2-element sets, each of which does not contain at least one of $i$ and $i'$.
    All sets in this partition would be 0-sets, which is impossible. Finally, every boolean set $S$ with $i\in S$ that is part of a boolean partition must be a $1$-set, since the boolean sets participating in a partition with $S$ must be $0$-sets. From this, it follows that $f$ has the required form. %predicted by Theorem \ref{thm:structure}, with any saturation of $f$ with no $2$-set of size $3$ or less having dictating variable $i$. 
\end{proof}

From now on, assume that the arity $n$ of $f$ is at least 7.

We now look at polymorphisms $f$ for which $\left[n\right]$ is the union of $2$-sets of size $4$. That these have the structure described in Theorem \ref{thm:structure} will follow easily from the following statement.
\begin{proposition}\label{prop:minimal_counterexamples}
    Let $f$ be a $n$-ary polymorphism of $\left(\LO_2,\LO_3\right)$ such that $\left[n\right]$ is the union of $4$-element $2$-sets. If $f$ has no small $2$-sets then:
    \begin{enumerate}
        \item $f$ has no disjoint static $1$-sets.
        \item $f$ has a singleton static $1$-set.
    \end{enumerate}
\end{proposition}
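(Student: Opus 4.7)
The plan is to handle the two claims separately, both via partition arguments tailored to the hypothesis that $[n]$ is covered by $4$-element $2$-sets and $f$ has no small $2$-sets.

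For Part 1, I would suppose for contradiction that $X_1,X_2$ are disjoint static $1$-sets. First, the partition $(X_1,X_2,W)$ with $W:=[n]\setminus(X_1\cup X_2)$ gives the triple $(1,1,f(W))\in\mathrm{LO}_3$; since $(1,1,c)$ has a unique maximum only when $c=2$, we conclude that $W$ is a $2$-set, and hence $|W|\ge 4$ by the no-small-$2$-sets hypothesis. By Lemma \ref{lem:static}, each $X_i$ sits in a boolean partition $(X_i,Y_i,Z_i)$, and since $(1,f(Y_i),f(Z_i))$ must be a boolean triple with a unique maximum, necessarily $f(Y_i)=f(Z_i)=0$. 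I would then pick a covering $4$-element $2$-set $Q$, which must meet $W$ by Lemma \ref{lem:2set}, and split along cases according to how $Q$ intersects $X_1$, $X_2$, and $W$. Combining the decomposition $\overline{X_1}=Y_1\cup Z_1$ (and the analogous one from $X_2$) with the placement of $Q$, the goal is to exhibit a partition of $[n]$ in which one part is a proper subset of $Q$ (hence has at most $3$ elements) yet is forced by the $\mathrm{LO}_3$ constraint on the remaining two parts to be a $2$-set, contradicting the no-small-$2$-sets hypothesis.

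For Part 2, I would start by invoking Lemma \ref{lem:kneser}: since $f$ has no small $2$-sets, it must admit a $1$-set $X$ with $|X|\le 3$. The goal is to produce a \emph{singleton} $1$-set that is in addition \emph{static}. I would first argue, by inspecting the partitions $(S,X\setminus S,\overline X)$ for proper non-empty $S\subsetneq X$, that either $X$ itself can be replaced by a singleton $1$-set or some singleton inside $X$ is forced to be a $1$-set (otherwise one would obtain a boolean partition whose values violate the unique-maximum constraint). Once a singleton $1$-set $\{t\}$ is in hand, I would use the covering of $[n]$ by $4$-element $2$-sets together with Part 1 to exhibit a partition of $[n]\setminus\{t\}$ into two $0$-sets, so that Lemma \ref{lem:static} yields staticness of $\{t\}$; crucially, Part 1 prevents the partition-building procedure from running into two disjoint $1$-sets.

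The main obstacle will be the construction of a small $2$-set in Part 1. The two boolean decompositions of $\overline{X_1}$ and $\overline{X_2}$ may interact in many ways with the covering $2$-set $Q$, and the case analysis according to which of $Q\cap X_1$, $Q\cap X_2$, $Q\cap W$ are empty — and how the intersections with $Y_i,Z_i$ distribute the elements of $Q$ — needs to be carried out carefully. The key leverage point is that shaving off a few elements of $Q$ to form a smaller partition part forces that part to be a $2$-set via the $\mathrm{LO}_3$ constraint (because the other two parts are already determined to take values incompatible with any other option), and since $|Q|=4$ any such shaving lands inside the forbidden size range $\le 3$. Once Part 1 is established, Part 2 should follow more smoothly, since the absence of disjoint static $1$-sets tightly constrains what the boolean partitions produced by Lemma \ref{lem:static} can look like.
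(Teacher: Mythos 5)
Your proposal diverges substantially from the paper's argument, and both parts contain real gaps. For Part~1, you try to derive a contradiction locally: fix disjoint static $1$-sets $X_1,X_2$, their witnessing boolean partitions, and a covering $4$-element $2$-set $Q$, and then produce a partition with one part $P\subsetneq Q$ that is forced to be a $2$-set. But $P$ can only be \emph{forced} to be a $2$-set if the other two parts are both $1$-sets; from the data you name, the only available $1$-sets are $X_1,X_2$, and their complement $W$ is a $2$-set with $|W|\geq 4$, so $(X_1,X_2,P)$ with $P\subsetneq Q$ is not a partition. There is no obvious supply of further $1$-sets to make the argument close. The paper's proof is structurally different: it is a minimal-counterexample argument that is extremal in both arity (allowing passage to minors) and $|X\cup Y|$, first recolours to a saturated polymorphism (Lemma~\ref{lem:saturated_WLOG}), and then proves a chain of lemmas (Lemmas~\ref{lem:Y_good_intersection}--\ref{lem:split}) showing that one of $X,Y$ must be a singleton and that new disjoint static $1$-sets can be manufactured, contradicting the extremal choice. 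This is not ``careful bookkeeping'' of how $Q$ meets the given sets; genuinely new structural ideas are required.

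For Part~2, the central step is wrong: a $1$-set $X$ with $|X|\leq 3$ need not contain a singleton $1$-set, and the partitions $(S,X\setminus S,\overline X)$ do not force one. If $\overline X$ is a $2$-set (as it is after saturation), the tuple $(f(S),f(X\setminus S),2)$ is in $\mathrm{LO}_3$ for every pair of Boolean values, so there is no constraint at all; and if $\overline X$ is a $0$-set and $|X|=3$, one can consistently have all singletons in $X$ be $0$-sets and all pairs be $1$-sets. The paper deals with exactly this obstruction in Lemma~\ref{lem:ABC}, producing three pairwise disjoint sets $A,B,C$ of size $\leq 2$ (not singletons) that behave like three ``coordinates,'' and then, in Lemma~\ref{lem:existence}, reducing to the fact that polymorphisms of $\LO_2$ are projections via a grid argument. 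That reduction is the crucial idea your sketch is missing. Finally, your route to staticness of $\{t\}$ via Part~1 is also underspecified: Part~1 only forbids disjoint \emph{static} $1$-sets, so encountering a disjoint $1$-set while trying to partition $[n]\setminus\{t\}$ into two $0$-sets is not immediately a contradiction. The paper obtains staticness as part of a minimal-counterexample setup (extremal in the size of the smallest $1$-set), not by a direct partition-building step.
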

We prove both parts of the above proposition by contradiction, via a minimal counterexample reasoning. 

To show Proposition \ref{prop:minimal_counterexamples}(1), assume that there exists an $f$ satisfying the assumptions and having disjoint static $1$-sets, and pick one such $f$ satisfying the following extremal conditions: the arity of $f$ has to be smallest possible, and amongst all such polymorphisms of minimal arity we pick $f$ to have largest (in size) disjoint union of two static $1$-sets. 
%Let $n$ be the arity of $f$ and denote by
Fix such static disjoint $1$-sets $X$ and $Y$. Note that $\overline{X\cup Y}$ must be a 2-set of $f$ and so $\left|X\cup Y\right|\le n-4$.

\begin{lemma}\label{lem:saturated_WLOG}
    $f$ can be chosen to be saturated.
\end{lemma}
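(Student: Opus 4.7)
The plan is to replace the chosen minimal counterexample $f$ with a pure saturation $g$ of it that still exhibits disjoint static $1$-sets $X, Y$. Since $f$ has no small $2$-sets and arity $n\ge 7$, Corollary~\ref{cor:lift} already guarantees that a pure saturation exists; the content of the lemma is to find one with $g(X)=g(Y)=1$, so that $X$ and $Y$ remain boolean in $g$.

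I would specify the target family $\mathcal{F}$ of $2$-sets of $g$---a maximal intersecting family extending the $2$-sets of $f$---and then realise it via a saturation path. Start from $\mathcal{F}_0:=\{\text{2-sets of }f\}$, which is intersecting by Lemma~\ref{lem:2set}; note that $\overline{X\cup Y}\in \mathcal{F}_0$, since $(X,Y,\overline{X\cup Y})$ is a partition and $(1,1,f(\overline{X\cup Y}))\in\mathrm{LO}_3$ forces $f(\overline{X\cup Y})=2$, which must have size at least $4$. Enlarge $\mathcal{F}_0$ to $\mathcal{F}_1:=\mathcal{F}_0\cup\{\overline{S}:|S|\le 3\}$. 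Any two members of $\mathcal{F}_1$ intersect: if both lie in $\mathcal{F}_0$, by Lemma~\ref{lem:2set}; otherwise at least one has size $\ge n-3$, and a size-sum check ($\ge (n-3)+4=n+1$, or $\ge 2(n-3)\ge n+1$ for $n\ge 7$) forces non-empty intersection. From $|X|+|Y|+|\overline{X\cup Y}|=n$ with $|\overline{X\cup Y}|\ge 4$ and $|X|,|Y|\ge 1$, we get $|X|,|Y|\le n-5 < n-3$, so neither $X$ nor $Y$ equals any $\overline{S}$ with $|S|\le 3$; in particular $X,Y\notin \mathcal{F}_1$. Now extend $\mathcal{F}_1$ greedily to a maximal intersecting family $\mathcal{F}$. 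No set of size $\le 3$ can ever be added (its complement is already in $\mathcal{F}_1$), and neither $X$ nor $Y$ can ever be added (each is disjoint from $\overline{X\cup Y}\in \mathcal{F}_0$). Enumerating $\mathcal{F}\setminus\mathcal{F}_0$ in any order yields a saturation path from $f$ to a saturated polymorphism $g$ whose $2$-sets are exactly $\mathcal{F}$, because at each step the set $S$ being $2$-recoloured lies in the intersecting family $\mathcal{F}$ and therefore intersects all sets previously recoloured, so it is $2$-recolourable by Lemma~\ref{lem:2set}. By construction $g$ is pure and $g(X)=g(Y)=1$.

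It remains to verify that $X,Y$ are static in $g$. Here the hypothesis that $[n]$ is the union of $4$-element $2$-sets of $f$ is decisive: each such $A$ is still a $2$-set of $g$, and is minimal in $g$ because $g$ has no $2$-sets of size $\le 3$. Hence $T_g=[n]$, and by Lemma~\ref{lem:boolean_recolouring} every non-empty boolean set of $g$---in particular $X$ and $Y$---is static. The main obstacle has been balancing purity against the preservation of $X$ and $Y$ as $1$-sets inside a single maximal intersecting family; this succeeds precisely because the $2$-set $\overline{X\cup Y}$ alone blocks both $X$ and $Y$ from $\mathcal{F}$ without forcing any small sets to be added. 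Since $g$ has the same arity and the same $X\cup Y$ as $f$, and $[n]$ is still covered by $4$-element $2$-sets of $g$, it is a minimal counterexample of the required form, now saturated.
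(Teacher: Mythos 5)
Your proof is correct, and it achieves the same goal as the paper's but by a genuinely different route. The paper constructs the saturation $\hat{f}$ operationally: it applies Lemmas~\ref{lem:upwards_closure} and~\ref{lem:complementarity} exhaustively, at each complementary pair carefully choosing to $2$-recolour a set of size at least $4$ that is not contained in $X$ or $Y$, and then checks after the fact that the extremal data ($X,Y$ as static $1$-sets, the cover of $[n]$ by $4$-element minimal $2$-sets) survived. You instead specify the target maximal intersecting family $\mathcal{F}$ up front, by first throwing in all $\overline{S}$ with $|S|\le 3$ (which prevents any small $2$-set from ever entering), verifying that $\overline{X\cup Y}\in\mathcal{F}_0$ alone blocks $X$ and $Y$ from the family, and then extending greedily to a maximal intersecting family; you then observe that enumerating $\mathcal{F}\setminus\mathcal{F}_0$ in any order is automatically a valid saturation path because $\mathcal{F}$ is intersecting, so Lemma~\ref{lem:2set} certifies each step. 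This buys a cleaner separation of concerns: the constraints (no small $2$-sets; $X,Y$ stay boolean) become literal membership facts about $\mathcal{F}$ rather than invariants to be maintained along an exhaustive recolouring, and it makes explicit the observation, stated in the paper only after the definition of saturation, that the $2$-sets of a saturated polymorphism form a maximal intersecting family. The remaining checks (that $T_g=[n]$ via the surviving $4$-element $2$-sets of $f$, and hence that $X,Y$ are static by Lemma~\ref{lem:boolean_recolouring}) coincide with the paper's. Both proofs hinge on the same two facts — $\overline{X\cup Y}$ is a $2$-set separating $X$ and $Y$ from the family, and $n\ge 7$ forbids small $2$-sets from being forced — so neither is strictly more elementary, but yours is arguably easier to verify because all choices are made once, globally.
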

\begin{proof}
    Observe that for any disjoint $A,B$ such that $A\cup B=\left[n\right]$, we cannot have simultaneously $A\subseteq X$ and $B\subseteq Y$ (or $A\subseteq Y$ and $B\subseteq X$). Also note that in $f$, no $2$-set can be a subset of $X\cup Y$, since $\overline{X\cup Y}$ is a 2-set. We construct a saturated polymorphism $\hat{f}$ from $f$ as follows. First, we change the value of any superset of a $2$-set to $2$ - note that from the previous remark this does not change the values of $X$ and $Y$. Next, for complementary sets $A,B$, if $A\subseteq X$ or $A\subseteq Y$, then $B$  contains $\overline{X\cup Y}$ so it is already a 2-set. If neither $A$ nor $B$ are subsets of $X$ or $Y$, and both of them are boolean, then arbitrarily pick which of $A$ and $B$ to be changed to a $2$-set, so long as the chosen $2$-set has size at least $4$ (this is possible because $n\geq 7$). Apply these rules exhaustively. By Lemmas \ref{lem:upwards_closure} and \ref{lem:complementarity}, we obtain a saturated polymorphism $\hat{f}$. 
    In addition, the sets $X$ and $Y$ are still 1-sets and they retain their maximality property: indeed, by Lemma \ref{lem:boolean_recolouring} neither is $0$-recolourable, since $\hat{f}$ is saturated and they both intersect non-trivially with some $4$-element minimal $2$-set. Finally, observe that all minimal $2$-sets of $\hat{f}$ have size at least $4$, as the construction did not introduce any $2$-sets with fewer than $4$ elements. Moreover, all variables are contained in some $4$-element minimal $2$-set, since the $2$-sets of $f$ retain the same value in $\hat{f}$. We have therefore constructed a saturated polymorphism $\hat{f}$ satisfying the assumptions in Proposition \ref{prop:minimal_counterexamples} with the same extremal properties of $f$ (arity and the size of $X\cup Y$).
\end{proof}
Let us therefore assume that $f$ is saturated. Lemmas \ref{lem:Y_good_intersection}-\ref{lem:split} will lead to a contradiction showing that such an $f$ does not exist.
\begin{lemma}\label{lem:Y_good_intersection}
    For any distinct $a,b\in Y$ there exist a $4$-element minimal $2$-set $T$ such that $a,b\in T$.
    The same holds for any distinct $a,b\in X$.
\end{lemma}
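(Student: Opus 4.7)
The plan is to argue by contradiction using the minimality of the arity of $f$ in the extremal setup. Suppose, toward a contradiction, that distinct $a,b \in Y$ share no $4$-element minimal $2$-set; because $f$ has no small $2$-sets, every $4$-element $2$-set of $f$ is automatically minimal (its proper subsets have size at most $3$), so equivalently no $4$-element $2$-set of $f$ contains both $a$ and $b$. I will then produce a polymorphism of arity $n-1$ still satisfying the hypotheses of Proposition \ref{prop:minimal_counterexamples} and still having disjoint static $1$-sets, which would contradict the choice of $f$.

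Concretely, let $\pi: [n] \to [n-1]$ be the minor map identifying $a$ with $b$ and set $g = f^\pi$. The plan is to verify three properties of $g$: (i) $g$ has no small $2$-sets, (ii) $[n-1]$ is a union of $4$-element $2$-sets of $g$, and (iii) $X$ and $\pi(Y)$ are disjoint non-empty static $1$-sets of $g$. For (i), any small $2$-set $S$ of $g$ would pull back to a $2$-set of $f$ of size at most $4$; since $f$ has no $2$-sets of size at most $3$, the only possibility would be a $4$-element $2$-set of $f$ containing $\{a,b\}$, which is excluded by assumption. For (ii), any $c \in [n-1]$ lifts to some $c' \in [n]$ lying in a $4$-element $2$-set $T$ of $f$; by assumption $T$ contains at most one of $a,b$, so $\pi(T)$ has four elements, and upward closure of the saturated $f$ gives $f(\pi^{-1}(\pi(T))) = 2$, so $\pi(T)$ is a $2$-set of $g$ containing $c$.

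The substantive part is (iii). The equalities $g(X) = f(X) = 1$ and $g(\pi(Y)) = f(Y) = 1$ together with disjointness follow directly from $a,b \in Y$ and $X \cap Y = \emptyset$. For the static property I would first observe that $g$ inherits saturation from $f$: upward closure transfers by the same argument as in (ii), and the complementation dichotomy follows from $\pi^{-1}(\overline S) = \overline{\pi^{-1}(S)}$ combined with saturation of $f$. Then, since $[n]$ is covered by minimal $4$-element $2$-sets of $f$, we have $T_f = [n]$, and the argument for (ii) gives the analogous $T_g = [n-1]$; Lemma \ref{lem:boolean_recolouring} then ensures that every non-empty boolean set of $g$ is static, so both $X$ and $\pi(Y)$ are.

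The key conceptual point, and the step I would guard most carefully, is that identifying $a$ with $b$ preserves the covering-by-$4$-element-$2$-sets hypothesis precisely because no $4$-element $2$-set of $f$ contains both of them; the rest is bookkeeping with minor maps and saturation. The case of distinct $a,b \in X$ follows verbatim by swapping the roles of $X$ and $Y$, which are symmetric throughout the extremal setup.
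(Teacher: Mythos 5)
Your proposal is correct and takes essentially the same approach as the paper: contract $a$ and $b$ via a minor map, verify that the result still satisfies the hypotheses of Proposition~\ref{prop:minimal_counterexamples} and still has disjoint static $1$-sets, and contradict the minimality of the arity. The paper's own proof is terser (it asserts that $f^\pi$ is saturated and that the relevant sets stay static without spelling out $T_g=[n-1]$), but the reasoning you supply to fill those steps is exactly what the paper implicitly relies on.
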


\begin{proof}
    Suppose that for some $a,b\in Y$, there is no $4$-element minimal $2$-set $T$ such that $a,b\in T$. Consider the polymorphism $f^\pi$ where $\pi$ maps $\left\{a,b\right\}$ to $b$ and acts as identity on the rest of $\left[n\right]$. Then $f^\pi$ has no small $2$-sets. In addition, all variables are contained in some $4$-element minimal $2$-set. To see this, let $S$ be a $4$-element minimal $2$-set of $f$. If $a,b\notin S$, it must be that $S$ is still a minimal $2$-set of $f^\pi$. By assumption, we cannot have both $a,b\in S$ so we just need to consider the case where $a\in S$ but $b\notin S$. In this case we have, by saturation of $f$, that $\left(S\backslash\left\{a\right\}\right)\cup\left\{b\right\}$ is a $4$-element minimal $2$-set of $f^\pi$. Thus, all variables of $f^\pi$ are contained in some $4$-element $2$-set. In addition, $X$ and $Y\backslash\left\{a\right\}$ are both disjoint $1$-sets for $f^\pi$. As they both non-trivially intersect a minimal $2$-set and $f^\pi$ is also saturated, we reach a contradiction to the minimality of $n$. The case of $a,b\in X$ is symmetric.
\end{proof}

\begin{lemma}\label{lem:monotonicity}
    Let $T$ be 2-set and $A$ a 1-set such that $S=T\backslash A$ is boolean. For any $a\not\in T\cup A$, if $S\cup\left\{a\right\}$ is boolean then $f(A\cup\left\{a\right\})\ge 1$.
\end{lemma}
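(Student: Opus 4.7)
The plan is to argue by contradiction: suppose $f(A\cup\{a\})=0$ and deduce that $\overline{A\cup T}$ must be a $2$-set, contradicting Lemma \ref{lem:2set} since $\overline{A\cup T}$ is disjoint from the $2$-set $T$. I would write $R=\overline{A\cup T\cup\{a\}}$ and $A_2=A\setminus T$, so that $\{a\}\cup R=\overline{A\cup T}$ is non-empty (it contains $a$) and $A_2\cup\{a\}$ is non-empty. Throughout, parts of a partition of $[n]$ are allowed to be empty, and the empty set is never a $2$-set by Lemma \ref{lem:2set}.

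First I would apply the $\mathrm{LO}_3$ constraint to the partition $(A_2\cup\{a\},\, T,\, R)$: uniqueness of the maximum $f(T)=2$ forces $f(R)\neq 2$. Next, apply $\mathrm{LO}_3$ to the partition $(A,\, S\cup\{a\},\, R)$, using $f(A)=1$ and $f(S\cup\{a\})\in\{0,1\}$. If $f(S\cup\{a\})=1$ then the tuple $(1,1,f(R))$ would force $f(R)=2$, contradicting the previous step; hence $f(S\cup\{a\})=0$, and then $(1,0,f(R))\in\mathrm{LO}_3$ together with $f(R)\neq 2$ pins down $f(R)=0$.

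Now the partition $(A\cup\{a\},\, S,\, R)$ evaluates to $(0, f(S), 0)$, and $\mathrm{LO}_3$ with $f(S)\in\{0,1\}$ forces $f(S)=1$ (otherwise there is no unique maximum). The closing step is the partition $(A,\, S,\, \overline{A\cup T})$, which is valid since $A\cap S=\emptyset$ and $A\cup S=A\cup T$. With $f(A)=f(S)=1$, the only value making the maximum unique is $f(\overline{A\cup T})=2$; since $\overline{A\cup T}$ is disjoint from $T$, this contradicts Lemma \ref{lem:2set}, completing the argument.

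What I expect to require most care is the bookkeeping: one has to verify that all four partitions above stay well-defined when $S$, $A_2$, or $R$ happens to be empty, and that the $\mathrm{LO}_3$ case analyses remain valid in those degenerate cases (in particular invoking $f(\emptyset)\neq 2$ where needed). Aside from these routine checks, the proof is a short chain of partition constraints culminating in the disjoint $2$-set contradiction.
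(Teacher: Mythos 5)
Your proof is correct and rests on the same key ideas as the paper's: constraint propagation over three-way partitions built from $A$, $S$, $T$, and $\overline{T\cup A}$, with Lemma~\ref{lem:2set} ruling out a $2$-set disjoint from $T$. The paper argues directly (it notes $\overline{T\cup A}$ and $\overline{T\cup A}\setminus\{a\}$ are boolean because they are disjoint from $T$, derives $f(S)=0$ and $f(\overline{T\cup A}\setminus\{a\})=0$ from the partitions $(S,A,\overline{T\cup A})$ and $(S\cup\{a\},A,\overline{T\cup A}\setminus\{a\})$, and then reads off $f(A\cup\{a\})=1$ from $(S,A\cup\{a\},\overline{T\cup A}\setminus\{a\})$ when $A\cup\{a\}$ is not already a $2$-set), whereas you package the identical partitions as a proof by contradiction, add one preliminary partition to re-derive $f(R)\neq 2$ (which follows directly from Lemma~\ref{lem:2set}), and close with a disjoint-$2$-sets contradiction from $(A,S,\overline{A\cup T})$; the difference is purely presentational.
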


\begin{proof}
Let $R=\overline{T\cup A}$. Consider the partitions $(S,A,R),  (S\cup\left\{a\right\},A,R\backslash\left\{a\right\}), (S,A\cup\left\{a\right\},R\backslash\left\{a\right\})$. If $A\cup\left\{a\right\}$ is a 2-set, we are done. Otherwise, these three are all boolean partitions. The first two imply that $S$ and $R\backslash\left\{a\right\}$ are 0-sets, so $A\cup\left\{a\right\}$ is a 1-set.
\end{proof}

\begin{lemma}\label{claim:no-disjoint}
    If $A$ and $B$ are 1-sets such that $|A\cap T_A|\ge 2$ and $|B\cap T_B|\ge 2$ for some 4-element 2-sets $T_A,T_B$ then $A\cap B\ne\emptyset$.
\end{lemma}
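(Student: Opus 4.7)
I argue by contradiction, assuming $A\cap B=\emptyset$.

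First, I extract structural consequences. The partition $(A,B,\overline{A\cup B})$ has $f$-image $(1,1,\ast)$, and membership in $\mathrm{LO}_3$ forces this to be $(1,1,2)$; setting $C:=\overline{A\cup B}$, so $C$ is a 2-set. By Lemma \ref{lem:2set} there are no two disjoint 2-sets, so $A\cup B$ is boolean, and upward closure (Lemma \ref{lem:upwards_closure}, available throughout since $f$ is saturated by Lemma \ref{lem:saturated_WLOG}) then forbids $A\cup B$ from containing any 2-set at all. In particular $T_A\not\subseteq A\cup B$, so $T_A\cap C\ne\emptyset$, and symmetrically $T_B\cap C\ne\emptyset$, $T_A\not\subseteq A$, $T_B\not\subseteq B$. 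Counting inside the 4-element set $T_A$, the hypotheses $|T_A\cap A|\ge 2$ and $|T_A\cap C|\ge 1$ yield $|T_A\cap B|\le 1$; symmetrically $|T_B\cap A|\le 1$. Hence $B\setminus T_A$ and $A\setminus T_B$ are both non-empty.

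Next, I rule out the possibility $T_A\cap T_B=\emptyset$: otherwise $(T_A,T_B,[n]\setminus(T_A\cup T_B))$ would be a genuine partition with $f$-image $(2,2,\ast)$, in which the maximum colour $2$ appears twice, violating $\mathrm{LO}_3$.

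The main step is to derive a contradiction in the surviving case $T_A\cap T_B\ne\emptyset$, which is where the bulk of the difficulty lies. My plan is to combine Lemma \ref{lem:monotonicity} with further partition constraints. For any $b\in B\setminus T_A$, Lemma \ref{lem:monotonicity} applied with $T=T_A$ and the 1-set $A$ (its boolean side-conditions holding automatically as $f$ has no small 2-sets) gives $f(A\cup\{b\})\ge 1$; and the partition $(A\cup\{b\},B\setminus\{b\},C)$, together with $f(C)=2$ and the $\mathrm{LO}_3$ uniqueness-of-maximum rule, forces $f(A\cup\{b\})=1$. Iterating along $B\setminus T_A$ yields $f(A\cup(B\setminus T_A))=1$, and symmetrically $f(B\cup(A\setminus T_B))=1$. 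Combining these with the partitions $(T_A,(A\cup B)\setminus T_A,C\setminus T_A)$ and its $T_B$-counterpart (both of the form $(2,\ast,\ast)$, hence forcing the other two parts to be boolean), one can in each sub-case distinguished by the location of $T_A\cap T_B$ (inside $A$, inside $B$, or inside $C$) construct a three-part partition whose $f$-image violates $\mathrm{LO}_3$ — typically one in which the colour $2$ appears more than once — yielding the required contradiction. Selecting the right partition in each sub-case is the main technical obstacle.
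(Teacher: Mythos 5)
Your set-up is sound — you correctly observe that $C:=\overline{A\cup B}$ is a $2$-set, use Lemma~\ref{lem:2set} to control intersections, verify the hypotheses of Lemma~\ref{lem:monotonicity}, and correctly conclude $f(A\cup\{b\})\ge 1$ for $b\in B\setminus T_A$. But there is a genuine gap at the end: you never actually exhibit the partition whose $f$-image violates $\mathrm{LO}_3$, and you acknowledge as much by calling it ``the main technical obstacle''. As written, the argument does not close. Moreover, the direction you are pushed in is not promising: by looking at the fixed partition $(A\cup\{b\},B\setminus\{b\},C)$ you are forced to conclude $f(A\cup\{b\})=1$, so your iteration only \emph{enlarges} the $1$-sets, and it is not clear how this ever produces the pair of disjoint $2$-sets (or other $\mathrm{LO}_3$ violation) that would finish the proof; a case analysis on the location of $T_A\cap T_B$ is unlikely to be short.

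The paper avoids this entirely with one extra idea: an extremal choice. Among all disjoint $1$-sets $A,B$ satisfying the hypotheses, fix a pair with $|A\cup B|$ maximal. Lemma~\ref{lem:monotonicity} again gives $f(A\cup\{a\})\ge 1$ for every $a\notin A\cup B\cup T_A$, but now the value $1$ is \emph{excluded by maximality}: if $f(A\cup\{a\})=1$ then $(A\cup\{a\},B)$ would be a larger counterexample (note $T_A$ still witnesses the hypothesis for $A\cup\{a\}$). Hence $f(A\cup\{a\})=2$, and symmetrically $f(B\cup\{b\})=2$ for every $b\notin A\cup B\cup T_B$. Since $|\overline{A\cup B}|\ge 4$ and $|T_A\cap\overline{A\cup B}|\le 2$, there are at least two admissible $a$, and likewise at least two admissible $b$; picking $a\ne b$ gives disjoint $2$-sets $A\cup\{a\}$ and $B\cup\{b\}$, contradicting Lemma~\ref{lem:2set}. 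Adding this extremality step to your argument would repair it and also remove the need for the $T_A\cap T_B$ case split.
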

\begin{proof}
Assume for contradiction that there are disjoint 1-sets $A$ and $B$ such that $|A\cap T_A|\ge 2$ and $|B\cap T_B|\ge 2$ for some 4-element 2-sets $T_A,T_B$. Choose them so that $A\cup B$ has the largest possible size.
Let $S_A=T_A\backslash A$ and $S_B=T_B\backslash B$. Note that $|S_A|\le 2$ and $|S_B|\le 2$, so in particular 
$S_A\cup\left\{a\right\}$ is boolean for any $a\not\in A\cup B\cup T_A$ and, similarly, $S_B\cup\left\{b\right\}$ is boolean for any $b\not\in A\cup B\cup T_B$.
Then Lemma~\ref{lem:monotonicity} implies that, for any element $a\not\in A\cup B\cup T_A$, we have $f(A\cup\left\{a\right\})\ge 1$.
We cannot have $f(A\cup\left\{a\right\})=1$, since this would contradict the choice of $A,B$. So we have $f(A\cup\left\{a\right\})=2$.
Similarly, we have $f(B\cup\left\{b\right\})=2$ for every $b\not\in A\cup B\cup T_B$. Since $|S_A|\le 2$ and $|\overline{A\cup B}|\ge 4$,
there is more than one choice for $a\not\in A\cup B\cup T_A$. Similarly, there is more than one choice for $b\not\in A\cup B\cup T_B$. But then, choosing $a$ and $b$ to be different, we obtain disjoint 2-sets $A\cup\left\{a\right\}$ and $B\cup\left\{b\right\}$, a contradiction. 
\end{proof}

Lemma \ref{claim:no-disjoint} together with Lemma~\ref{lem:Y_good_intersection} immediately imply the following.

\begin{corollary}
    At least one of $X,Y$ must be a singleton.
\end{corollary}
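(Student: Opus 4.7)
The plan is to derive this corollary as an immediate consequence of the two preceding lemmas by a contradiction argument. Suppose both $X$ and $Y$ contain at least two elements. Then I would pick distinct $a_X, b_X \in X$ and distinct $a_Y, b_Y \in Y$, and apply Lemma \ref{lem:Y_good_intersection} twice to produce $4$-element minimal $2$-sets $T_X \supseteq \{a_X, b_X\}$ and $T_Y \supseteq \{a_Y, b_Y\}$. This yields $|X \cap T_X| \ge 2$ and $|Y \cap T_Y| \ge 2$.

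Since $X$ and $Y$ are both $1$-sets of $f$, I would then apply Lemma \ref{claim:no-disjoint} with $A = X$ and $B = Y$ to conclude that $X \cap Y \neq \emptyset$, which directly contradicts the hypothesis that $X$ and $Y$ are disjoint static $1$-sets. Hence at least one of $X$, $Y$ consists of at most one element; since both are non-empty (being $1$-sets), one of them must actually be a singleton. There is no real obstacle in this argument: the combinatorial content has already been absorbed into Lemmas \ref{lem:Y_good_intersection} and \ref{claim:no-disjoint}, so the corollary reduces to a short syllogism combining them.
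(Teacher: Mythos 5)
Your proof is correct and matches the paper's intended argument: the corollary is stated as an immediate consequence of Lemmas \ref{lem:Y_good_intersection} and \ref{claim:no-disjoint}, and you have simply spelled out the routine combination. Nothing is missing.
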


We may therefore assume that $X$ contains a single element $x$.

\begin{lemma}\label{lem:type}
   If $T$ is a $2$-set that can be written as a disjoint union of two boolean sets, then it is of one of the following two types:
    \begin{enumerate}
        \item Type $01$, if $T$ is the disjoint union of a $1$-set and a $0$-set.
        \item Type $00$, if $T$ is the disjoint union of two $0$-sets.
    \end{enumerate}
    In addition, the above types are mutually exclusive.
\end{lemma}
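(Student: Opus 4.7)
\medskip
The plan is to use, in both parts, the partition $\bigl(A, B, \overline{T}\bigr)$ of $[n]$ associated to any decomposition $T = A \sqcup B$ into boolean sets, together with the defining property of a polymorphism: the triple $\bigl(f(A), f(B), f(\overline{T})\bigr)$ must lie in $\mathrm{LO}_3$, i.e.\ have a unique maximum. The other essential ingredient is Lemma~\ref{lem:2set}, which forbids disjoint $2$-sets; since $T$ is already a $2$-set, $\overline{T}$ cannot be one.

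First I would rule out the ``type $11$'' possibility. If $T = A \sqcup B$ with $f(A) = f(B) = 1$, then $\bigl(1, 1, f(\overline{T})\bigr) \in \mathrm{LO}_3$ forces $f(\overline{T}) = 2$, because the unique-maximum condition is incompatible with two equal $1$'s unless the third entry strictly exceeds them. But then $T$ and $\overline{T}$ are disjoint $2$-sets, which contradicts Lemma~\ref{lem:2set}.

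For mutual exclusivity, I would extract the value $f(\overline{T})$ imposed by each type. If $T$ admits a ``type $01$'' splitting $T = A \sqcup B$ with $f(A) = 1, f(B) = 0$, then $\bigl(1, 0, f(\overline{T})\bigr) \in \mathrm{LO}_3$ together with the ban on $f(\overline{T}) = 2$ (via Lemma~\ref{lem:2set}) and the unique-maximum requirement (which rules out $f(\overline{T}) = 1$) force $f(\overline{T}) = 0$. Dually, if $T$ admits a ``type $00$'' splitting $T = A' \sqcup B'$ with $f(A') = f(B') = 0$, then $\bigl(0, 0, f(\overline{T})\bigr) \in \mathrm{LO}_3$ rules out $f(\overline{T}) = 0$ (no maximum exists) and rules out $f(\overline{T}) = 2$ (disjoint $2$-sets again), leaving $f(\overline{T}) = 1$. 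These two conclusions about $f(\overline{T})$ are contradictory, so no $T$ can be simultaneously of type $01$ and type $00$.

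This proof is essentially a case analysis on the values of $\mathrm{LO}_3$-tuples with a forbidden third coordinate, so I do not anticipate any serious obstacle; the only subtlety is remembering to invoke Lemma~\ref{lem:2set} in every case to kill off the possibility $f(\overline{T}) = 2$, which would otherwise make the constraints vacuous.
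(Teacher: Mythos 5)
Your proposal is correct and follows essentially the same route as the paper's proof: both use the partition $(A, B, \overline{T})$, invoke Lemma~\ref{lem:2set} to force $\overline{T}$ to be boolean, and observe that the value $f(\overline{T})$ determines the type (the paper simply cases on $f(\overline{T}) \in \{0,1\}$ and reads off the type directly, which subsumes your separate elimination of type $11$ and your exclusivity argument in one step).
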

\begin{proof}
    Let $T$ be as in the statement and suppose $T=A\cup B$ where $A,B$ are disjoint boolean sets. If $\overline{T}$ is a $1$-set, then $A,B$ are both $0$-sets. If, instead, $\overline{T}$ is a $0$-set, then $A,B$ have different values.
\end{proof}

\begin{lemma}\label{lem:split}
    Let $X=\{x\}$ be as above and let $S$ be a $4$-element minimal $2$-set such that $x\in S$. Then, for any $z\not\in S$, $\{z,x\}$ is a $1$-set.
\end{lemma}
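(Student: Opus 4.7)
The plan is as follows. First, I would use the partition $(\{x\}, \{a,b,c\}, \overline{S})$ of $[n]$ together with the LO condition to establish that both $\{a, b, c\}$ and $\overline{S}$ are 0-sets: indeed $\{a,b,c\}$ is boolean by minimality of $S$, and $\overline{S}$ is boolean by Lemma~\ref{lem:2set} (since $S$ is a 2-set), so with $f(\{x\})=1$ the LO condition forces the other two parts to take value $0$. Since $|\{x,z\}| = 2$ and $f$ has no small 2-sets, $f(\{x,z\}) \in \{0,1\}$, so it suffices to rule out $f(\{x,z\}) = 0$.

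Assume for contradiction that $f(\{x, z\}) = 0$. The partition $(\{x, z\}, \{a, b, c\}, \overline{S}\setminus\{z\})$ has all three parts boolean (the third one is boolean because its complement $S \cup \{z\}$ is a 2-set by saturation and upward closure), so LO forces $Y_0 := \overline{S}\setminus\{z\}$ to be a 1-set. Then the partition $(\{x\}, Y_0, \{a,b,c,z\})$ has values $(1, 1, \cdot)$ and LO forces $\{a,b,c,z\}$ to be a 2-set, necessarily a 4-element minimal one since $\{a,b,c\}$ is a 0-set. By Lemma~\ref{lem:2set}, $\{x\} \cup Y_0$ is then boolean, and the partition $(\{z\}, \{a,b,c\}, \{x\} \cup Y_0)$ with LO forces exactly one of $\{z\}$ or $\{x\} \cup Y_0$ to be a 1-set, yielding two subcases.

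In the subcase where $\{x\} \cup Y_0$ is a 1-set, the partition $(\{c\}, \{x\} \cup Y_0, \{a,b,z\})$ under LO pins down $f(\{c\}) = 0$, and then $(\{x,a,b\}, \{c\}, \overline{S})$ (with $\{c\}$ and $\overline{S}$ both $0$-sets) forces $f(\{x,a,b\}) = 1$; now $\{x,a,b\}$ and $Y_0$ are disjoint 1-sets, and the partition $(\{x,a,b\}, Y_0, \{c,z\})$ with values $(1,1,\cdot)$ forces $\{c,z\}$ to be a 2-set, contradicting the no-small-2-set hypothesis since $|\{c,z\}| = 2$. Symmetrically, in the subcase where $\{z\}$ is a 1-set, the partition $(\{x,a\}, \{b,c,z\}, Y_0)$ yields $f(\{x,a\}) = 0$, and $(\{b,c\}, \{x,a\}, \overline{S})$ yields $f(\{b,c\}) = 1$; the disjoint 1-sets $\{b,c\}$ and $Y_0$ then force $\{x,a,z\}$ to be a 2-set via the partition $(\{b,c\}, Y_0, \{x,a,z\})$, again contradicting the no-small-2-set hypothesis since $|\{x,a,z\}| = 3$.

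The main obstacle is the careful orchestration of partition arguments to extract the values of small boolean sets such as $\{c\}$, $\{x,a\}$, $\{x,a,b\}$, and $\{b,c\}$ in the two subcases; once these are pinned down, the contradictions are immediate consequences of the general fact that the complement of the union of two disjoint 1-sets must be a 2-set (and hence of size at least $4$). The elegance of the argument is that in both subcases one arrives at a forbidden small 2-set of size $2$ or $3$, directly violating the standing hypothesis on $f$.
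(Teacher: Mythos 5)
Your proof is correct, but it takes a genuinely different route from the paper's. The paper's argument pivots entirely on Lemma~\ref{lem:type}: it observes that $S$ is of type 01 (so $S\setminus\{x\}$ is a 0-set and some 2-element subset $U$ of $S$ is a 1-set), then that $S\cup\{z\}$ is a 2-set which must also be of type 01 because $U$ is a 1-set, and finally reads off $f(\{x,z\})=1$ from the partition $\{x,z\}\cup(S\setminus\{x\})$ of $S\cup\{z\}$ — four lines, no case split, no contradiction. You instead proceed by contradiction, assuming $f(\{x,z\})=0$, extracting the auxiliary 1-set $Y_0=\overline{S}\setminus\{z\}$ and the 2-set $\{a,b,c,z\}$, and then branching on whether $\{z\}$ or $\{x\}\cup Y_0$ is the 1-set, with each branch driven to a forbidden small 2-set by a chain of hand-rolled partitions. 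Every step checks out: the partitions you invoke are indeed partitions of $[n]$, the booleanity claims follow from the no-small-2-set hypothesis (for sets of size $\le 3$) or from Lemma~\ref{lem:2set} (for complements of 2-sets), and each LO constraint forces the value you claim. What the paper's route buys is brevity and reuse: the type-01/type-00 dichotomy packages exactly the kind of ``this 2-set splits as 0+1, so any other boolean split of it must also be 0+1'' reasoning that you reconstruct from scratch in each subcase. What your route buys is independence from Lemma~\ref{lem:type} — you only use Lemma~\ref{lem:2set}, saturation, and raw partition arithmetic — at the cost of roughly tripling the length and introducing a case split that the paper's argument avoids entirely.
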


\begin{proof}
    Let $Z=\left\{z,x\right\}$ and consider the $2$-set $S\cup Z$. Note that $S$ is of type $01$ (in the sense of Lemma \ref{lem:type}), since $\left\{x\right\}$ is a $1$-set and therefore $S\backslash\left\{x\right\}$ is a $0$-set by Lemma \ref{lem:type}. By partitioning $S$ into two disjoint $2$-element boolean sets, one of these must therefore be a $1$-set which we refer to as $U$. Since $\left|\left(S\cup Z\right)\backslash U\right|=3$, we must have that $\left(S\cup Z\right)\backslash U$ is boolean. Since $U$ is a $1$-set, $S\cup Z$ must be of type $01$ by Lemma \ref{lem:type}. But $f\left(\left(S\cup Z\right)\backslash Z\right)=f\left(S\backslash Z\right)=f\left(S\backslash\left\{x\right\}\right)=0$, so it follows that $Z$ is a $1$-set as required.
\end{proof}

The above lemma is sufficient to show that $f$ does not exist, for it contradicts maximality of $\left|X\cup Y\right|$. Indeed, $\overline{X\cup Y}$ is a $2$-set and it therefore contains at least $4$ elements. Choose a 4-element 2-set $S$ of $f$ such that $x\in S$, which exists because every element of $[n]$ is contained in a 4-element 2-set. Then $\overline{X\cup Y}\setminus S$ is non-empty, and we can choose an element $z$ from this set. By Lemma \ref{lem:split}, $\left\{z,x\right\}$ is a $1$-set which, by saturation, is static. In particular, $\{z,x\}$ and $Y$ are disjoint static 1-sets of $f$,
implying that $\left|X\cup Y\right|$ is not maximal - a contradiciton. Thus, Proposition \ref{prop:minimal_counterexamples}$(1)$ holds. 

For Proposition \ref{prop:minimal_counterexamples}(2), assume ad absurdum that there is a polymorphism $f$ with the same assumptions on $2$-sets but no static $1$-set singletons. Pick $f$ so that the arity is minimal and its smallest $1$-set the largest (amongst all polymorphisms with minimal arity). Using the same argument as Lemma \ref{lem:saturated_WLOG}, one can show that it is possible to recolour $f$ to a saturated polymorphism without affecting the extremal properties (minimal arity and maximal smallest $1$-set), so let us further assume that $f$ is saturated. As before, we prove a string of properties which lead to contradict the extremal assumptions on $f$.

\begin{lemma}\label{lem:2-element}
    For any $a,b\in [n]$, if $\{a,b\}$ is not contained in a 4-element 2-set then $\{a,b\}$ is a 1-set.
\end{lemma}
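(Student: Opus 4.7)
The plan is to proceed by contradiction on the minimality of the arity of $f$. Since $\{a,b\}$ has only two elements and $f$ has no small $2$-sets, $\{a,b\}$ is boolean, so the conclusion fails only if $\{a,b\}$ is a $0$-set. Assuming this, my goal is to build a smaller counterexample by identifying $a$ and $b$: let $\pi\colon [n] \to [n-1]$ send both $a$ and $b$ to a common element $e$ (and act as the identity on the rest, with reindexing), and set $g = f^\pi$. If I can show that $g$ inherits the three defining properties of a counterexample in Proposition~\ref{prop:minimal_counterexamples}(2), namely no small $2$-sets, every element contained in a $4$-element $2$-set of $g$, and no static singleton $1$-sets, then $g$ is a counterexample of arity $n-1 < n$, contradicting the choice of $f$.

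The first two properties follow from bookkeeping. If $Y \subseteq [n-1]$ is a $2$-set of $g$ with $|Y| \leq 3$, then $\pi^{-1}(Y)$ is a $2$-set of $f$ of size $|Y|$ or $|Y|+1$, depending on whether $e \in Y$; the former contradicts that $f$ has no small $2$-sets, while the latter forces $\{a,b\}$ into a $4$-element $2$-set of $f$, contrary to hypothesis. For the covering property, any $4$-element $2$-set $T$ of $f$ has $\pi(T)$ of size $4$ (since $T$ cannot contain both $a$ and $b$ by hypothesis), and upward closure of $f$ (which is saturated, via Lemma~\ref{lem:upwards_closure}) ensures that $T \cup \{a,b\}$ is a $2$-set of $f$, so $g(\pi(T)) = 2$; choosing $T$ to contain $a$ covers $e$, while every other element is already contained in some $4$-element $2$-set of $f$ whose $\pi$-image does the job.

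The main obstacle is the no-static-singleton-$1$-set condition. For the identified element $e$ the computation is immediate: $g(\{e\}) = f(\{a,b\}) = 0$ by the contradiction hypothesis, so $\{e\}$ is not a $1$-set at all. For a regular element $c \in [n-1]$ with preimage $c' \in [n] \setminus \{a,b\}$, the key is to transfer staticness from $g$ to $f$: if $\{c\}$ were a static $1$-set of $g$, then Lemma~\ref{lem:static} supplies a boolean partition $(\{c\}, U, V)$ of $[n-1]$, which $\pi^{-1}$ lifts to a partition $(\{c'\}, \pi^{-1}(U), \pi^{-1}(V))$ of $[n]$, still boolean because $f(\pi^{-1}(U)) = g(U)$ and $f(\pi^{-1}(V)) = g(V)$. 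Applying Lemma~\ref{lem:static} in the reverse direction, $\{c'\}$ is a static $1$-set of $f$, contradicting the standing assumption that $f$ has no such sets. This closes the case analysis, delivers the contradiction, and establishes the lemma.
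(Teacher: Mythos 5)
Your proposal is correct and follows essentially the same approach as the paper: the paper's own proof of this lemma consists of a single sentence pointing to ``an analogous argument to Lemma~\ref{lem:Y_good_intersection},'' whose proof is exactly the minor-map-and-minimality argument you spell out (identify $a$ and $b$, check the resulting polymorphism is still a valid smaller counterexample, contradict minimality of arity). You have simply filled in the bookkeeping that the paper leaves implicit, and the details check out.
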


\begin{proof}
    Follows using an analogous argument to Lemma \ref{lem:Y_good_intersection}.
\end{proof}

\begin{lemma}\label{lem:upwards-monotone}
If $X$ and $X'$ are non-recolourable boolean sets such that $X\subset X'$ then $f(X)\le f(X')$. 
\end{lemma}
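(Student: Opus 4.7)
The plan is by contradiction and induction on $|X'|$. Suppose $X\subsetneq X'$ are non-recolourable boolean sets with $f(X)=1$ and $f(X')=0$; since $f$ is saturated, ``non-recolourable boolean'' coincides with ``static'', so by Lemma \ref{lem:static} we obtain a boolean partition $(X', P, Q)$ of $[n]$ with $f(P)=1$ and $f(Q)=0$ after relabelling. Writing $Y=X'\setminus X$, the partition $(X,P,Y\cup Q)$ has values $(1,1,f(Y\cup Q))$, so the unique-maximum condition in $\mathrm{LO}_3$ forces $f(Y\cup Q)=2$. Applying Lemma \ref{lem:2set} to the partition $(X\cup P,Y\cup Q,\emptyset)$ then forces $X\cup P$ to be boolean (as it is disjoint from the 2-set $Y\cup Q$), and by saturation exactly one of the complementary sets $X\cup Q$ and $Y\cup P$ is itself a 2-set.

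When $Y\cup P$ is the 2-set, analysing $(Y,P,X\cup Q)$ and $(X\cup P,Y,Q)$ forces $f(Y)=f(X\cup Q)=0$ and $f(X\cup P)=1$, so $(X,X\cup Q)$ is a fresh pair of static boolean sets reproducing the bad configuration; the inductive hypothesis closes this case whenever $|Q|<|Y|$, and the symmetric treatment (swapping $P$ and $Q$) handles $|P|<|Y|$. When instead $X\cup Q$ is the 2-set, analysing $(X,Q,Y\cup P)$ forces $f(Y\cup P)=0$, and $(X\cup P,Y,Q)$ splits into sub-case (i) $f(Y)=0,\,f(X\cup P)=1$ or sub-case (ii) $f(Y)=1,\,f(X\cup P)=0$. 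Sub-case (ii) produces a static 1-set $Y$ via the boolean partition $(Y,X\cup P,Q)=(1,0,0)$, yielding the bad pair $(Y,X')$ with $|X'\setminus Y|=|X|$, which the induction handles when $|X|<|Y|$.

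The critical sub-case is (i), where no direct reduction to a smaller pair is available. Here one selects a minimal 2-set $M\subseteq X\cup Q$ with $M\cap X\ne\emptyset\ne M\cap Q$ (forced by upwards closure, since neither $X$ nor $Q$ is a 2-set), and applies Lemma \ref{lem:monotonicity} with $A=X$, $T=M$, and $a\in Y$ to derive $f(X\cup\{a\})\ge 1$; iterating this step one element of $Y$ at a time yields $f(X')\ge 1$, contradicting $f(X')=0$ (noting that $f(X\cup \{a\})=2$ is ruled out by upwards closure against $f(X')=0$). The main obstacle is ensuring that $S\cup\{a\}=(M\setminus X)\cup\{a\}$ remains boolean at every step, which is where the no-small-$2$-set hypothesis is essential: by choosing $M$ with $|M|=4$ and $|M\cap X|\ge 2$, the set $(M\setminus X)\cup\{a\}$ has at most three elements and hence cannot be a 2-set. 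The existence of such an $M$ is guaranteed by the assumption that $[n]$ is covered by 4-element 2-sets, together with Lemma \ref{lem:2-element}. The base case (small $|X'|$) is handled directly: any configuration forcing $|Y|=1$ under sub-case (ii) produces a singleton static 1-set, contradicting the minimum-counterexample assumption on $f$.
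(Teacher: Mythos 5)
Your proof starts exactly as the paper's does: you translate ``non-recolourable'' to ``static'' via saturation, invoke Lemma~\ref{lem:static} to produce a boolean partition $(X',P,Q)$, and (after relabelling) arrive at $f(P)=1$. At that moment you already hold the contradiction and should stop. The set $P$ is a static $1$-set (it sits in the boolean partition $(P,X',Q)$), it is disjoint from $X'$ and hence from $X$, and $X$ is itself a static $1$-set with $f(X)=1$. So $X$ and $P$ are disjoint static $1$-sets. This lemma is being proved \emph{inside} the minimal-counterexample argument for Proposition~\ref{prop:minimal_counterexamples}(2), at a point where Proposition~\ref{prop:minimal_counterexamples}(1) --- ``$f$ has no disjoint static $1$-sets'' --- has already been established for this very $f$. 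The paper's entire proof is these two sentences; you had everything in hand and then did not notice it.

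Because you did not see that shortcut, you launched an induction on $|X'|$ that has real gaps. First, the case split by size of $P$, $Q$, $Y$ is not exhaustive: in the branch where $Y\cup P$ is the $2$-set you close the argument only when $|Q|<|Y|$ or $|P|<|Y|$, leaving $|P|\ge|Y|$ and $|Q|\ge|Y|$ unaddressed; similarly sub-case~(ii) is closed only ``when $|X|<|Y|$''. Second, in the ``critical sub-case~(i)'' the application of Lemma~\ref{lem:monotonicity} relies on finding a $4$-element minimal $2$-set $M\subseteq X\cup Q$ with $|M\cap X|\ge 2$ and $M\cap Q\ne\emptyset$, which is not guaranteed (Lemma~\ref{lem:2-element} does not give you a $4$-element $2$-set through any chosen pair, only information when such a $2$-set is absent), and the iteration ``one element of $Y$ at a time'' would need the hypotheses of Lemma~\ref{lem:monotonicity} to be re-established at every step, which you do not do. None of this machinery is needed: the intended proof is the immediate appeal to Proposition~\ref{prop:minimal_counterexamples}(1).
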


\begin{proof}
    Assume, for a contradiction, that $X$ is a 1-set and $X'$ is a 0-set. Then $X'$ is a part of some boolean partition of $\left[n\right]$. So there exists a 1-set $Y$ such that $X'\cap Y=\emptyset$. In particular, $X$ and $Y$ are disjoint static 1-sets - a contradiction to Proposition \ref{prop:minimal_counterexamples}$\left(1\right)$.
\end{proof}

\begin{lemma}\label{lem:ABC}
    There exist three pairwise disjoint sets $A,B,C$, with at most two elements each, with the following property: any $X$ such that $X\cap (A\cup B\cup C)$ is one of $A,B,C$ is not a 1-set.
\end{lemma}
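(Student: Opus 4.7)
My plan is to construct the triple $A,B,C$ from the combination of a minimum-size $1$-set of $f$ and a neighbouring $4$-element $2$-set, then verify the obstruction property using monotonicity, the intersecting-family property of $1$-sets (Proposition~\ref{prop:minimal_counterexamples}(1)), and Lemma~\ref{lem:monotonicity}.

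First I would set up the basic structural facts. Since $f$ is saturated and $\left[n\right]$ is a union of $4$-element $2$-sets, we have $T_f=\left[n\right]$, so by Lemma~\ref{lem:boolean_recolouring} every boolean set is static. Combined with the ad absurdum hypothesis that no singleton is a static $1$-set, this yields $f(\{a\})=0$ for every $a\in\left[n\right]$. Lemma~\ref{lem:upwards-monotone} then gives that $f$ is monotone on the boolean sets: subsets of $0$-sets are $0$-sets, and boolean supersets of $1$-sets are $1$-sets. Next, by Lemma~\ref{lem:kneser} and the no-small-$2$-set assumption, $f$ has a $1$-set of size at most $3$; let $U$ be a $1$-set of minimum size, so $|U|\in\{2,3\}$, and by Proposition~\ref{prop:minimal_counterexamples}(1) every $1$-set of $f$ meets $U$.

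The construction of $A,B,C$ then splits on $|U|$. When $|U|=2$ with $U=\{u_1,u_2\}$, I would set $A=\{u_1\}$ and $B=\{u_2\}$, both $0$-sets by the setup, and pick $C$ as a $2$-element $0$-set inside a suitable $4$-element $2$-set disjoint from $A\cup B$; the existence of such a $0$-pair is forced by Lemma~\ref{lem:type} via a case analysis on whether the chosen $2$-set is of type $00$ or type $01$. When $|U|=3$, I would take two singletons of $U$ as $A$ and $B$ and extract $C$ as a $2$-element $0$-set from a $4$-element $2$-set containing the third element of $U$, again via Lemma~\ref{lem:type}. In both cases $A,B,C$ are pairwise disjoint, each of size at most $2$, and $|A\cup B\cup C|\le 4$, which fits comfortably inside $\left[n\right]$ since $n\ge 7$.

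Verifying the defining property is where I expect the main difficulty. Assume for contradiction that $X$ is a $1$-set with $X\cap(A\cup B\cup C)$ equal to one of $A,B,C$. If that intersection is $C$, then $X$ is disjoint from $U$, contradicting Proposition~\ref{prop:minimal_counterexamples}(1) directly. For $A$ or $B$, the putative $1$-set $X$ contains exactly one element of $U$; here I would apply Lemma~\ref{lem:monotonicity} to the $2$-set $T$ used to construct $C$, together with suitable boolean supersets furnished by the monotonicity from the setup step, producing either a $1$-set of size strictly less than $|U|$ (contradicting the minimality of $U$) or a $1$-set disjoint from $U$ (again contradicting Proposition~\ref{prop:minimal_counterexamples}(1)). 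The delicate point is the type-$01$ case of the neighbouring $2$-set, where the ``$1$-pair'' inside $T$ threatens to furnish an unwanted extension of $A$; this is to be ruled out by choosing the auxiliary element for Lemma~\ref{lem:monotonicity} outside $A\cup B\cup C\cup T$, which is available since $n\ge 7$ and $|A\cup B\cup C|\le 4$.
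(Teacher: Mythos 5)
Your approach diverges substantively from the paper's, and as sketched it has genuine gaps that I don't see how to close along your route.

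The key structural choice in the paper's proof is to find a $4$-element $2$-set $T=\{a,b,c,d\}$ that \emph{contains} a $1$-set of size $2$ or $3$ (establishing this is the bulk of the work). Once such a $T$ is located, $T$ is of type $01$, so some $2$-element subset $\{a,b\}$ is a $1$-set, and since $\{a\},\{b\}$ are $0$-sets, the type-$01$ structure forces $\{a,c,d\}$ and $\{b,c,d\}$ to be $1$-sets as well. Taking $A=\{a\},B=\{b\},C=\{c,d\}$, the verification is then immediate: any $X$ with $X\cap(A\cup B\cup C)\in\{A,B,C\}$ is disjoint from one of the three $1$-sets $\{a,b\},\{a,c,d\},\{b,c,d\}$, hence not a $1$-set by Proposition~\ref{prop:minimal_counterexamples}(1). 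Your plan instead anchors $A,B$ to a minimum $1$-set $U$ and sources $C$ from a \emph{separate} $4$-element $2$-set $T$ disjoint from (or only partially meeting) $U$, and this destroys the easy verification.

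Concretely, there are three gaps. First, in the $|U|=2$ case you need a $4$-element $2$-set disjoint from $U$, but this is neither proved nor a consequence of the structural facts available at this point: for instance, on $[8]$ the family of all sets of size at least $4$ containing a fixed element $1$, together with all sets of size at least $5$ not containing $1$, is a saturated, upward-closed intersecting family covering $[8]$ by its $4$-element minimal members, and every one of those $4$-element $2$-sets contains $1$; so if $1\in U$ there is no such $T$, and nothing you have established rules out $1\in U$. Second, the verification that $X$ with $X\cap(A\cup B\cup C)=A$ or $B$ cannot be a $1$-set is only gestured at. Applying Lemma~\ref{lem:monotonicity} with $T$ requires $T\setminus X$ to be boolean, hence $X\cap T\neq\emptyset$, which is not guaranteed; and the claimed dichotomy (``$1$-set smaller than $|U|$'' or ``$1$-set disjoint from $U$'') does not obviously fall out of Lemma~\ref{lem:monotonicity}, which only produces larger sets of value $\geq 1$. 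Third, in the $|U|=3$ case the verification for $X\cap(A\cup B\cup C)=C$ asserts that $X$ is disjoint from $U$, but $A\cup B=\{u_1,u_2\}\subsetneq U$, so $X$ may well contain $u_3$, and no contradiction with Proposition~\ref{prop:minimal_counterexamples}(1) follows. The real content of the lemma --- that a $1$-set of size $2$ or $3$ must be contained inside a $4$-element $2$-set --- is absent from your plan, and I believe it is unavoidable.
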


\begin{proof}
There is a 1-set with 2 or 3 elements by Lemma \ref{lem:kneser}. Assume first that we have such a set that is contained in a 4-element 2-set $T=\{a,b,c,d\}$.
Then $T$ is of type 01 in the sense of Lemma \ref{lem:type}, and hence some 2-element subset, say $\{a,b\}$, of $T$ is a 1-set. Since $\{a\}$ and $\{b\}$ are not 1-sets, it follows that both $\{a,c,d\}$ and $\{b,c,d\}$ are 1-sets. We take $A=\{a\}$, $B=\{b\}$, and $C=\{c,d\}$ and show that they satisfy the required property.
Any set $X$ such that $X\cap (A\cup B\cup C)$ is one of $A,B,C$ is disjoint from   one of
$\{a,b\}$, $\{a,c,d\}$ and $\{b,c,d\}$ and so cannot be 1-set.

We now prove that some 1-set with 2 or 3 elements must be contained in a 4-element 2-set. Assume, for contradiction, that this is not the case. We start by showing that there must exist a minimal 1-set with 2 elements. 
If $\{a,b,c\}$ is a minimal 1-set, then none of its 2-element subsets is a 1-set. So by Lemma \ref{lem:2-element}, $\{a,b\}$ is contained in a 4-element 2-set, say $T=\{a,b,d,e\}$. Then $T\cup \{c\}$ is 2-set of type 01. The set $\{a,b,e\}$ is contained in a 4-element 2-set, so it cannot be a 1-set by our assumption. Hence $\{c,d\}$ is 1-set.
Assume now that the 1-set $\{c,d\}$ is not contained in any 4-element 2-set. By assumption on $f$, each of $c$, $d$ is contained in a 4-element 2-set. Denote these as $T_c=\{c,c_1,c_2,c_3\}$ and $T_d=\{d,d_1,d_2,d_3\}$. Consider the 2-set $T_c\cup \{d\}$. It is of type 01 since $\left\{c,d\right\}$ is a $1$-set. If some proper subset of $T_c$ is a 1-set then we are done. So we can assume that they are all 0-sets, so each of $\{d,c_1\}, \{d,c_2\}, \{d,c_3\}$ is a 1-set.
Similarly, if none of the subsets of $T_d$ is a 1-set, we have have that each of $\{c,d_1\}, \{c,d_2\}, \{c,d_3\}$ is a 1-set. Since 1-sets cannot be disjoint, we have that $\{c,d_1\}$ intersects each of $\{d,c_1\}, \{d,c_2\}, \{d,c_3\}$. But this is impossible because $c\ne d\ne d_1$ and $c_1,c_2,c_3$ are all distinct.
\end{proof}

\begin{lemma}\label{lem:existence}
 $f$ does not exist.
\end{lemma}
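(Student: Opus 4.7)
The plan is to derive a contradiction with the extremal choice of $f$ (minimal arity, then largest smallest $1$-set) by combining Lemma~\ref{lem:ABC} with Proposition~\ref{prop:minimal_counterexamples}(1). Because $f$ has no singleton $1$-sets, the construction inside the proof of Lemma~\ref{lem:ABC} concretely gives $A=\{a\}$, $B=\{b\}$, $C=\{c,d\}$ with $T=\{a,b,c,d\}$ a minimal $4$-element $2$-set, $\{a,b\}$ a $1$-set, $\{c,d\}$ a $0$-set, and $\{a,c,d\},\{b,c,d\}$ both $1$-sets. Saturation of $f$ together with the assumption that every element lies in some $4$-element $2$-set forces $T_f=[n]$, so by Lemma~\ref{lem:boolean_recolouring} every boolean set of $f$ is static; Proposition~\ref{prop:minimal_counterexamples}(1) then forces every $1$-set of $f$ to intersect $\{a,b\}$.

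Next I would read off a lot of structure from the ``disjoint-from-$\{a,b\}$ is not a $1$-set'' principle. Writing $D=[n]\setminus T$, the partition $(T,D,\emptyset)$ rules out $f(D)=2$ (two $2$'s have no unique maximum in $\mathrm{LO}_3$), and Proposition~\ref{prop:minimal_counterexamples}(1) rules out $f(D)=1$, so $f(D)=0$. By the same principle, any subset of $[n]$ of size at most $3$ disjoint from $\{a,b\}$ is a $0$-set; in particular $\{c,e\},\{d,e\},\{e_1,e_2\},\{c,d,e\},\{e_1,e_2,e_3\}$ are all $0$-sets for $e,e_i\in D$, and a standard partition against a triple of such $0$-sets then shows $\{a,b,x\}$ is a $1$-set for every $x\in[n]\setminus\{a,b\}$.

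The core combinatorial step is to force the set $\{d\}\cup(D\setminus\{e\})$ to be a $2$-set. Combining the partitions
\[
\bigl(\{d\}\cup(D\setminus\{e\}),\ \{a,b,c\},\ \{e\}\bigr)\quad\text{and}\quad\bigl(\{d\}\cup(D\setminus\{e\}),\ \{a\},\ \{b,c,e\}\bigr),
\]
one reads off triples of the form $(?,1,0)$ and $(?,0,0)$, which force $?\in\{0,2\}$ and $?>0$ respectively, so $?=2$. This set has exactly $|D|=n-4$ elements, so when $n=7$ it is a $2$-set of size $3$, contradicting the assumption that $f$ has no small $2$-sets and closing the case $n=7$ immediately.

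The main obstacle is $n\geq 8$, where the $2$-set produced above is no longer small. There the plan is either to iterate the previous step---removing further elements of $D$ and using saturation together with the derived constraints to force analogous sets of decreasing size $n-5,n-6,\ldots$ to be $2$-sets until a small one appears---or, more cleanly, to build a minor $g=f^\pi$ of strictly smaller arity by identifying a carefully chosen pair of elements (for instance one of $c,d$ with some $e\in D$), and verify that $g$ still satisfies the full hypothesis of Proposition~\ref{prop:minimal_counterexamples}(2): saturated, no small $2$-sets, no singleton $1$-sets, and every element in a $4$-element $2$-set of $g$, contradicting the minimal-arity choice of $f$. The delicate verification here is the ``no small $2$-sets'' clause, since the $\pi$-preimage of a $3$-element subset of the new domain containing the merged element has exactly $4$ elements; one must rule out that any such $4$-element set is a $2$-set of $f$, which is expected to follow from the structural constraints derived above with an appropriate choice of the identified pair.
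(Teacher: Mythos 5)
Your approach diverges from the paper's at the crucial step. The paper takes the sets $A,B,C$ from Lemma~\ref{lem:ABC} and considers the triple $X\mapsto(f(X\cup A),f(X\cup B),f(X\cup C))$ on $2^{[n]\setminus(A\cup B\cup C)}$; it shows this triple is constant in the three coordinates for every $X$, collapses the result to a boolean function, and invokes the fact that every polymorphism of $\LO_2$ is a projection to manufacture a small $2$-set. This handles all $n\ge 7$ uniformly. Your argument instead tries to pin down $f$ on specific sets directly and produce a small $2$-set or a smaller minor by hand.

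There are two genuine gaps in your proposal. First, in the ``core combinatorial step'' the second partition $\bigl(\{d\}\cup(D\setminus\{e\}),\{a\},\{b,c,e\}\bigr)$ is asserted to yield a triple of the form $(?,0,0)$, which requires $f(\{b,c,e\})=0$. But $\{b,c,e\}$ contains $b$ and so is \emph{not} disjoint from the $1$-set $\{a,b\}$, and neither does its intersection with $A\cup B\cup C$ equal one of $A,B,C$; none of your derived facts (``small sets disjoint from $\{a,b\}$ are $0$-sets'', Lemma~\ref{lem:ABC}) therefore applies, and $\{b,c,e\}$ could perfectly well be a $1$-set. So even the $n=7$ conclusion is not established. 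Second, and more seriously, your treatment of $n\ge 8$ is explicitly a sketch of two alternative strategies, neither of which is carried out. The ``iterate'' route gives no argument that the shrinking $2$-sets ever become small, and the ``take a minor'' route leaves open exactly the delicate point you yourself flag, namely that the $\pi$-preimage of a $3$-element set through the merged element could be a genuine $4$-element $2$-set of $f$, which would defeat the ``no small $2$-sets'' hypothesis on the minor. As written, the proof is incomplete; the paper's device of reducing to a polymorphism of $\LO_2$ (and the fact that these are projections) is what avoids having to do this case analysis by hand.
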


\begin{proof}
    Take sets $A,B,C$ guaranteed by Lemma \ref{lem:ABC}. Denote $U=A\cup B\cup C$. We will look at partitions $(X_a,X_b,X_c)$ of $[n]$
where $A\subseteq X_a$, $B\subseteq X_b$, $C\subseteq X_c$. The sets involved in such partitions cannot be 1-sets. Consider the function $f'$ defined on $2^{[n]\backslash U}$, which on input $X$ outputs the triple $(f(X\cup A),f(X\cup B),f(X\cup C))\in \{0,2\}^3$.

We claim that either this triple is the same for all $X$ or it is constant (i.e. contains three equal entries) for every $X$.
Put otherwise, either this triple does not depend on $X$ or the values in it do not depend on $A,B,C$ (but may depend on $X$). The reader may wish to use Table \ref{fig:table} as an aid. Assume that, for some $X\subseteq 2^{[n]\setminus U}$, the triple $(f(X\cup A),f(X\cup B),f(X\cup C))$ is not constant. Consider any partition $(X,Y,Z)$ of $[n]\setminus U$. Observe that if $f(X\cup A)=2$ then none of $Y\cup B,Y\cup C,Z\cup B,Z\cup C$ can be a 2-set, so they are all 0-sets. If $f(X\cup B)=0$ then neither $Y\cup A$ nor $Z\cup A$ can be a 0-set, so they are 2-sets. Finally, $X\cup C$ cannot be a 2-set, so the triples $(f(X\cup A),f(X\cup B),f(X\cup C))$, $(f(Y\cup A),f(Y\cup B),f(Y\cup C))$, $(f(Z\cup A),f(Z\cup B),f(Z\cup C))$ are all equal. For any subset $X\subseteq [n]\setminus U$, one can consider the partition $(X,\overline{X},\emptyset)$, and therefore the triple
$(f(X\cup A),f(X\cup B),f(X\cup C))$ is the same as $(f(\emptyset\cup A),f(\emptyset\cup B),f(\emptyset\cup C))$, so indeed the triple $(f(X\cup A),f(X\cup B),f(X\cup C))$ is the same for all $X$.

If it is the case that $(f(X\cup A),f(X\cup B),f(X\cup C))$ is not constant for some $X$, then this is true for all $X$. But picking $X$ to be the empty set, we get that one of $A,B,C$ is a 2-set, a contradiction.

So assume now that $(f(X\cup A),f(X\cup B),f(X\cup C))$ is a constant in $\{0,2\}$ for every $X$, so we can identify the value $f'(X)$ with this constant. Now let $f''$ be the function on $\{0,1\}$ obtained from $f'$ by replacing all values 2 with value 1. Then $f''$ is a polymorphism of $\LO_2$, and it is well known that every such polymorphism is a projection. Then there is some singleton $\{x\}$ such that $f''(\{x\})=1$, which means that $\{x\}\cup A$, $\{x\}\cup B$, $\{x\}\cup C$ are all 2-sets for $f$ - a contradiction, since each of $A,B,C$ has at most $2$ distinct elements.
\end{proof}

This concludes the proof of Proposition \ref{prop:minimal_counterexamples}(2).

\begin{table}[]
    \centering
    \begin{tabular}{c||c|c|c}
         & $X$ & $Y$ & $Z$\\
         \hhline{=||=|=|=}
       $A$  & $f\left(X\cup A\right)$ & $f\left(Y\cup A\right)$  & $f\left(Z\cup A\right)$ \\
       \hline
       $B$ & $f\left(X\cup B\right)$ & $f\left(Y\cup B\right)$ & $f\left(Z\cup B\right)$\\
       \hline
       $C$ & $f\left(X\cup C\right)$ & $f\left(Y\cup C\right)$ & $f\left(Z\cup C\right)$
    \end{tabular}
    \caption{The proof of Lemma \ref{lem:existence} shows that either the rows of the table are constant or the columns are constant. It is useful to note that any three entries that are pairwise not in the same row or column (for example, $f\left(X\cup C\right)$, $f\left(Y\cup A\right)$, and $f\left(Z\cup B\right)$)  must form a triple in $LO_3$.}
    \label{fig:table}
\end{table}

\begin{lemma}\label{lem:structure_4_element}
    Theorem \ref{thm:structure} holds when $\left[n\right]$ is the union of $4$-element $2$-sets.
\end{lemma}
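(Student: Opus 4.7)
My plan is to derive the lemma directly from Proposition \ref{prop:minimal_counterexamples}, which does the real combinatorial work. The case $n\le 6$ is immediate: under the hypothesis that $f$ has no small 2-sets, Lemma \ref{lem:small_arity} already gives that $f$ is a recoloured projection with a unique dictating variable $t$, which is exactly what Theorem \ref{thm:structure} asks for in that regime. So I would focus on the case $n\ge 7$.

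The first step is to pick the candidate dictating variable from $f$ itself, rather than from each pure saturation separately. Since $f$ has no small 2-sets and $[n]$ is covered by 4-element 2-sets, Proposition \ref{prop:minimal_counterexamples} applies to $f$: part (2) yields a singleton static 1-set $\{t\}$, and part (1) forces this $t$ to be unique, as any other singleton static 1-set would be disjoint from $\{t\}$. I would then claim that this same $t$ serves as the dictating variable for every pure saturation of $f$.

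Next I would fix an arbitrary pure saturation $g$ of $f$, which exists by Corollary \ref{cor:lift}. Since the 2-sets of $f$ remain 2-sets of $g$, the hypothesis that $[n]$ is covered by 4-element 2-sets is inherited by $g$, and $g$ has no small 2-sets by purity, so Proposition \ref{prop:minimal_counterexamples} also applies to $g$, producing a singleton static 1-set $\{t_g\}$ of $g$. The key observation is that any boolean partition of $g$ is automatically a boolean partition of $f$: saturation only raises boolean values to $2$, so any set whose $g$-value is in $\{0,1\}$ has the same value under $f$. Applied to the boolean partition witnessing that $\{t_g\}$ is static in $g$, this shows $\{t_g\}$ is also a singleton static 1-set of $f$, and uniqueness above then forces $t_g=t$.

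Finally I would show $g$ is a recoloured projection with dictating variable $t$. If $S$ is a static 1-set of $g$ with $t\notin S$, then $S$ and $\{t\}$ are disjoint static 1-sets of $g$, contradicting Proposition \ref{prop:minimal_counterexamples}(1) applied to $g$. If $S$ is a static 0-set of $g$ with $t\in S$, then any boolean partition $(S,Y,Z)$ of $g$ must have a 1-set among $Y,Z$ (since $(0,g(Y),g(Z))\in \mathrm{LO}_3$), and this 1-set is disjoint from $\{t\}$, reducing to the previous case. Uniqueness of $t$ in the conclusion of Theorem \ref{thm:structure} is immediate from $g(\{t\})=1$. I expect the main subtlety, though a mild one, to be the $f$-to-$g$ transfer: one has to be careful that the singleton static 1-set identified inside each pure saturation can be pulled back to $f$, so that Proposition \ref{prop:minimal_counterexamples}(1) applied to $f$ itself — rather than separately to each pure saturation — uniformly selects the same $t$.
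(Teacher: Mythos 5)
Your proof is correct and essentially the same as the paper's: both hinge on Proposition \ref{prop:minimal_counterexamples} applied under the $4$-element covering hypothesis, plus the observation that a boolean partition of a pure saturation $g$ is a boolean partition of $f$ (saturation only turns boolean values into $2$), so that static boolean sets of $g$ are static boolean sets of $f$ with the same value. The paper's version establishes that $f$ itself is a recoloured projection with dictating variable $x$ and then simply asserts that every pure saturation inherits the structure; your version makes the transfer explicit by also invoking the proposition on $g$ itself (noting $g$ again has no small $2$-sets and $[n]$ is still covered by $4$-element $2$-sets of $g$), identifying $g$'s singleton static $1$-set and pulling it back to $f$ to conclude $t_g=t$. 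The net content is the same, though your writeup fills in the step the paper leaves implicit; also note the $n\le 6$ case you handle separately is already excluded by the paper's running assumption "From now on, assume the arity $n$ of $f$ is at least $7$."
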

\begin{proof}
    Let $f$ be a polymorphism satisfying the conditions in the statement and suppose it has no small $2$-set. By Proposition \ref{prop:minimal_counterexamples}(2), $f$ must have a singleton static $1$-set $\left\{x\right\}$. It follows from Proposition \ref{prop:minimal_counterexamples}(1) that every boolean static set not containing $x$ is a 0-set. Suppose by contradiction that there is some static $0$-set $X$ such that $x\in X$. Then there must be a boolean partition $\left(X,Y,Z\right)$. But then one of the sets $Y,Z$ must be a static $1$-set, which means that we have two disjoint $1$-sets that are not $0$-recolourable - a contradiction to Proposition \ref{prop:minimal_counterexamples}(1). Hence, any pure saturation of a polymorphism $f$ as in the statement of Theorem \ref{thm:structure} has the predicted structure.
\end{proof}

For the next case of Theorem \ref{thm:structure}, we use Lemma \ref{lem:structure_4_element} as a basis for an induction argument.

\begin{lemma}\label{lem:induction_step}
    Theorem \ref{thm:structure} holds for saturated polymorphisms.
\end{lemma}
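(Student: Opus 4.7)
The plan is induction on the arity $n \ge 7$. The base case handles the situation when $[n]$ is the union of 4-element 2-sets of $f$, in which case Lemma \ref{lem:structure_4_element} applies directly. This covers $n = 7$ automatically: every 4-element subset $X \subseteq [7]$ has $|\overline{X}|=3$, so $\overline{X}$ cannot be a 2-set (since $f$ has no small 2-sets), and hence by saturation $X$ itself is a 2-set; moreover, 4-element 2-sets are automatically minimal (as their proper subsets are too small to be 2-sets), so every variable lies in a minimal 2-set of size 4.

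For the inductive step, assume $n \ge 8$, $f$ is saturated with no small 2-sets, and $[n]$ is not the union of 4-element 2-sets of $f$. Pick $i \in [n]$ lying in no 4-element 2-set of $f$, and consider the minor $f' = f^\pi$ where $\pi:[n] \to [n-1]$ identifies $i$ with an arbitrary $j \ne i$. Any 2-set $T$ of $f'$ corresponds to a 2-set $\pi^{-1}(T)$ of $f$ whose size is $|T|$ if $j \notin T$ and $|T|+1$ if $j \in T$; because no 4-element 2-set of $f$ contains $i$, neither scenario can force $|T| \le 3$, so $f'$ inherits the property of having no small 2-sets. Apply Corollary \ref{cor:lift} to obtain a pure saturation $g'$ of $f'$, then the induction hypothesis (combined with the fact that $f'$ has arity $n-1 \ge 7$) to obtain a dictating variable $t' \in [n-1]$ for $g'$. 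When $t' \ne \pi(i) = \pi(j)$, the preimage $\pi^{-1}(t')$ is a singleton that serves as the candidate dictating variable $t \in [n]$ for $f$; otherwise, the construction must be repeated with different choices of $j$, and the candidate sets intersected to isolate $t$ among $\{i,j\}$.

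Finally, we would verify that the selected $t$ truly is the dictating variable of $f$: every static boolean set $S$ of $f$ must satisfy $f(S)=[t \in S]$. The idea is that the saturation structure of $f$ descends to $f'$ compatibly enough that static boolean sets of $f$ correspond (via $\pi$) to static boolean sets of $g'$, whose values are forced by $t'$, so that the value of $f(S)$ is determined via the minor correspondence. The main obstacle I anticipate is precisely this descent: proving that the dictating variable $t'$ of $g'$ lifts to a genuine dictating variable of $f$ across all static boolean sets, particularly in the degenerate case $t' = \pi(i) = \pi(j)$ where the preimage has two elements and extra care (likely a second minor argument in the spirit of Lemma \ref{lem:saturation_commutes_sometimes}, or a careful use of the fact that $i \notin T_f$ or that all minimal 2-sets through $i$ are large) is needed to break the ambiguity and ensure uniqueness of $t$.
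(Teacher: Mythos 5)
Your overall strategy — induction plus identification minors — is the right one, and your base-case analysis ($n=7$ forces every $4$-element set to be a $2$-set, so Lemma \ref{lem:structure_4_element} applies) is correct. The argument that $f'$ inherits the no-small-$2$-set property is also correct. But the core of the proof is left as an acknowledged gap, and the set-up you choose makes that gap harder, not easier, to close.

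First, a structural simplification you miss: an identification minor of a saturated polymorphism is automatically saturated (surjective $\pi$ pulls back partitions to partitions, and upwards-closure transfers). Since $f'$ also has no small $2$-sets, $f'$ is already its own pure saturation, so invoking Corollary \ref{cor:lift} is unnecessary; more importantly, you should be aware of this because it lets you apply the inductive hypothesis to $f'$ directly rather than to some auxiliary $g'$. Second, and more seriously, your plan to use a single identification (or a couple) and then ``intersect the candidate sets to isolate $t$'' does not obviously succeed, and you explicitly flag the lifting step as the main obstacle without resolving it. The paper's proof resolves it by considering the \emph{entire} family of minors $g_i = f^{\pi_i}$ for all $i\in[n-1]$, obtaining the full vector $(t_1,\ldots,t_{n-1})$ of dictating variables, and exploiting the constraint that the values $f(\{t_i,j\})$ for $j\in T_f$ are computable by many different $g_k$ simultaneously (the paper's Equation (1)). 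That overdetermination forces either $t_i=i$ for all $i$ (so $t=n$) or all $t_i$ equal (so $t=t_i$). Then the verification that $f(S_0)=[t\in S_0]$ for every static boolean set requires a further careful case split on whether $\{n\}$ is one of the parts of the witnessing boolean partition. None of this is in your proposal. In short: the skeleton is right, but the crux — pinning down $t$ from the minors and proving it dictates $f$ — is exactly what you leave open, and the single-minor approach you sketch lacks the overdetermination that makes the paper's argument work. The paper's choice of induction variable (the number of variables \emph{not} covered by a $4$-element minimal $2$-set, rather than arity) is also cleaner, as it makes the base case coincide exactly with the hypothesis of Lemma \ref{lem:structure_4_element}, though induction on arity could likely be made to work with more bookkeeping.
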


\begin{proof}
    We proceed by induction on the number of variables not contained in a $4$-element minimal $2$-set. If there are no such variables, the statement holds by Lemma \ref{lem:structure_4_element}.

    Suppose that, for some $m\geq 0$, the statement holds for polymorphisms with at most $m$ variables not contained in a $4$-element minimal $2$-set. Consider a saturated polymorphism $f$ of arity $n$ and suppose it has exactly $m+1$ variables not contained in a $4$-element minimal $2$-set. If $f$ has a small $2$-set then we are done. Otherwise, assume without loss of generality that $n$ is a variable that is not contained in any $4$-element minimal $2$-set.
    %; in particular, if $T_f\neq\left[n\right]$ we may assume that $n\notin T_f$. 
    For each $i\in\left[n-1\right]$, define $\pi_i:\left[n\right]\rightarrow\left[n-1\right]$ as $\pi_i\left(n\right)=i$ and $\pi_i\left(j\right)=j$ for all $j\neq n$, and set $g_i=f^{\pi_i}$. Note that each $g_i$ is saturated and has at most $m$ variables not contained in any $4$-element minimal $2$-set. The former property easily follows from the assumption that $f$ is saturated. The latter property follows from the fact that every 4-element 2-set $T$ of $f$ is (a subset of $[n-1]$ and) a 2-set of $g_i$. Indeed, the set $\pi_i^{-1}(T)$ contains $T$ and so is a 2-set of $f$, and so $g_i(T)=f(\pi_i^{-1}(T))=2$.  By the induction hypothesis there exist, for each $i\in\left[n-1\right]$ some element $t_i\in [n-1]$ such that, for every static boolean set $S$ of $g_i$, we have $g_i\left(S\right)=1$ if, and only if, $t_i\in S$. Observe that, for each $i\in\left[n-1\right]$, $j\in T_f$ and $k\in\left[n-1\right]\backslash \left\{t_i,j\right\}$ it holds that
    \begin{equation}\label{eq:induction}
        f(\{t_i,j\})=g_k(\{t_i,j\})=[t_k\in \{t_i,j\}],
    \end{equation}
    where $\left[\star\right]$ is the Iverson bracket.

    We now assume that $t_i\neq i$ for some $i$ and show that in this case $t_k$ is the same for all $k$. For this $i$ and any $j\in T_f\backslash \{i\}$, we have  $f(\{t_i,j\})=g_i(\{t_i,j\})=[t_i\in \{t_i,j\}]=1$. Comparing this with Equation (\ref{eq:induction}), we get that, $t_k\in \{t_i,j\}$, for all $k\in [n-1]\backslash \{t_i,j\}$. If $t_k\ne t_i$ for some $k\ne t_i$, then we must have $t_k=j$ for any choice of $j\in T_f\backslash \{i,k,n\}$. This is impossible, since $T_f$ has at least 7 elements by saturation, so $t_k=t_i$ whenever $k\ne t_i$. If $t_k\ne t_i$ for $k=t_i$ then we can repeat the above reasoning starting with $k$ instead of $i$ and derive that all elements $t_s$ except possibly for one are equal to $t_k$. But we already have this conclusion for $t_i$, which contradicts the assumption $t_k\ne t_i$. We therefore conclude that $t_i$ is independent of $i$. 
    
    We proved that either $t_i=i$ for all $i\in [n-1]$ or else all $t_i$ are the same. We now set $t$ to be $n$ in the former case and $t_i$ in the latter case, and show that $f$ has the required structure with this choice of $t$. Let $S_0$ be a static boolean set for $f$ and let $(S_0,S_1,S_2)$ be a boolean partition of $[n]$ witnessing this. Assume first that none of the sets $S_0,S_1,S_2$ is $\{n\}$. Then, for any $i\in [n-1]$ from the same part of the partition $(S_0,S_1,S_2)$ as $n$, we have that $(\pi_i(S_0),\pi_i(S_1),\pi_i(S_2))$ is a boolean partition of $[n-1]$ for $g_i$, and hence $f(S_j)=g_i(\pi_i(S_j))=[t_i\in \pi_i(S_j)]=[t\in S_j]$ for $j=0,1,2$ (note that we have $[t_i\in \pi_i(S_j)]=[t\in S_j]$ for both settings of $t$). In particular, $f(S_0)=[t\in S_0]$, as required. If $S_1=\{n\}$, then $S_2\ne\emptyset$, for otherwise $S_1\cup S_2$ would be boolean and $S_0$ a 2-set by saturation. Then, for any $a\in S_2$, the partition $(S_0,S_1\cup \{a\},S_2\backslash \{a\})$ is boolean and none of its parts is $\{n\}$, which we already showed to imply $f(S_0)=[t\in S_0]$. The reasoning for the case $S_2=\{n\}$ is the same.
    It remains to consider the case when $S_0=\{n\}$. The above reasoning (with an appropriate renaming of the sets $S_0,S_1,S_2$) shows that we have $f(S_j)=[t\in S_j]$ for $j=1,2$. Thus $f(\{n\})=[t\in \{n\}]$, as required. 
\end{proof}

We finally show that Lemma \ref{lem:small_arity} and Lemma \ref{lem:induction_step} imply Theorem \ref{thm:structure}.

\begin{proof}[Proof of Theorem \ref{thm:structure}.]
    Lemma \ref{lem:small_arity} takes care of polymorphisms of arity $n\leq 6$. For polymorphisms with arity $n\geq 7$, Lemma \ref{lem:induction_step} takes care of the saturated case. Suppose $f$ has arity $n\geq 7$ and all its $2$-sets have $4$ or more elements. We show that if $g$ and $h$ are pure saturations of $f$, then $g$ and $h$ have the same dictating variable. Indeed, suppose $g,h$ have dictating variables $t,t'$ respectively with $t\neq t'$. Let $T_g,T_h$ be the unions of minimal $2$-sets of $g$ and $h$ respectively. Pick $i\in T_g,j\in T_h$ with $t'\notin\left\{i,j\right\}$ and consider $S=\left\{t,i,j\right\}$. For both $g$ and $h$, the set $S$ is boolean (because it has at most three elements) and static by Lemma \ref{lem:boolean_recolouring}. Moreover, we have $f\left(S\right)=g\left(S\right)=h\left(S\right)$, since $S$ is not affected by any saturation path from $f$ to $g$ or $f$ to $h$. But by Lemma \ref{lem:induction_step} we have that $g\left(S\right)=\left[t\in S\right]=1$ and $h\left(S\right)=\left[t'\in S\right]=0$ - a contradiction.
\end{proof}

    \bibliographystyle{plain}
\bibliography{main.bbl}
\end{document}